\newcommand\ind{\protect\mathpalette{\protect\independenT}{\perp}}
\def\independenT#1#2{\mathrel{\rlap{$#1#2$}\mkern2mu{#1#2}}}
\DeclarePairedDelimiter\ceil{\lceil}{\rceil}
\DeclarePairedDelimiter\floor{\lfloor}{\rfloor}
\newtheorem{assumption}{Assumption}
\begin{document}
	
	\title{A causal fused lasso for interpretable heterogeneous treatment effects estimation}
	
	\author{\name Oscar Hernan Madrid Padilla \email oscar.madrid@stat.ucla.edu\\
		\addr Department of Statistics and Data Sciences\\
		Univeristy of California, Los Angeles\\
		Los Angeles, CA 90095.
		\AND
		\name Yanzhen Chen\thanks{These authors contributed equally} \email imyanzhen@ust.hk \\
		\addr Department of ISOM\\
		Hong Kong University of Science and Technology\\
		Clear Water Bay, Hong Kong.\\
		\AND
		\name Carlos Misael Madrid Padilla \thanks{These authors contributed equally}  \email carlosmisael@wustl.edu \\
		\addr Department of Statistics and Data Science\\
		Washington University in St Louis\\
		St Louis, MO 63130. 
		\AND
		\name Gabriel Ruiz\email ruizg@ucla.edu \\
		\addr Department of Statistics\\
		Univeristy of California, Los Angeles\\
		Los Angeles, CA 90095.}
	
	\editor{}
	
	\maketitle
	
	\begin{abstract}
		We propose a novel method for estimating heterogeneous treatment effects based on the fused lasso. By first ordering samples based on the propensity or prognostic score, we match units from the treatment and control groups. We then run the fused lasso to obtain piecewise constant treatment effects with respect to the ordering defined by the score. 
		Similar to the existing methods based on discretizing the score, our methods yield interpretable subgroup effects. However, existing methods fixed the subgroup a priori, but our causal fused lasso forms data-adaptive subgroups. 
		We show that the estimator consistently estimates the treatment effects conditional on the score under very general conditions on the covariates and treatment.  We demonstrate the performance of our procedure using extensive experiments that show that it can be interpretable and competitive with state-of-the-art methods. 
	\end{abstract}
	
	\begin{keywords}
		Causality, propensity score, prognostic score, total variation 
	\end{keywords}

	\section{Introduction}
	
	Causal inference focuses on the causal relationships between covariates and their outcomes, yet is deeply rooted in and advances the way we understand the world. Applications of causal inference include   medical tests and personalized medicine \citep{zhao2017selective,imbens2015causal}, and economic and public policy evaluations \citep{angrist2008mostly, ding2016randomization,shalit2017estimating}. Our paper proposes a powerful tool for estimating heterogeneous treatment effects under general  assumptions.

	
	We adopt the potential outcomes framework \citep{neyman1923applications,rubin1974estimating}, where each subject has an observed outcome variable caused by a treatment indicator and a set of other predictors. The main challenge in estimating heterogeneous treatment effects stems from the fact that  for any given subject we  only observe the outcome under treatment or control but not both. This is also summarized as a ``missing data" problem \citep{ding2018causal}. To address this problem, some pioneering works relied on matching via pre-specified groups  \citep{assmann2000subgroup,pocock2002subgroup,cook2004subgroup}. However, these approaches are sensitive to subject grouping which are largely selected using domain expertise. Our approach, in contrast, does not rely on pre-specified subgroups.  Rather, it simultaneously identifies the subgroups and their associated treatment effects.  While our approach is similar in spirit to \cite{abadie2018endogenous}, it is more flexible allowing for discontinuous treatment effect functions.


	Our estimator integrates the merits of similarity scores with the fused lasso method using a simple two-step approach. First, we construct a statistic for each unit and sort observations according to it. The intuition of this step is to summarize the similarities among units using the statistics constructed. In the second step, we perform  matching of units of  the treatment and control groups based on the statistics generated in the first step. The differences in observed outcomes between the matched pairs  guide the fused lasso method, which is a one-dimensional nonparametric regression method as introduced in \cite{mammen1997locally} and \cite{tibshirani2005sparsity}, to estimate the treatment effects for different units.  A key difference between our causal fused lasso approach and the usual fused lasso is that in the latter, there is a given input signal $y$ and an ordering associated to it. In contrast, in causal inference there are no measurements available associated with the individual effects, which is the reason behind our two-step approach.
	
	To be more specific, for the first step of our proposed method, we capture similarities among units using widely adopted statistics such as the propensity score  (\cite{rosenbaum1983central,rosenbaum1984reducing}),  and the prognostic score (see e.g  \cite{hansen2008prognostic,abadie2018endogenous}). For the former,  we   fit a parametric  model such as  logistic regression. For the prognostic score method,  we  regress the outcome on the covariates using the control group data only. 
	
	Despite being a simple, our method enjoys the following properties:
	
	\begin{enumerate}
		
		\item  From a theoretical perspective, we establish that our method consistently estimates treatment effects under minimal assumptions. These assumptions include a general random design for covariates and bounded variation of the conditional mean of the outcome, conditional on both the subgroup and treatment assignment.
		
		\item  Our estimator is computationally efficient, with overall complexity on the order of 
		$O(nd + n^2)$, where $n$ is the number of units and  $d$ is the number of covariates. The $nd$
		term corresponds to fitting a linear regression model and can be considered linear in 
		$n$ when 
		$d$ is small. The $n^2$ term, which may be dominant, arises from a matching step that requires computing pairwise scores between all units. This is comparable to the complexity of a 
		$K$-nearest neighbors algorithm. However, this computational cost can be significantly reduced through parallel computing.
		
		\item Unlike many nonparametric methods, our estimator offers interpretability. Moreover, experimental results demonstrate that it either outperforms or matches state-of-the-art approaches in estimating heterogeneous treatment effects, as measured by mean squared error for estimating the conditional treatment effects.
	\end{enumerate}

	\subsection{Previous work}
	
	
	A substantial body of statistical research has focused on estimating heterogeneous treatment effects, with many studies building on the seminal Bayesian additive regression tree (BART) framework introduced by \cite{chipman2010bart}. The core idea behind BART-based methods is to impose the prior commonly used in BART  on both the regression function of the control group and that of the treatment group, as exemplified in \cite{hill2011bayesian, green2012modeling, hill2013assessing}. More recently, \cite{hahn2020bayesian} introduced a BART-based approach designed to handle small effect sizes and confounding by observables. Beyond BART, other Bayesian methods include the linear model prior proposed by \cite{heckman2014treatment} and the Bayesian nonparametric framework developed in \cite{taddy2016nonparametric}.
	

	In a separate line of research, tree-based regression methods have been developed to estimate heterogeneous treatment effects, with regression trees emerging as a key approach. This direction was initiated by \cite{su2009subgroup}, who proposed an estimator based on the widely used CART method \cite{breimanclassification}. More recently, \cite{athey2016recursive} and  \cite{wager2018estimation} extended this work by employing the random forest framework from \cite{breiman2001random} to construct estimators. A significant contribution of \cite{wager2018estimation} was the introduction of an inferential framework for treatment effect estimates using the infinitesimal jackknife. Building on this foundation, \cite{athey2019generalized} developed a generalized random forest approach, improving robustness in practical applications compared to the estimator from \cite{wager2018estimation}.

Beyond regression tree-based methods, various machine learning approaches have been explored for estimating heterogeneous treatment effects. \cite{crump2008nonparametric} introduced nonparametric tests to detect treatment effect heterogeneity. \cite{imai2013estimating} proposed a method that integrates hinge loss \citep{wahba2002soft} with lasso regularization \citep{tibshirani2005sparsity} to enhance estimation accuracy. \cite{tian2014simple} developed an approach capable of handling a high-dimensional feature space while capturing interactions between treatment and covariates.

Other contributions include \cite{weisberg2015post}, who designed a method based on variable selection, and \cite{taddy2016nonparametric}, who developed Bayesian nonparametric techniques applicable to both linear regression and tree-based models. \cite{syrgkanis2019machine} introduced a flexible framework that can incorporate any machine learning method and, under valid instrumental variables, account for unobserved confounders. From a theoretical perspective, \cite{gao2020minimax} analyzed the fundamental limits of estimating heterogeneous treatment effects under H\"{o}lder smoothness conditions.

A different line of work focuses on meta-learning strategies in causal inference. \cite{kunzel2019metalearners} proposed a meta-learner framework that can leverage various estimators and is particularly effective when treatment group sizes are highly imbalanced. We note that the term meta-learning is sometimes used differently in the broader machine learning literature. For example, it can refer to transfer learning settings where metadata across tasks is used to inform model training \cite{vanschoren2019meta}. Our use of meta-learning follows the convention in causal inference, where meta-algorithms aggregate supervised learners to estimate treatment effects.

For comprehensive reviews of testing procedures and estimation methods for individual treatment effects, see \cite{willke2012concepts} and \cite{caron2022estimating}.

In this paper, we propose a two-step method for estimating heterogeneous treatment effects. First, we construct a one-dimensional score that serves as the basis for ordering observations. We then apply the one-dimensional fused lasso, following \cite{tibshirani2005sparsity}, to identify subgroups. By design, our approach directly estimates subgroups of individuals with the same treatment effect without requiring heuristic arguments or pre-specifying subgroups, as in \cite{abadie2018endogenous}. Moreover, our method is highly interpretable—arguably even simpler than CART—since it relies solely on a single learned covariate.

Finally, we highlight related  work regarding the fused lasso, the nonparametric  tool that we use in this paper.  Also known as total variation denoising, the fused lasso  first appeared in the machine learning literature \citep{rudin1992nonlinear}, and then in the statistical literature \citep{mammen1997locally}. A discretized  version of total variation regularization was introduced by \cite{tibshirani2005sparsity}.  Since then, multiple  authors  have used  the fused lasso for nonparametric regression in different  frameworks. \cite{tibshirani2014adaptive}  proved  that the fused lasso can attain minimax rates  for  estimation of a one-dimensional function that has bounded variation. 
\citet{guntuboyina2020adaptive}  provided minimax results  for  the fused lasso when estimating  piecewise constant functions. \cite{hutter2016optimal,chatterjee2019new}  studied  the convergence rates of the fused lasso  for denoising  of grid  graphs. \cite{wang2016trend,padilla2017dfs}   considered    extensions  of the fused lasso  to general graphs  structures.  \cite{padilla2020adaptive}  proposed  the fused lasso  for multivariate nonparametric regression and showed adaptivity results for different levels of the regression function. \cite{ortelli2019synthesis} studied further connections between the lasso and fused lasso. 

\subsection{Notation}

For two  random variables  $X$ and $Y$, we use the notation  $X \ind Y$ to indicate that they are independent.  We  write $a_n \,=\, O(b_n)$ if  there exist  constants  $N$ and $C$ such that  $n\geq N $ implies that  $a_n \leq  C b_n$, for sequences  $\{a_n\}, \{b_n\} \subset    \mathbb{R} $. In addition, when $a_n  = O(b_n)$ and  $b_n = O(a_n)$  we use the notation $a_n \asymp b_n$  and sometimes the notation $a_n = \Theta(b_n)$. For a sequence of random variables $\{X_n\}$, we denote  $X_n \,=\,O_{\mathbb{P}}(a_n)$ if for every $\epsilon>0$ there exist positive $N$ and  $C$ such that $ \mathbb{P}( \vert  X_n\vert > C a_n  ) < \epsilon$ for all $n\geq N$.

Finally,   for a random vector  $X \in   \mathbb{R}^d$  we say that $X$ is sub-Gaussian($C$)  for  $ C>0$    if 
\[
\|X\|_{ \psi_2 } \,:=\, \underset{v \in   \mathbb{R}^d \,:\,\|v\|= 1}{\sup}  \|v^{\top} X\|_{ \psi_2 }   \,<\,  C, 
\]
where for a random  variable  $u$ we have
\[
\|u\|_{\psi_2}	:=  \underset{ k\geq 1 }{\sup}\,\,  k^{-1/2} \left\{ E\left(  |u|^k  \right)^{1/k} \right\}.
\]

\subsection{Outline}

The rest of the paper  proceeds as follows. Section \ref{sec:methodology}   describes the mathematical setup of the paper  and  presents  the proposed  class of  estimators. Section \ref{sec:theory}  then   develops  theory  for the corresponding estimators  based on propensity  and prognostic scores. Section \ref{sec:experiments}  provides  extensive comparisons  with  state-of-the-art   methods in the literature on  heterogeneous treatment effects estimation. All  the proofs of the theoretical results  can be found in the Supplementary Material.

\section{Methodology}
\label{sec:methodology}

In this section, we introduce the main methods of the paper. We begin in Section 2.1 with a predecessor estimator that serves as both a foundation and motivation for our approach. In Section \ref{sec:prognostic}, we develop a prognostic score-based estimator suited for completely randomized experiments, followed by a propensity score-based method in Section \ref{sec:propensity}, designed for observational studies.


\subsection{A predecessor estimator}
\label{sec::predecesor}

Consider independent draws $\{ Z_i, X_i, Y_i(1), Y_i(0)   \}_{i=1}^n $, where $Z_i \in \{0,1\}$ is a binary treatment indicator, $X_i \in \{1,\ldots, K\}$ is a discrete and ordinal covariate, and $Y_i \in \mathbb{R}$ is an outcome. Under the Stable Unit Treatment Value Assumption (SUTVA) (see, e.g., \citet{imbens2015causal}), the observed outcome can be expressed as:
\begin{equation}
	\label{eqn:outcome}
	Y_i = Z_iY_i(1) + (1-Z_i) Y_i(0).
\end{equation} 
Under the unconfoundedness assumption 
$Z_i \ind \{  Y_i(1), Y_i(0)\} \mid X_i $ and overlap (see \citet{imbens2004nonparametric}), we can write the subgroup causal effect as
\begin{eqnarray*}
	\tau_{[k]} &=& E\{  Y_i(1) - Y_i(0) \mid X_i = k \}  \\
	&=& E\{  Y_i(1) \mid Z_i =1, X_i = k \}  - E\{  Y_i(0) \mid Z_i = 0, X_i = k \} \\
	&=& E( Y_i \mid Z_i =1, X_i = k )  - E(   Y_i \mid Z_i = 0, X_i = k ) , 
\end{eqnarray*}
which can be identified by the joint distribution of the observed data $\{(  Z_i,X_i,Y_i ) \}_{i=1}^n .$

We can estimate $\tau_{[k]} $ by the sample moments:
\begin{equation}
	\label{eqn:first_estimator}
	\hat{\tau}_{[k]} \,=\,   \frac{1}{n_{[k]1}}\sum_{Z_i=1,X_i=k} Y_i \,-\, \frac{1}{n_{[k]0}}\sum_{Z_i=0,X_i=k} Y_i = \bar{Y}_{[k]1} - \bar{Y}_{[k]0},
\end{equation}
where $n_{[k]z}$ is the sample size of units with covariate value $k$ under the treatment arm $z$.

However, the estimates  $\hat{\tau}_{[k]}$  are well-behaved only when the sample sizes  $n_{[k]z}$   are sufficiently large. When sample sizes are small, particularly as 
$K$ increases, many of the estimates 
$\hat{\tau}_{[k]}$  can become highly variable, resulting in noisy approximations of the true parameters.

When we expect that many subgroup causal effects are small or even zero,  it is reasonable to shrink some $\hat{\tau}_{[k]}$'s to zero, similar to the idea of the lasso estimator \citep{tibshirani1996regression}. Moreover, we may also expect that some subgroup causal effects are close or even identical, then it is reasonable to shrink some $\hat{\tau}_{[k]}$'s to the same value, similar to the idea of the fused lasso \citep{tibshirani2005sparsity}. 
Motivated by these considerations,  define
\begin{equation}
	\label{eqn:predecesor}
	\displaystyle     \tilde{\tau} \,=\,   \underset{ b \in  \mathbb{R}^K }{\arg \min}   \,\,\left\{\frac{1}{2}\sum_{ k=1  }^{K}   (b_k     -   \hat{\tau}_{[k]} )^2   +  \lambda\sum_{ k=1  }^{K-1} \vert  b_k - b_{k+1}\vert   \right\},
\end{equation}
for some  tuning parameter   $\lambda>0$. 
As in \cite{tibshirani2005sparsity} and \cite{tibshirani2014adaptive}, the second term in (\ref{eqn:predecesor}) is the fused lasso penalty  to enforce a piecewise constant structure of the estimates. We use $\ell_1$ regularization instead of $\sum_{ k=1  }^{K-1} \vert  b_k - b_{k+1}\vert^2 $   because the latter would result in linear estimator that is not locally adaptive, see \cite{donoho1998minimax}.

Notice that the term $\sum_{k=1}^K \vert b_k -b_{k+1}\vert$   applies the same penalty to each difference  $ b_k- b_{k+1}$.  At first glance, this might suggest that the categories of 
$X$ must be equally spaced. Specifically, if categories 1 and 2 are closer than categories 2 and 3 in a given application, one might wonder whether $\vert b_1-b_2\vert$ 
should be penalized more than $\vert b_2 -b_3\vert $. The answer is no—the fused lasso penalty is adaptive in the sense that it can adjust to the true signal’s structure, even when the locations of the jumps are unknown or unevenly spaced. For further details, see \cite{tibshirani2014adaptive} and \cite{guntuboyina2020adaptive}.

While $\tilde{\tau}$ seems appealing, it requires   that the covariates  $X_i$ are univariate and categorical. Also, it requires  that there is an order of the covariates under which treatment effects are piecewise constant. Both of these assumptions are usually  not met in practice, as typically  $X_i$  is a vector that can have continuous random variables. The next section proposes a general  class of estimators to handle  such situations.

\subsection{Prognostic-based  estimator}
\label{sec:prognostic}

In this subsection, we focus on completely randomized experiments where $Z_i \ind \{ Y_i(1), Y_i(0), X_i \}$, with $X_i \in  \mathbb{R}^d$.   representing the covariate vector. We define 
$ E\{Y(0)|X\}$ as the prognostic score \citep{hansen2008prognostic}. Furthermore, let  $g(X) = X^{\top }\theta^* $ 
be the best linear approximation of the prognostic score, where
\begin{equation}
	\label{prognostic_score2} \displaystyle  \theta^*  \,=\,    \underset{ \theta \in   \mathbb{R}^d }{\arg \min }\,  \, E\left\{  \left(Y - X^\top \theta\right) ^2   |Z=0 \right\}. 
\end{equation}
\cite{abadie2018endogenous}  assumed a piecewise structure for treatment effects when seen as a function of the approximate prognostic score. However, their approach relied on a fixed number (typically five) of subgroups, determined by discretizing an estimator of the approximate prognostic score. Inspired by \cite{abadie2018endogenous}, we  assume that the treatment effects, as a function of the prognostic score, are piecewise constant. However, rather than pre-specifying the number of subgroups, we assume that $\tau(s) =  E\{ Y(1) - Y(0)\mid g(X) = s\}$  
is either piecewise constant with an unknown number of segments or has bounded variation. See Section \ref{sec:theory} for the precise definition. Let  $\sigma   \,:\,  \{ 1,\ldots,n\}\rightarrow    \{ 1,\ldots,n\}$ be the permutation such that

\begin{equation*}
	\label{eqn:original_sigma}g\{X_{  \sigma(1)}\}   \leq  \cdots\leq \cdots \leq  g\{X_{  \sigma(n) }\}.
\end{equation*}
If $g(s)$ is known and the individual causal effects $Y_i(1) - Y_i(0)$ are known, a natural  estimator  for $\tau_i^* = \tau\{g(X_i)\}$,
for  $i = 1,\ldots,n$,  would be the solution to 
\begin{equation}
	\label{eqn:fused_lasso1}
	\displaystyle  \underset{b \in   \mathbb{R}^n}{\mathrm{minimize} }   \left[    \frac{1}{2}\sum_{i=1}^{n}   \left\{  Y_i(1) - Y_i(0)    -   b_i\right\}^2     +    \lambda \sum_{ i=1  }^{n-1}   \left\vert  b_{\sigma(i)  }  -  b_{\sigma(i+1)  } \right\vert   \right],	
\end{equation}
for  a tuning parameter  $\lambda >0$. This is similar to the standard fused lasso for  one-dimensional nonparametric regression \citep{tibshirani2005sparsity}. The first term   in (\ref{eqn:fused_lasso1})  provides a  measure of fit to the data, and the second term penalizes the total variation to enforce  a piecewise constant structure of the estimated treatment effects. The goal is to adaptively estimate subgroups where the treatment effect, conditional on the prognostic score, remains constant. This approach is conceptually similar to \cite{morucci2023matched}, where the authors constructed a neighborhood for each unit based on a score derived from covariates and then estimated both 
$E\{ Y(1)\mid g(X)\}$ and $E\{ Y(0)\mid g(X)\}$  using a nearest-neighbor-type estimator. However, a key distinction is that the fused lasso \cite{tibshirani2014adaptive} is well known for its ability to adapt to discontinuities in the regression function, whereas the method in \cite{morucci2023matched} relies on a Lipschitz continuity condition.  

	In practice, neither $g(s)$ or the individual treatment effects $Y_i(1)-Y_i(0)$ are known. We first estimate the   prognostic  scores  as  $\hat{g}(X_i) =  X_i^{\top} \hat{\theta}   $  where   
	\begin{equation}
		\label{eqn:theta_hat}
		\displaystyle  \hat{\theta}   \,=\,    \underset{ \theta \in   \mathbb{R}^d }{\arg \min }\,  \,  \left\{\frac{1}{m} \sum_{i=1}^m  \left( Y_i^{\prime} -  X_i^{\prime \top} \theta  \right)^2 \right\}, 
	\end{equation}
	where  $\{ (X_i^{\prime},Y_i^{\prime}) \}_{i=1}^{m}$ are independent copies of 
	$(X,Y)$ conditional on $Z=0$  Additionally, to establish our theoretical results, we require that $\{ (X_i^{\prime},Y_i^{\prime} ) \}_{i=1}^{m}$
	be independent of $\{ (Z_i, X_i, Y_i ) \}_{i=1}^n $. This condition can be satisfied through sample splitting, where the data is evenly divided into two subsets of equal size.  Hence, $\hat{\theta}$  is the estimated vector of coefficients when regressing the outcome  variable on the  covariates  conditioning on the  treatment assignment  being the control group.  Based on the estimated prognostic score, we find  the permutation $\hat{\sigma}$   satisfying 
	\begin{equation}
		\label{eqn:permutation_hat}
		\hat{g}\{X_{  \hat{\sigma}(1) }\}   \leq \cdots \leq \hat{g}\{X_{  \hat{\sigma}(n) }\}.
	\end{equation}

	We then match the units to impute the missing potential outcomes. Define 
	\begin{equation}
		\label{eqn:matching}
		\widetilde{Y}_i  \,=\,   Y_{N(i)},  \,\,\,\,\,\text{with}\,\,\,\,  N(i) \,=\,  \underset{   j \,:\,   Z_j \neq Z_i }{\arg \min}\,\vert \, \hat{g}(X_{i }) - \hat{g}(X_{j })\vert.
	\end{equation}
	So if $Z_i = 1$, then $\widetilde{Y}_i $ is the imputed $Y_i(0)$ and the imputed individual effect is $Y_i - \widetilde{Y}_i $; if $Z_i = 0$, then $\widetilde{Y}_i $ is the imputed  $Y_i(1)$ and the imputed individual effect is $\widetilde{Y}_i  - Y_i$. With these ingredients, we define  the 
	estimator
	\begin{equation}
		\label{eqn:fused_lasso3}
		\hat{\tau} =  \displaystyle  \underset{b \in   \mathbb{R}^n}{\arg \min}   \left[    \frac{1}{2}\sum_{i=1}^{n}   \left\{    Y_i - \widetilde{Y}_i        +  (-1)^{Z_i}  b_i\right\}^2     +    \lambda \sum_{ i=1  }^{n-1}   \left\vert  b_{ \hat{ \sigma}(i)  }  -  b_{\hat{\sigma}(i+1)  } \right\vert   \right].
	\end{equation}
	The optimization problem in (\ref{eqn:fused_lasso3}) can be  solved in $O(n)$ operations by employing the algorithm from  \cite{johnson2013dynamic}  or   that of \cite{barbero2014modular}. Therefore,  $\hat{\tau}$  is our final  estimator  of the vector of subgroup treatment effects  $\tau^*$, where $\tau_i^* = \tau\{g(X_i)\}$,
	for  $i = 1,\ldots,n$, with    $g(x) =  x^{\top} \theta^*$. Similarly, for  $x \notin \{X_1,\ldots,X_n\}$ we can estimate $\tau\{g(x)\}$
	with $\hat{\tau}_i$  where  $X_i$  is the closest  to $x$ among $X_1,\ldots,X_n$. A related  prediction rule was used in a different context by \cite{padilla2020adaptive}.
	
	Because  $\hat{\tau}$ is piecewise constant, we can think  of  its different pieces as data-driven subgroups of units. These so-called  subgroups are estimated adaptively and do not need  to be prespecified.  See Figure \ref{fig1}  for  a visual example of $\hat{\tau}$.

	
	Notice that  we have used  a different data set  for  estimating the prognostic  score $\hat{\theta}$  than the  one for  which we estimate the treatment effects.  This can be achieved in practice  by  sample splitting. The reason why we  proceed in this way is to prevent  $\hat{\sigma}$ from being correlated to  $\{(Z_i,X_i,Y_i)\}_{i=1}^n$.
	
	Regarding  the choice of  $\lambda$,   we proceed as in \cite{tibshirani2012degrees}. Thus,  for each value of $\lambda$ from a list of choices, we  compute the estimator  in (\ref{eqn:fused_lasso3}) and its corresponding degrees of freedom as in \cite{tibshirani2012degrees}. Then we select  the  value of $\lambda$  with the smaller    Bayesian information criterion (BIC).

	
	\begin{figure}[t!]
		\begin{center}
			\includegraphics[width = 2in,height= 1.7in]{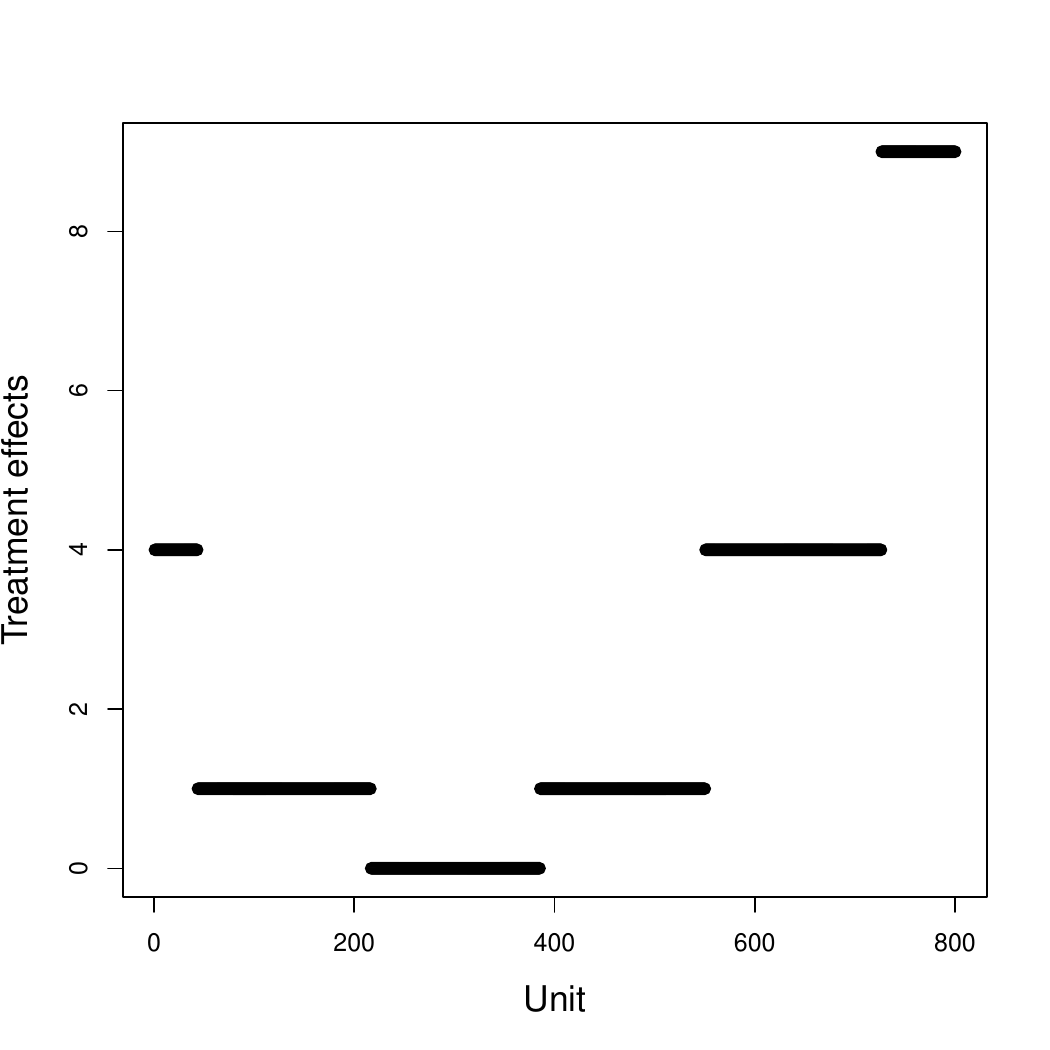} 
			\includegraphics[width=2in,height= 1.7in]{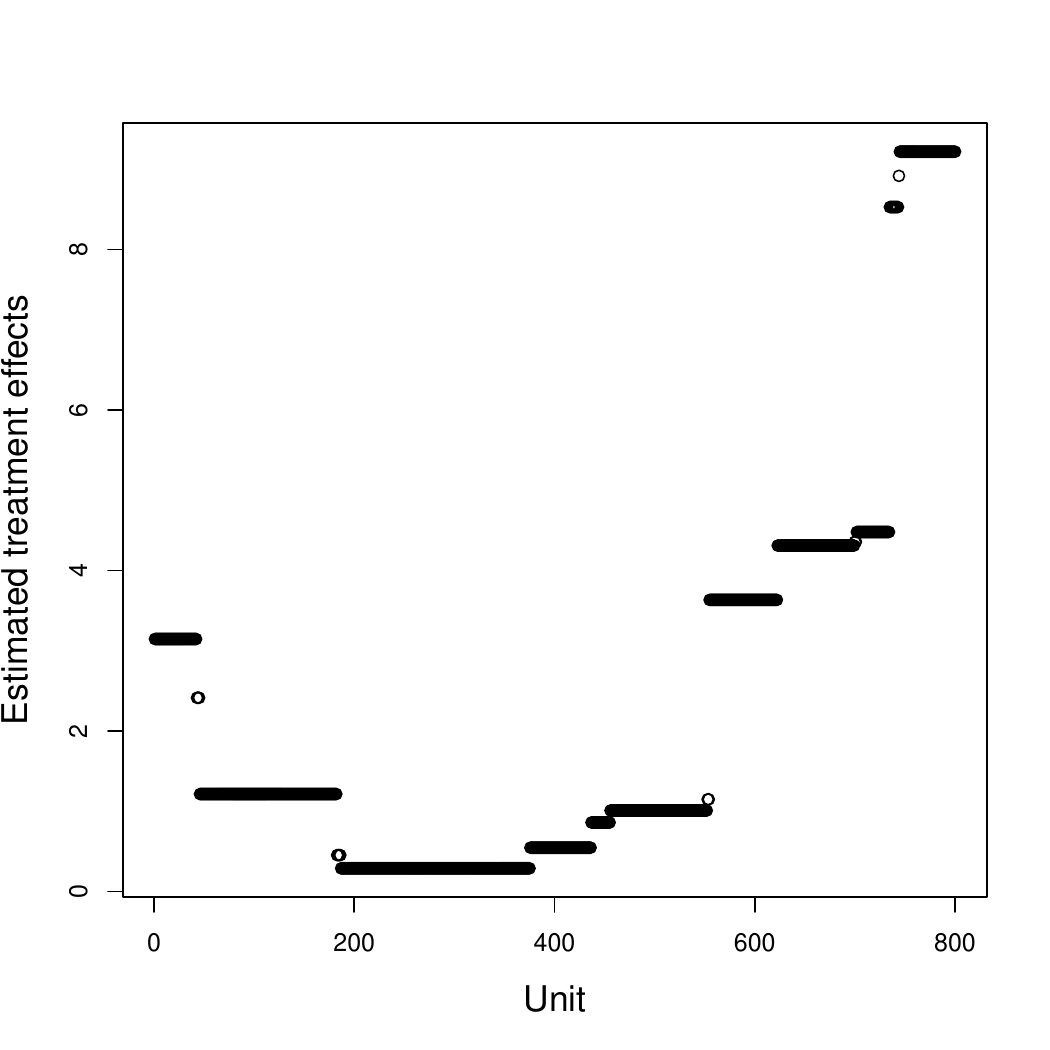}\\
			\caption{\label{fig1} The left panel shows a plot of  $ \tau_{\sigma}^* = ( \tau_{ \sigma(1)  }^*,   \ldots,  \tau_{ \sigma(n)  }^*   )^{\top}  $, where   $\sigma$  is a permutation satisfying  $   X_{ \sigma(1) }^{\top} \boldsymbol{e}_1  <   X_{ \sigma(2) }^{\top} \boldsymbol{e}_1   <  \cdots  <  X_{ \sigma(n) }^{\top} \boldsymbol{e}_1$.  The right panel then shows a plot of   $( \hat{\tau}_{ \hat{ \sigma}(1)    } , \ldots, \hat{\tau}_{ \hat{ \sigma}(n)    }   )^{\top}$, where $\hat{ \sigma}$ is the order  based on the estimated prognostic score as defined in  (\ref{eqn:permutation_hat}).}			
	\end{center}
\end{figure}

\begin{example}
	To illustrate  the  behavior of  $\hat{\tau}$  defined in  (\ref{eqn:fused_lasso3}) we consider  a simple  example. We generate  $\{(Z_i, X_i,Y_i)\}_{i=1}^n$   with  $n =  800$,   and  $d = 10$, from the model
	\begin{equation}
		\label{eqn:scenario1}
		\begin{aligned}
			Y_i(0) &= f_0(X_i) \,+ \,\epsilon_i,\\
			Y_i(1) &= f_0(X_i) \,+\, \tau(X_i)\,+ \,\epsilon_i,\\
			\mathbb{P}(Z_i=1|X_i) & =0.5,\\
			f_0(x)  &=\ \mathrm{sin}\{2(4\pi x^{\top}   \boldsymbol{e}_1   -2)\} + 2.5(4\pi x^{\top}   \boldsymbol{e}_1 -2)+1  ,\,\,\,\, \forall x \in [0,1]^d, \\   
		\end{aligned}
	\end{equation}
	\begin{equation}
		\label{eqn:scenario1.2}
		\begin{aligned}
			\tau(x) &=\  \left\lfloor    \frac{10}{1 + \exp\left( \frac{f_0(x)}{15} -\frac{1}{30} \right)}- 5 \right\rfloor^2,\,\,\,\,  \forall x \in [0,1]^d, \\   
			X_i & \overset{\mathrm{ind} }  {\sim} \  U[0,1]^d,\\
			\epsilon_i & \overset{\mathrm{ind} }  {\sim} \  \mathcal{N}(0,1),
		\end{aligned}
	\end{equation}
	where $\boldsymbol{e}_1  =   (1,0,\ldots,0)^{\top} \in  \mathbb{R}^d$.
	
Notice that, by construction,  $\tau$ can be interpreted as a piecewise constant function of $x^{\top} \boldsymbol{e}_1$. Figure \ref{fig1} illustrates   a plot of the vector  of treatment effects  $\tau^* =(\tau(X_1),\ldots, \tau(X_n)   )^{\top}$. In addition,  Figure  \ref{fig1} shows that   our estimator  $\hat{\tau} $  can reasonably estimate  $\tau^*$ when choosing the tuning parameter via BIC. This is surprising to an extent  since  the ordering that makes $\tau^*$ piecewise constant  is unknown, as both  the propensity  score  and the  function  $f_0$  are unknown.  
\end{example}

\subsection{Propensity score based  estimator}
\label{sec:propensity}

While the prognostic score-based approach can, in principle, be applied to observational studies, it is primarily suited for randomized experiments. As stated in Theorem \ref{thm:prognostic_upper_bound}, the estimate  $\hat{\tau}$ obtained from Equation (\ref{eqn:fused_lasso3}) targets the quantity
\[
\rho_i^* =  E\{Y (1)|g(X) =  g(X_i),  Z = 1\} -  E\{ Y (0)|g(X) =  g(X_i),  Z = 0\},
\]
for $i=1,\ldots,n$. However, to have $\rho_i^*$ to equals the conditional treatment effect 
$E\{Y(1) - Y(0) |g(X)=g(X_i)\}$, we must assume $ Y(0), Y(1) \,\ind\,   Z   \,|\,g(X)$.  This ignorability condition may not hold in observational settings. Therefore, the prognostic score-based method described in Section \ref{sec:prognostic} is intended for use in randomized experiments, consistent with the estimator proposed in \cite{abadie2018endogenous}.

To adapt our approach for observational studies, we follow \cite{rosenbaum1983central} and shift focus from the prognostic score to the propensity score. Accordingly, we define the subgroup treatment effect as:
\begin{equation}
\label{eqn:subgroups2}
\tau(s) =   E\left\{   Y(1) - Y(0) \, |\, e(X )=s\right\},
\end{equation}
where $e(X) = \mathbb{P}(Z=1 \mid  X)$ is the propensity score \citep{rosenbaum1983central}. We can define an analogous estimator for $\tau = (\tau\{e(X_i)\} )_{i=1}^n$ based on the estimated propensity scores. However, our estimator based on the propensity score is specifically designed for observational studies, as the true propensity score is constant in completely randomized experiments. In such settings, conditioning on the propensity score offers no meaningful variation, and thus it does not make sense to estimate heterogeneous treatment effects as a function of the propensity score. 


To be specific, we first estimate the propensity score based on logistic regression to obtain $\hat{e}(X_i) =   F(X_i^{\top }  \hat{\theta})$ where $F(x)=\exp(x)/\{1+\exp(x)\}$ and 
\begin{equation}
\label{eqn:logistic_theta}
\displaystyle  \hat{\theta}   \,=\,    \underset{ \theta \in  \mathbb{R}^d }{\arg \max }\, \left(    \sum_{i=1}^{m}   \left[ Z_i^{\prime} \log  F(  X_i^{\prime \top} \theta )  +  (1-Z_i^{\prime} )\log\{1-   F( X_i^{\prime \top} \theta )\}    \right]     \right).
\end{equation}
Again the sequence $\{ (X_i^{\prime},Z_i^{\prime}) \}_{i=1}^{m}$ consists of independent copies of $(X,Z)$  independent of $\{ (X_i,Z_i,Y_i) \}_{i=1}^{n}$. Obtain the permutation $\hat\sigma$ that satisfies  
\begin{equation}
\label{eqn:permutation_hat2}
\hat{e}\{X_{  \hat{\sigma}(1) }\}   \leq\cdots \leq \hat{e}\{X_{  \hat{\sigma}(n) }\},
\end{equation}
with $\hat{P}$ being the associated permutation matrix. Use matching to impute the missing potential outcomes based on
\begin{equation}
\label{eqn:matching2}
\widetilde{Y}_i  \,=\,   Y_{N(i)},  \,\,\,\,\,\text{with}\,\,\,\,  N(i) \,=\,  \underset{   j \,:\,   Z_j \neq Z_i }{\arg \min}\,\vert \, \hat{e}(X_{i }) - \hat{e}(X_{j })\vert.
\end{equation}
The final estimator for $\tau$ becomes 
\begin{equation}
\label{eqn:fused_lasso5}
\displaystyle\hat{\tau} =   \hat{P}^T\left( \underset{b \in   \mathbb{R}^n}{\arg \min}   \left[   \frac{1}{2} \sum_{i=1}^{n}   \left\{   (-1)^{Z_{\hat{\sigma}(i)     } +1} \left(Y_{  \hat{\sigma}(i)   } - \widetilde{Y}_{  \hat{\sigma}(i)  }  \right)     -  b_{i}\right\}^2     +    \lambda \sum_{ i=1  }^{n-1}   \left\vert  b_{ i  }  -  b_{i+1  } \right\vert   \right]\right),	
\end{equation}
for a tuning parameter  $\lambda
>0$.

One should be cautious in interpreting the propensity score based estimator defined in (\ref{eqn:fused_lasso5}).  Specifically,    (\ref{eqn:fused_lasso5})  estimates  $\rho\{ e(X_i) \}$ for $i=1,\ldots,n$  where
\[
\rho(s)\,:=\,f_1(s) \,-\,f_0(s),
\]   
with
\[
f_0(s) : =  E\{Y |  Z=0,\, e(X)   =  s  \}, \,\,\,\,\,\text{and}\,\,\,\,\,    f_1(s)  :=  E\{Y |  Z=1,\, e(X)   =  s  \}.
\]
However, in general  $\rho(s) \neq \tau(s)$ with $\tau$ as in (\ref{eqn:subgroups2}). Although    $\rho(s) = \tau(s)$  for all $s$   provided that   $Y(0),Y(1) \,\ind\, Z   \,|\, e(X)$. See Section \ref{sec:theory2} for a more detailed discussion.  

\section{Theory}
\label{sec:theory}
\subsection{Main result for prognostic score based estimator}
\label{sec:theory1}

We start  by  studying the  statistical behavior of the prognostic  score based  estimator  defined in (\ref{eqn:fused_lasso3}).  Towards  that   end,  we first  introduce some assumptions  used in our proofs to arrive at our  first  result.



\begin{assumption}[Overlap]
\label{as1}
The propensity score $e(X)$ has  support $ [e_{\min},e_{ \max }] \subset (0,1)$.
\end{assumption}


\begin{assumption}[Surrogate prognostic score]
\label{as7}
We define
\[
\theta^* \,=\,  \underset{ \theta \in  \mathbb{R}^d }{\arg \min} \,\,L(\theta),
\]
with $L(\theta) =  E\left\{\|  Y- X^{\top}\theta\|^2  |Z=0\right\} $, and suppose that  $L$  has a unique minimizer $\theta^* \neq 0$. In addition, we  assume that  the probability  density function of  $\tilde{g}(X) := X^{\top} \theta^*$  is bounded from below and above. The support of  $\tilde{g}(X)$ is denoted as $ [g_{\min},g_{\max}]$.

\end{assumption}

We note that Assumption \ref{as7}  states that the linear surrogate population prognostic score  is well behaved and uniquely defined. This   allows us to understand  the statistical properties of $\hat{\theta}$ defined in (\ref{eqn:theta_hat}), which is potentially  a misspecified
maximum likelihood estimator  \citep{white1982maximum}.



\begin{assumption}[Sub-Gaussian errors]
\label{as8}
Define   $V(z, x) = E\lbrace Y | Z=z, g(X)=g(x)\rbrace$ and $\epsilon_i = Y_i - V(Z_i, X_i)$ for  $i=1,\ldots,n$, with   $g(x) =  E\left\{Y(0) \,|\,  X =x\right\}$. Then  the vector  $ (\epsilon_1,\ldots,\epsilon_n)^{\top}$  has independent coordinates that are mean zero sub-Gaussian$(v)$ for some constant $v>0$.     
\end{assumption}

The previous assumption  requires  that the errors  are  independent, and mean zero  sub-Gaussian. This condition is standard in the analysis of total variation denoising; see for instance \cite{padilla2017dfs}. The resulting  condition 
allows for general models such as normal, bounded distributions, etc. In addition, Assumption \ref{as8} allows for the possibility of heteroscedastic errors.

Our next assumption has to do with behavior  of the mean functions  of the outcome variable, when conditioning on treatment  assignment and prognostic score. We start by recalling the definition of  bounded variation. For  a  function  $f\,:\, [l,u] \,\rightarrow \, \mathbb{R}$,   we  define  its  total  variation as 
\begin{equation}
\label{eqn:one_d_bv}
\displaystyle  \text{TV}(f)  \,=\, 
\underset{  r \geq 1  }{\sup} \,  \text{TV}(f,r) , 
\end{equation}
where
\[
\text{TV}(f,r)  =   \underset{  l \,\leq\, a_1  \,\leq \,\ldots\,\leq  a_r  \leq  u, \,  }{\sup}\, \sum_{l=1}^{r-1} \vert  f(a_l) \,-\,f(a_{l+1}) \vert.
\]
We say that $f$ has bounded variation if $\text{TV}(f) < \infty$. For  a fixed  $C$, the collection
\[
\mathcal{F}_C =  \{   f  \,:\,[0,1]  \rightarrow    \mathbb{R} \,\,:\,\,  \text{TV}(f)\leq C    \},
\]
is a rich class of functions that contains, among others,    Lipschitz continuous functions, piecewise constant and piecewise Lipschitz functions. We refer the reader to \cite{ mammen1997locally,tibshirani2014adaptive} for comprehensive studies of nonparametric regression on the class $\mathcal{F}_C$. 

\begin{assumption}
\label{as9} The functions $f_1(s)   =  E\{Y \,|\,  Z=1,\, g(X)   =  s  \}$, and  $f_0(s)   =  E\{Y \,|\,  Z=0,\, g(X)   =  s  \}$ for  $s \in [g_{\min},g_{\max}]$ are bounded and have  bounded  variation. The latter means that  $t_l    =   \mathrm{TV}(f_l,n)  $,  for  $l \in \{0,1\}$,  satisfy  $\max\{t_0,t_1\} = O(1)$.
\end{assumption}

Importantly, Assumption \ref{as9} allows for the possibility  that  functions  $f_0$ and $f_1$  can have discontinuities.   Our next condition imposes  a relationship between the prognostic score $g$  defined in   Assumption  \ref{as8}
and its  surrogate  $\tilde{g}$ defined in Assumption \ref{as7}.

\begin{assumption}
\label{as6.3}
Let $\tilde{\sigma}$ and  $\sigma$  be random permutations such that 
\[
\tilde{ g}\{X_{ \tilde{\sigma}(1) }\} \leq  \cdots \leq \tilde{ g}\{X_{ \tilde{\sigma}(n) }\},
\] 
and 
\[
g\{X_{ \sigma(1) }\} \leq  \cdots \leq g\{X_{\sigma(n) }\},
\]
respectively, with $\tilde{g}$ as defined  in Assumption \ref{as7}. Then we write 
\[
\mathcal{K}_j \,=\,\left\{  i\,:\,   \left( \sigma^{-1}(i) <\sigma^{-1}(j)  \, \,\text{and}\,\,   \tilde{\sigma}^{-1}(j) <\tilde{\sigma}^{-1}(i)\right) \,\text{or}\,\,  \left( \sigma^{-1}(j) <\sigma^{-1}(i)   \,\,\text{and}\, \,  \tilde{\sigma}^{-1}(i) <\tilde{\sigma}^{-1}(j)\right)  \,\,      \right\},
\]
set  $\kappa_j  :=  \vert  \mathcal{K}_j\vert $	for  $j = \{1,\ldots,n\}$, and 
\[
\kappa_{\max}\,:=\,\underset{j =1 ,\ldots,n}{\max}  \,\kappa_j ,
\]
and require that  $\kappa_{\max}	\,=\, O_{  \mathbb{P} }(\overline{\kappa}_n)$, where $\overline{\kappa}_n$ is a deterministic sequence. 
\end{assumption}

Assumption \ref{as6.3} allows to quantify the discrepancy between the order statistics of the prognostic  score   at the samples and the corresponding order statistics based on the surrogate  prognostic score.  The following remark further clarifies this.



\begin{remark}
Notice that $\kappa_j$   can be thought  as  the number of units   that have different relative orderings  in the  rankings  induced by  $\sigma$ and  $\tilde{\sigma}$. In addition, notice that   the parameter  $\overline{\kappa}_n$   gives an upper  bound on the  entries of the vector $(\kappa_1,\ldots,\kappa_n)$. In fact,  $\overline{\kappa}_n$    can be thought as an  $\ell_{\infty}$  version of the Kendall-Tau  distance  between the permutations   $\sigma^{-1}$ and  $\tilde{ \sigma}^{-1}$. Such  Kendall-Tau  distance   is given as $  \sum_{j=1}^{n}\kappa_j$	(see  \cite{kumar2010generalized} for an overview). Readers can also consider cases in which $\tilde{g}$  and  $g$ induce the same ordering. In such cases,  $\overline{\kappa}_n $ can be taken as zero. 
\end{remark} 

Our next assumption is a condition on the covariates.

\begin{assumption}
\label{as10}  The  random   vectors  $X_1,\ldots,X_n$  are independent  copies of $X$   which has   support $ [a,b]\subset   \mathbb{R}^d$, for some fixed points  $a,b \in  \mathbb{R}^d$. In addition, the following holds: 
\begin{itemize}
\item  	The probability  density  function of $X$, $p_X$,  is bounded.  This amounts to 
\[
p_{\min}  \leq     \underset{ x \in  [a,b]}{\inf}\,\,p_X(x) \,\,\leq  \underset{ x \in  [a,b]}{\sup}\,\,p_X(x)    \leq \,\,  p_{max},
\]  
for  some positive  constants  $p_{\min} $   and  $p_{\max}$.
\item  There exist  $D_{\max}, C_{\min}>0$   such that     	
\begin{equation}
	\label{eqn:eigenvalues}
	C_{\min}    < \Lambda_{\min}\{E( X X^{\top}  ) \}   \leq  \Lambda_{\max}\{E( X X^{\top} ) \}   <   D_{\max},
\end{equation}
where  $\Lambda_{\min}(\cdot)$ and  $\Lambda_{\max}(\cdot)$  are the minimum and maximum eigenvalue functions.
\end{itemize}
\end{assumption}

We emphasize that the first  condition in Assumption \ref{as10} is standard in nonparametric regression, see \cite{padilla2020adaptive}. It is slightly more general than  assuming that the covariates are uniformly drawn in $[0,1]^d$ as in the  nonparametric  regression models in \cite{gyorfi2006distribution}, and the  heterogenous treatment effect setting from \cite{wager2018estimation}. 


With these assumptions,  we are now ready to state our first result  regarding   the  estimation of   $\tau_i^* = \tau\{g(X_i)\}$,  with  $g$ the prognostic score defined in Assumption \ref{as8}.

\begin{theorem}
\label{thm:prognostic_upper_bound}
Suppose that Assumptions \ref{as1}--\ref{as10}  hold,  $m\asymp n$, and that  
\[
d(  \log^{1/2} n     +  d^{1/2}   \|\theta^*\|_1  )      \,\leq\, \frac{c_1 n  }{\log n},
\]
for some large enough constant $c_1>0$.  Then for  a value $t$  satisfying
\[
t  \asymp    \left(\overline{\kappa}_n+1\right)d \left\{ n (   \log^{1/2} n  +     d^{1/2}\|\theta^*\|_1 )\log n \right\}^{1/2}  ,
\]
and for a  choice of $\lambda$  with
\[
\lambda  \,=\,  \Theta\left\{ n^{1/3}   (\log n)^{2}   (\log \log n) t^{-1/3}       \right\},
\]
we have that the estimator defined in (\ref{eqn:fused_lasso3}) satisfies
\begin{equation}
\label{eqn:up1}
\displaystyle \frac{1}{n}  \sum_{i=1}^{n} ( \rho^*_i  - \hat{\tau}_i )^2 \,=\,  O_{ \mathbb{P} }\left[(\log \log n)(\log n)^2  (\overline{\kappa}_n+1)^{2/3}  d^{2/3}\left\{\frac{(   \log^{1/2} n   +     d^{1/2}\|\theta^*\|_1 )\log n }{n} \right\}^{1/3}   \right],
\end{equation}
where  $\rho_i^* =  E\{Y (1)|g(X) =  g(X_i),  Z = 1\} -  E\{ Y (0)|g(X) =  g(X_i),  Z = 0\}$  for  $i=1,\ldots,n$. If in addition   $Y(0), Y(1) \,\ind\,   Z   \,|\,g(X)$, then  (\ref{eqn:up1})  holds replacing $\rho_i^*$ with 
$\tau_i^* =   E\{Y(1) - Y(0) |g(X)=g(X_i)\}$  for  $i=1,\ldots,n$. 
\end{theorem}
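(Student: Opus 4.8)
The plan is to recognize (\ref{eqn:fused_lasso3}) as a one-dimensional (path-graph) fused lasso applied to the matched pseudo-responses $w_i=(-1)^{Z_i+1}(Y_i-\widetilde{Y}_i)$ in the order $\hat\sigma$, and to run a basic-inequality argument against the comparator $\rho^*$. First I would decompose the pseudo-response: for a unit with $Z_i=1$ one has $w_i=Y_i(1)-Y_{N(i)}(0)=\rho_i^*+r_i+\eta_i$, where $r_i=f_0\{g(X_i)\}-f_0\{g(X_{N(i)})\}$ is the matching bias and $\eta_i=\epsilon_i-\epsilon_{N(i)}$ the differenced noise, with the symmetric expression involving $f_1$ when $Z_i=0$. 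Conditioning on the auxiliary sample and on $\{(X_i,Z_i)\}_{i=1}^n$ fixes $\hat\sigma$, the matching map $N(\cdot)$ and $r$, leaving only the sub-Gaussian errors random. Writing $D_{\hat\sigma}$ for the chain difference operator in the $\hat\sigma$-order and $\Delta=\hat\tau-\rho^*$, expanding the optimality bound $\tfrac12\|w-\hat\tau\|^2+\lambda\|D_{\hat\sigma}\hat\tau\|_1\le\tfrac12\|w-\rho^*\|^2+\lambda\|D_{\hat\sigma}\rho^*\|_1$ yields $\tfrac12\|\Delta\|^2+\lambda\|D_{\hat\sigma}\hat\tau\|_1\le\langle\eta,\Delta\rangle+\langle r,\Delta\rangle+\lambda\,\mathrm{TV}_{\hat\sigma}(\rho^*)$, where $\mathrm{TV}_{\hat\sigma}(\rho^*)=\|D_{\hat\sigma}\rho^*\|_1$.

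I would then control three quantities. (i) The total variation of the target in the estimated order, $\mathrm{TV}_{\hat\sigma}(\rho^*)$: in the true prognostic order $\sigma$ Assumption \ref{as9} bounds it by $\mathrm{TV}(f_1)+\mathrm{TV}(f_0)=O(1)$, and since $f_0,f_1$ are bounded, each pair of units whose relative order is flipped between $\sigma$ and $\hat\sigma$ inflates the total variation by at most an $O(1)$ jump; counting these flips through Assumption \ref{as6.3} (the population gap $g\to\tilde g$, quantified by $\kappa_{\max}=O_{\mathrm{pr}}(\overline{\kappa}_n)$) together with the sampling gap $\tilde g\to\hat g$ gives $\mathrm{TV}_{\hat\sigma}(\rho^*)=O_{\mathrm{pr}}(\overline{\kappa}_n+1)$. (ii) The matching bias $\|r\|^2$: overlap (Assumption \ref{as1}) and the density bounds (Assumption \ref{as10}) make matched pairs $O_{\mathrm{pr}}(1/n)$-close in $\hat g$; transporting this closeness to the true score and charging the increments $|f_0\{g(X_i)\}-f_0\{g(X_{N(i)})\}|$ against the bounded-variation budget of $f_0$ keeps $\|r\|^2$ negligible relative to the leading term. (iii) The stochastic term $\langle\eta,\Delta\rangle$: summation by parts in the chain bounds it by $\max_{1\le i\le j\le n}|\sum_{k=i}^{j}\eta_{\hat\sigma(k)}|$ times $\mathrm{TV}_{\hat\sigma}(\Delta)$ plus a boundary term, and because in one dimension each unit serves as a nearest neighbor only $O(1)$ times, the differenced errors retain variance proxy $O(n)$, so Assumption \ref{as8} and a union bound give $\max_{1\le i\le j\le n}|\sum_{k=i}^{j}\eta_{\hat\sigma(k)}|=O_{\mathrm{pr}}(\sqrt{n\log n})$.

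A preliminary lemma would establish the rate of the possibly misspecified least-squares estimator $\hat\theta$ in (\ref{eqn:theta_hat}): Assumption \ref{as7} (unique minimizer), the eigenvalue condition (\ref{eqn:eigenvalues}), and $m\asymp n$ produce the factor $d(\log^{1/2}n+d^{1/2}\|\theta^*\|_1)$ and control how far $\hat\sigma$ may sit from $\tilde\sigma$, feeding into both (i) and (ii). Collecting the three bounds, the estimate $\mathrm{TV}_{\hat\sigma}(\Delta)\le\|D_{\hat\sigma}\hat\tau\|_1+\mathrm{TV}_{\hat\sigma}(\rho^*)$ together with the retained penalty leaves a bias-variance tradeoff of the form $\tfrac1n\|\Delta\|^2\lesssim\tfrac{\lambda t}{n}+\lambda^{-2}$ up to polylogarithmic factors, where $t$ is exactly the intermediate quantity combining the total-variation inflation $(\overline{\kappa}_n+1)$ with the stochastic and estimation scale $\{nd(\log^{1/2}n+d^{1/2}\|\theta^*\|_1)\log n\}^{1/2}$. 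Balancing the two terms forces $\lambda=\Theta\{n^{1/3}(\log n)^2(\log\log n)t^{-1/3}\}$ and yields (\ref{eqn:up1}). For the final claim, under $Y(0),Y(1)\ind Z\mid g(X)$ the conditional independence gives $f_1\{g(X_i)\}=E\{Y(1)\mid g(X)=g(X_i)\}$ and $f_0\{g(X_i)\}=E\{Y(0)\mid g(X)=g(X_i)\}$, so $\rho_i^*=\tau_i^*$ and the bound transfers unchanged.

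I expect the main obstacle to be the combination of (i) and (ii): controlling the total-variation inflation of a bounded-variation but possibly discontinuous signal when it is sorted by the wrong, estimated score, and simultaneously bounding the matching bias of the discontinuous $f_0$, while these interact with the dependent differenced noise of (iii). Bridging the two orderings through both the population surrogate gap of Assumption \ref{as6.3} and the sampling error of $\hat\theta$, without any smoothness of $f_0,f_1$, is the delicate step; it is what produces the factor $(\overline{\kappa}_n+1)$ and the $n^{-1/3}$-type exponent in (\ref{eqn:up1}) rather than the sharper $n^{-2/3}$ rate available for clean one-dimensional fused-lasso denoising.
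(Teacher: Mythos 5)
Your overall architecture matches the paper's: a rate for the misspecified least--squares $\hat\theta$, a uniform bound on $|\hat g(X_i)-\tilde g(X_i)|$ translated into an ordering--displacement bound, inflation of the total variation of $f_0,f_1$ along $\hat\sigma$ through the two bridges ($\hat\sigma\to\tilde\sigma$ via the estimation error, $\tilde\sigma\to\sigma$ via Assumption \ref{as6.3}), a matching--bias bound, a basic inequality, and a bias--variance balancing; the final reduction from $\rho^*$ to $\tau^*$ under $Y(0),Y(1)\ind Z\mid g(X)$ is also handled as in the paper. However, there are two genuine gaps. First, your step (i) asserts $\mathrm{TV}_{\hat\sigma}(\rho^*)=O_{\mathrm{pr}}(\overline{\kappa}_n+1)$ and your step (ii) declares the matching bias negligible. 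In the paper (Lemmas \ref{lem14}, \ref{lem15}, \ref{lem17}) both quantities are of order $\max\{\log n,\,n\epsilon\}\,(\overline{\kappa}_n+1)$ with $n\epsilon\asymp\{nd(\log^{1/2}n+d^{1/2}\|\theta^*\|_1)\log n\}^{1/2}$: each unit is displaced by up to $\max\{\log n,n\epsilon\}$ positions between $\hat\sigma$ and $\tilde\sigma$, and this displacement multiplies the total--variation budget. That multiplicative factor is precisely how the first--stage error of $\hat\theta$ enters $t$; having dropped it, the $t$ you invoke in the final balancing is unexplained and your steps (i)--(ii) are inconsistent with your own conclusion.

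Second, your step (iii) bounds $\langle\eta,\Delta\rangle$ by summation by parts, i.e.\ by $\max_{i\le j}|\sum_{k=i}^{j}\eta_{\hat\sigma(k)}|\cdot\mathrm{TV}_{\hat\sigma}(\Delta)=O_{\mathrm{pr}}(\sqrt{n\log n})\cdot\mathrm{TV}_{\hat\sigma}(\Delta)$. To absorb the $\|D_{\hat\sigma}\hat\tau\|_1$ part into the penalty this forces $\lambda\gtrsim(n\log n)^{1/2}$, which yields only $\tfrac1n\|\Delta\|^2\lesssim t(\log n/n)^{1/2}$; since here $t\asymp(\overline{\kappa}_n+1)\{nd(\cdots)\log n\}^{1/2}$ grows like $n^{1/2}$ times diverging factors, that bound does not vanish, and it is incompatible with the choice $\lambda\asymp n^{1/3}(\log n)^2(\log\log n)t^{-1/3}\ll n^{1/2}$ and with the $\lambda^{-2}$ term you write down (summation by parts produces no such term). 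The paper instead uses the localized empirical--process bound $\sup\{\langle\epsilon,u\rangle/\|u\|^{1/2}:\|\Delta^{(1)}\hat P u\|_1\le1\}=O_{\mathrm{pr}}\{n^{1/4}(\log\log n)^{1/2}\}$ from \cite{wang2016trend}, which gives $\|P_R(\hat\tau-\tau^*)\|^2=O_{\mathrm{pr}}\{\lambda t+K^4\lambda^{-2}\}$ and hence the $(t/n)^{2/3}$ rate. Relatedly, your claim that each unit serves as a nearest neighbor only $O(1)$ times is false for cross--arm matching; the paper shows the maximal multiplicity is $O_{\mathrm{pr}}(\log n)$ and peels $\{\epsilon_{N(i)}\}$ into that many independent layers before applying the empirical--process bound.
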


We note that $Y(0), Y(1) \,\ind\,   Z   \,|\,g(X)$  is a strong assumption that might not hold in observational studies, however it might be reasonable  in completely randomized trials. 

On a related note, Theorem \ref{thm:prognostic_upper_bound} shows that, up to logarithmic factors, our proposed prognostic score based estimator achieves the convergence rate for the mean squared error given by:
\[
(\overline{\kappa}_n+1)^{2/3} d^{2/3} \left\{\frac{(  1 +     d^{1/2}\|\theta^*\|_1 )}{n} \right\}^{1/3}  .
\]
Here, the dimension $d$ is allowed to grow with the sample size  $n$, and accordingly, 
$\|\theta^*\|_1 $ may also grow with $n$.  However, in the special case where $d= O(1)$ and 
$\|\theta^*\|_1 $  = O(1), the rate simplifies to 
$  (\overline{\kappa}_n+1)^{2/3} /n^{1/3} $. Notably, when the prognostic score is exactly linear—so that  $g= \tilde{g}$ and hence 
$\sigma=\tilde{\sigma} $ —we have 
$\overline{\kappa}_n =0$, and the rate further simplifies to $n^{-1/3}$.

It is also possible for $\sigma=\tilde{\sigma} $   to hold even if 
$g\neq \tilde{g}$ , for example, when 
$g(x) = h(\tilde{g}(x))$  for some strictly increasing function $h$. Nevertheless, in the worst-case scenario, $\overline{\kappa}_n$  could be large enough to dominate the rate, though we do not expect such behavior in typical applications.

The rate $n^{-1/3}$  is slower than the typical rate $n^{-2/3}$  achieved in one-dimensional nonparametric regression using the fused lasso under a bounded variation assumption. However, a key distinction in our setting is that the design points are not directly observed but instead we rely on their estimates. This introduces an additional layer of complexity and variability that affects the convergence rate. We do not claim that our procedure is minimax optimal; in fact, we conjecture that it is not. Nonetheless, our method remains computationally efficient and provably consistent, making it a practical and scalable choice, as we demonstrate in Section \ref{sec:experiments}.

We conclude this section  with a remark  that can be thought as a straightforward generalization of Theorem \ref{thm:prognostic_upper_bound}.
\begin{remark}
\label{remark1}
Notice that the rate  $n^{-1/3}$ does not depend on the dimension $d$ of the covariates   as it is the case of other nonparametric estimators, see \cite{gao2020minimax}. In fact our results  are not directly comparable with \cite{gao2020minimax} as the authors  there consider different classes of functions. A main driver behind the rate $n^{-1/3}$  is Assumption  \ref{as9}. If instead $t^* = \max\{t_0,t_1\}$ is allowed to grow,  then  the upper  bound  in Theorem \ref{thm:prognostic_upper_bound} should be inflated by a factor $(t^*)^{2/3} $. Hence, similar to the discussion above this would lead to the rate   $(t^*)^{2/3} n^{-1/3}$.
\end{remark}

\subsection{Main result for propensity score based estimator}
\label{sec:theory2}

We now  study  the statistical  properties of the  estimator  defined in Section \ref{sec:propensity}.  Since the assumptions required to arrive at our main result here are similar to those  in Section \ref{sec:theory1},  here we only present  the conclusion of our result and the assumptions are given in Section \ref{sec:propensity2}.


\begin{theorem}
\label{thm:tv}
Under Assumptions  \ref{as1}, and \ref{as2}--\ref{as6}, $   dn^{1/2} \geq C_{\min}$, $n \asymp m$,   there exists  $t>0$  such that
\begin{equation}
\label{eqn:scaling}
t\,\asymp\, \max\left\{  \frac{dn^{1/2}  \cdot  \log^{1/2} n\,\log^{1/2} (nd) }{C_{\min} }   , \log n\right\}
\end{equation}
and choice of  $\lambda$ satisfying 
\[
\lambda  \,\asymp\,  n^{1/3}   (\log n)^{2}   (\log \log n) t^{-1/3},
\]
such that the estimator  $\hat{\tau}$ defined in (\ref{eqn:fused_lasso5}) satisfies
\begin{equation}
\label{eqn:up2}
\displaystyle \frac{1}{n}  \sum_{i=1}^{n} ( \rho^*_i  - \hat{\tau}_i )^2 \,=\,  O_{ \mathbb{P} }\left\{    \frac{ d^{2/3}  (\log n)^3 (\log \log n)}{C_{\min}^{2/3  }  n^{1/3 }}  \right\},
\end{equation}
where  $\rho_i^* =  E\{ Y (1)\,|  \,e(X) =  e(X_i),  Z = 1\} - E\{Y (0)\,|\,e(X) =  e(X_i),  Z = 0\}$  for  $i=1,\ldots,n$. If in addition   $Y(0), Y(1) \,\ind\,   Z   \,|\,e(X)$, then  (\ref{eqn:up2})  holds replacing $\rho_i^*$ with 
$\tau_i^* =  E\{ Y(1) - Y(0) \,|\,e(X)=e(X_i)\}$  for  $i=1,\ldots,n$.  
\end{theorem}

Importantly, Theorem \ref{thm:tv} implies that the estimator $\hat{\tau}$ defined in (\ref{eqn:fused_lasso5})  can consistently estimate the subgroup treatment effects  $\tau^*$ under general conditions.  One of such conditions is that  $Y(0), Y(1) \,\ind\,   Z   \,|\,e(X)$, which in the language of  \cite{rosenbaum1983central} means that treatment is strongly ignorable given $e(\cdot)$. As Theorem 3 in \cite{rosenbaum1983central} showed,  $Y(0), Y(1) \,\ind\,   Z   \,|\,e(X)$ holds under overlapping (Assumption \ref{as1}) and unconfoundedness   which can be writen as $Y(0), Y(1) \,\ind\,   Z   \,|\,e(X)$.  When these conditions are violated, Theorem \ref{thm:tv} shows that $\hat{\tau}$ can still approximate $\rho^*$ under Assumptions  \ref{as1}, and \ref{as2}--\ref{as6}.

\section{Experiments}

\label{sec:experiments}

We will now validate  with experiments the proposed methods in this paper. Throughout this section, we refer to the procedure in Section~\ref{sec:prognostic} as \textit{Causal Fused Lasso 1 (CFL1)}, and the procedure in Section~\ref{sec:propensity} as \textit{Causal Fused Lasso 2 (CFL2)}. For both estimators, we select the tuning parameter~$\lambda$ by minimizing the Bayesian Information Criterion (BIC) over a grid of candidate values, as described in Section~\ref{sec:prognostic}. For each $\lambda$, we compute the fused lasso estimator and evaluate BIC using the residual sum of squares and the estimated degrees of freedom following the approach of \cite{tibshirani2012degrees}.

We benchmark our methods against several widely used baselines. These include causal random forests Procedure 1 (WA1) and Procedure 2 (WA2) from \cite{wager2018estimation}, the robust generalized random forest (GRF) from Section~6.2 of \cite{athey2019generalized}, and the estimator of \cite{abadie2018endogenous} (ACW). We also include two flexible, nonparametric methods: Bayesian Additive Regression Trees (BART) from \cite{10.1214/09-AOAS285}, which models the outcome as a sum of regression trees and estimates individual treatment effects as the difference in posterior mean outcomes under treatment and control; and the Augmented Inverse Probability Weighting (AIPW) estimator from \cite{glynn2010introduction}, which combines outcome regression and propensity score weighting and enjoys double robustness.

To further enhance our evaluation and directly address recent developments in interpretable causal inference, we include a family of matching-based estimators motivated by the ``almost exact matching'' framework. These include: Genetic Matching (GM) from \cite{diamond2013genetic}, MALTS from \cite{parikh2022malts},   Lasso Coefficient Matching (LCM) from \cite{lanners2023variable}, ADD-MALTS from \cite{katta2024interpretable}, and Adaptive Hyperbox Matching (AHB) from \cite{morucci2020adaptive}. Each of these approaches defines a strategy to identify comparable units, using learned distance metrics or adaptive rules, and estimates the treatment effect for each unit by imputing one or both missing potential outcomes from matched units.

In addition, we consider two interpretable subset-based matching estimators: FLAME from \cite{wang2021flame} and DAME from \cite{liu2018interpretable}.  These algorithms construct matched groups by sequentially selecting subsets of covariates that optimize a trade-off between covariate balance and predictive accuracy of the outcome, resulting in interpretable, rule-based matching schemes. Unlike instance-level matching approaches such as MALTS or AHB that rely on learned distance metrics, FLAME and DAME perform combinatorial matching on covariate subsets to identify groups where units match exactly on a carefully chosen set of features. Because these methods are designed for categorical covariates, we adapt them to our continuous covariate setting by discretizing each feature into quantile-based bins prior to matching. While this preprocessing step introduces approximation error, it enables a meaningful comparison with these almost-exact matching approaches in our synthetic scenarios.


We note that CFL1 and CFL2 induce interpretable, data-driven subgroups via total variation regularization applied to estimated prognostic or propensity scores. This leads to piecewise-constant treatment effect estimates across individuals, derived not from pairwise similarity or nearest-neighbor heuristics, but from optimization principles that robustly segment units based on heterogeneity. In doing so, CFL1 and CFL2 attain the interpretability of subgroup-based methods like FLAME and DAME, while avoiding some of the limitations associated with distance-based matching, such as sensitivity to poor overlap or dependence on learned distance metrics.

\subsection{Simulated and Semi-Synthetic Experiments}

\begin{table}[t]
\centering
\caption{\label{tab1} Performance evaluations (median $\pm$ standard error) over 50 Monte Carlo simulations for synthetic scenarios with varying $(n,d)$. \textbf{Bold} indicates the best method, and \textit{italic} indicates the second-best.}

\medskip
{\fontsize{10}{14}\selectfont
\setlength{\tabcolsep}{6pt}
\begin{tabular}{|l|l|l|l|l|l|}
\hline
Method & $(n,d)$ & Scenario 1 & Scenario 2 & Scenario 3 & Scenario 4 \\
\hline
\multirow{4}{*}{CFL1} 
& (800, 2)  & \textbf{0.004 $\pm$ 0.0012} & 0.195 $\pm$ 0.055 & 0.181 $\pm$ 0.037 & \textbf{0.301 $\pm$ 0.065} \\
& (1600, 2) & \textbf{0.003 $\pm$ 0.0009} & 0.108 $\pm$ 0.032 & 0.136 $\pm$ 0.028 & \textbf{0.183 $\pm$ 0.044} \\
& (800, 10) & \textbf{0.005 $\pm$ 0.0013} & \textbf{0.503 $\pm$ 0.083} & 0.412 $\pm$ 0.082 & \textbf{0.450 $\pm$ 0.086} \\
& (1600, 10)& \textbf{0.003 $\pm$ 0.0010} & \textbf{0.319 $\pm$ 0.069} & 0.293 $\pm$ 0.063 & \textbf{0.277 $\pm$ 0.068} \\
\hline
\multirow{4}{*}{CFL2} 
& (800, 2)  & 0.011 $\pm$ 0.0023 & * & \textbf{0.074 $\pm$ 0.017} & * \\
& (1600, 2) & 0.004 $\pm$ 0.0011 & * & 0.051 $\pm$ 0.014 & * \\
& (800, 10) & 0.016 $\pm$ 0.0038 & * & \textbf{0.146 $\pm$ 0.033} & * \\
& (1600, 10)& 0.005 $\pm$ 0.0016 & * & \textbf{0.109 $\pm$ 0.027} & * \\
\hline
\multirow{4}{*}{GRF} 
& (800, 2)  & 0.013 $\pm$ 0.0031 & \textbf{0.152 $\pm$ 0.048} & 0.143 $\pm$ 0.031 & 1.788 $\pm$ 0.318 \\
& (1600, 2) & 0.011 $\pm$ 0.0024 & \textbf{0.063 $\pm$ 0.018} & 0.106 $\pm$ 0.024 & 0.771 $\pm$ 0.177 \\
& (800, 10) & 0.010 $\pm$ 0.0032 & 0.565 $\pm$ 0.092 & 0.408 $\pm$ 0.078 & 3.261 $\pm$ 0.557 \\
& (1600, 10)& 0.006 $\pm$ 0.0021 & 0.350 $\pm$ 0.077 & 0.359 $\pm$ 0.070 & 1.291 $\pm$ 0.308 \\
\hline
\multirow{4}{*}{AIPW} 
& (800, 2)  & 0.015 $\pm$ 0.0040 & 0.235 $\pm$ 0.058 & 0.109 $\pm$ 0.027 & 1.918 $\pm$ 0.362 \\
& (1600, 2) & 0.010 $\pm$ 0.0025 & 0.132 $\pm$ 0.036 & \textbf{0.049 $\pm$ 0.015} & 1.089 $\pm$ 0.291 \\
& (800, 10) & 0.023 $\pm$ 0.0052 & 0.672 $\pm$ 0.112 & 0.342 $\pm$ 0.059 & 3.988 $\pm$ 0.611 \\
& (1600, 10)& 0.019 $\pm$ 0.0041 & 0.498 $\pm$ 0.096 & 0.286 $\pm$ 0.051 & 3.134 $\pm$ 0.552 \\
\hline
\multirow{4}{*}{MALTS} 
& (800, 2)  & 0.018 $\pm$ 0.0042 & 0.248 $\pm$ 0.057 & 0.098 $\pm$ 0.023 & 2.156 $\pm$ 0.391 \\
& (1600, 2) & 0.011 $\pm$ 0.0031 & 0.112 $\pm$ 0.028 & \textit{\textbf{0.050 $\pm$ 0.014}} & 1.141 $\pm$ 0.283 \\
& (800, 10) & 0.030 $\pm$ 0.0061 & 0.618 $\pm$ 0.102 & 0.238 $\pm$ 0.051 & 3.274 $\pm$ 0.553 \\
& (1600, 10)& 0.024 $\pm$ 0.0054 & 0.401 $\pm$ 0.087 & 0.212 $\pm$ 0.046 & 2.148 $\pm$ 0.472 \\
\hline
\multirow{4}{*}{AHB} 
& (800, 2)  & 0.010 $\pm$ 0.0028 & 0.231 $\pm$ 0.051 & 0.117 $\pm$ 0.025 & 1.503 $\pm$ 0.297 \\
& (1600, 2) & \textbf{0.003 $\pm$ 0.0009} & 0.136 $\pm$ 0.031 & 0.081 $\pm$ 0.017 & 0.908 $\pm$ 0.239 \\
& (800, 10) & 0.013 $\pm$ 0.0034 & 0.492 $\pm$ 0.083 & 0.234 $\pm$ 0.048 & 2.182 $\pm$ 0.459 \\
& (1600, 10)& 0.005 $\pm$ 0.0016 & 0.331 $\pm$ 0.071 & 0.191 $\pm$ 0.041 & 1.523 $\pm$ 0.382 \\
\hline
\end{tabular}
}
\end{table}
\vspace{-4mm}

We assess the performance of our proposed methods across eight distinct scenarios that encompass both completely synthetic and semi-synthetic designs. Scenarios 1–6 are fully synthetic, with both covariates and outcomes generated from known models. Scenarios 7 and 8 are semi-synthetic, using covariates from real datasets (the National JTPA and Project STAR studies, respectively) and simulated outcomes as in \cite{abadie2018endogenous}. For Scenarios 1--4, we consider varying values of the sample size $n \in \{800, 1600\}$ and the covariate dimension $d \in \{2, 10\}$. For Scenarios 5 and 6, we set $n=4000$ and $d=10.$ For each combination, we generate a dataset $\{(Z_i, X_i, Y_i)\}_{i=1}^n$ according to the corresponding generative model. In Scenarios 7 and 8, which include semi-synthetic simulations, the values of $n$ and $d$ are determined by the underlying real datasets or specific design choices. In all cases, we evaluate performance using the mean squared error (MSE),
\[
\frac{1}{n} \sum_{i=1}^n \left( \tau_i^* - \hat{\tau}_i \right)^2,
\]
where $\tau_i^* = \mathbb{E}[Y_i \mid X_i, Z_i = 1] - \mathbb{E}[Y_i \mid X_i, Z_i = 0]$, and $\hat{\tau}_i$ is the estimate from a given method. MSEs are averaged over 50 Monte Carlo replications, and we report associated standard errors.
Descriptions of each scenario follow.

Scenarios 1–4, which are completely synthetic, are described in Section~\ref{sec:ex1.2}. The first  two scenarios come from \cite{wager2018estimation}  and both consist of  $\tau^*_i =0 $  for  all $i  \in \{1,\ldots,n\}$. 
In Scenario~3, we define the treatment effect as $\tau_i^* = \boldsymbol{1}_{\{e(X_i) > 0.6\}}$, where $e(x) = \Phi(\beta^\top x)$ is the propensity score and $\Phi$ denotes the standard normal CDF. The vector $\beta \in \mathbb{R}^d$ is fixed with $\beta_j = 1$ for $j \leq \lfloor d/2 \rfloor$ and $\beta_j = -1$ otherwise. Furthermore,  Scenario 4 is the model  described in (\ref{eqn:scenario1}).

\textit{Scenario 5.} This fully synthetic scenario comes from \cite{abadie2018endogenous}. Setting $d=10$  and  $n =4000$ the data is generated as: $	Y_i =\  1+    \beta^{\top}X_i  +\epsilon_i $,  $	X_i \overset{\mathrm{ind}}{\sim}  \mathcal{N}\left(0,\mathbf{I}_{d\times d}\right)$
and  $\epsilon_i \overset{\mathrm{ind}}{\sim}  \mathcal{N}\left(0,100-d\right)$, where  $\beta =(1,\ldots,1)^{\top} \in  \mathbb{R}^{d} $.  Moreover, the treatment indicators \( Z_i \in \{0,1\} \) are assigned such that \( \sum_i Z_i = \lceil n/2 \rceil \).
 Clearly, the vector  of treatment  effects satisfies  $\tau^* =0$.

\textit{Scenario 6.} For our final fully synthetic model  we  set  $d=10$, $n =4000$, and generate data as
\[
\begin{array}{lll}
Y_i    &= &     (1-Z_i )Y_i(0) +     Z_i Y_i(1), \\
Z_i & \sim  & \mathrm{Binom}(1,   \frac{1}{2}      ),\\
Y_i(l) & \sim &  \mathcal{N}( f_l(X_i) , 1   ), \,\,\,\forall l  \in \{0,1\}, \\
f_0(x)& =&  x^{\top} \beta,   \,\,\,\forall x  \in [0,1]^d, \\
f_1(x) &=& f_0(x) +   \boldsymbol{1}_{ \{ x^{\top} \beta>1 \}  }   +   \boldsymbol{1}_{ \{ x^{\top} \beta<0.2 \}  }    ,   \,\,\,\forall x  \in [0,1]^d, \\
X_i  &\overset{\mathrm{ind}}{\sim} &   U[0,1]^d,\,\,\,\forall i \{1,\ldots,n\},
\end{array}
\]
where  $\beta    \in   \mathbb{R}^{p}$  with $\beta_j =  1$  if $j  \in \{1,\ldots, \floor{p/2} \}$, and   $\beta_j =  -1$  otherwise. Notice that in this  case  the treatment effect for unit $i$ is   $\tau^{*}_i =  \boldsymbol{1}_{ \{ X_i^{\top} \beta>1 \}  }   +   \boldsymbol{1}_{ \{ X_i^{\top} \beta<0.2 \}  }   $.

The final two scenarios are semi-synthetic designs, where covariates are taken from real datasets, while the outcomes are simulated according to a known data-generating process. This allows us to evaluate the estimators in realistic covariate spaces while preserving ground-truth treatment effects.

\textit{Scenario 7.}  We use the  setting   of the National JTPA Study  used in \cite{abadie2018endogenous}. This consists of a  National JTPA Study  evaluating  an employment and training program commissioned by the U.S. Department of Labor
in the late 1980s. Other authors  that have also analyzed this  data include  \cite{orr1996does} and \cite{bloom1997benefits}. In the JTPA study, based on a randomized assignment, subjects were a assigned into one of two groups. In the treatment group the subjects had access to JTPA services that included one of three possibilities: on-the-job training/job search assistance, classroom training, and other services. In contrast, subjects in the control group were not given access to the JTPA services. The raw data consists of 2530 units $\left(n_{o b s}=2530\right), 1681$ of which are treated observations and 849 are untreated observations. With these measurements, we generate simulated data as in \cite{abadie2018endogenous} where the treatment effect is zero across all units. The details are given in Appendix \ref{sec:ex2.2}.

\textit{Scenario 8.} We also consider an example used in \cite{abadie2018endogenous}. Specifically, we use the  Project STAR
class-size study, see for instance  \cite{krueger1999experimental}. In this data, 3,764 students who entered the study in kindergarten were assigned to small classes or to regular-size classes
(without a teacher’s aide). The outcome variable is standardized end-of-the-year kindergarten math test scores. As for covariates, some of these include race, eligibility for the free lunch program, and school attended.  With the original data we proceed as in \cite{abadie2018endogenous} and  simulate  data in a setting where the treatment effects are all zero. The details are given in Appendix \ref{sec:ex2.3}.

The results of our experiments in Scenarios 1--4 for the top six competitors are reported in Table~\ref{tab1}. The full comparison with all methods is deferred to the appendix (see Table~\ref{tab11}). There, we  can see  that  for Scenario 1 the best  methods are our proposed estimators, which is reasonable since  the treatment effect is zero  across units. AHB matches our performance at the configuration $(n,d) = (1600, 2)$ in this scenario. In Scenario 2, we do not compare CFL2 since such method is not suitable for experimental designs where the propensity score takes on a constant value. GRF achieves the best performance in low dimension ($d=2$), closely followed by CFL1. However, in higher dimension ($d=10$), CFL1 outperforms GRF, highlighting its robustness in more complex covariate settings.

In Scenario 3, we see that  CFL2 generally outperforms the competitors. At the configuration $(n,d) = (1600, 2)$, however, AIPW achieves the best performance, closely followed by MALTS, both of which slightly outperform CFL2. Notice that in Scenario 3 the treatment effect  is a function of the propensity score, where  the propensity score belongs to the family of probit models. This does not seem to be a problem for  our estimator  which provides  accurate  estimation  despite relying on logistic  regression in the first stage. Moreover, in Scenario 4, we see that  CFL1 outperforms the competitors. Again, since  the propensity score is constant, we do not benchmark CFL2.

Table~\ref{tab2} summarizes results for Scenarios 5--8. To facilitate comparison across scenarios, we standardized the outcome variable 
\(Y\) in each case by subtracting the mean and dividing by the standard deviation before applying each estimator. In Scenarios 5 and 6, which are fully synthetic and involve a larger sample size of 4000 observations, CFL1 achieves the best performance, outperforming all benchmark methods by a substantial margin. These results underscore the flexibility and robustness of CFL1, particularly in large-sample settings and under varied treatment effect structures, including the presence of high noise (Scenario 5) and sharp discontinuities (Scenario 6).

In the semi-synthetic Scenarios 7 and 8, which use covariates from real datasets, CFL1 remains highly competitive. In Scenario 7 (JTPA), ACW attains the lowest MSE, followed closely by CFL1 and WA2. While ACW performs best in Scenario 7, this is likely due to the structure of the real-world covariates aligning well with its stratification procedure, rather than the constancy of the treatment effect alone. In Scenarios 1–4, despite having constant or piecewise-constant treatment effects, ACW is consistently outperformed by our methods, particularly CFL1 (see Table~\ref{tab11} in Appendix~\ref{add-simu-results}). This suggests that flexible subgroup discovery and modeling heterogeneity, as in CFL1 and CFL2, are advantageous even when effects are simple.

In Scenario 8 (Project STAR), which involves a high-dimensional covariate space and a null treatment effect, CFL1 again outperforms all competitors, including ACW. These findings demonstrate that CFL1 performs reliably not only with realistic covariate distributions but also under more complex or high-dimensional conditions where accurate estimation requires stronger regularization and global modeling capabilities. Overall, our results highlight the strong empirical performance of both CFL1 and CFL2 across diverse experimental setups, with one of the two—CFL1 or CFL2—consistently ranking among the top two methods in every scenario.

\begin{table}[t!]
\centering
\caption{\label{tab2} Performance evaluations (median $\pm$ standard error) over 50 Monte Carlo simulations for synthetic (Scenarios 5–6) and semi-synthetic (Scenarios 7–8) setups. \textbf{Bold} indicates the best method, and \textit{italic} indicates the second-best. For Scenarios 5–6, $(n,d) = (4000, 10)$; for Scenario 7, $(n,d) = (3764, 79)$; and for Scenario 8, $(n,d) = (2530, 18)$.}
\medskip
{\fontsize{11}{20}\selectfont
\setlength{\tabcolsep}{5pt}
\begin{tabular}{|l|c|c|c|c|}
\hline
Method & Scenario 5 & Scenario 6 & Scenario 7 & Scenario 8 \\
\hline
CFL1 & \textbf{0.078 $\pm$ 0.017} & \textbf{0.074 $\pm$ 0.015} & \textit{\textbf{0.042 $\pm$ 0.008}} & \textbf{0.308 $\pm$ 0.071} \\
\hline
WA1  & 0.489 $\pm$ 0.077 & 0.423 $\pm$ 0.072 & 0.103 $\pm$ 0.020 & 0.592 $\pm$ 0.089 \\
\hline
WA2  & 0.352 $\pm$ 0.066 & 0.295 $\pm$ 0.057 & 0.044 $\pm$ 0.010 & 0.435 $\pm$ 0.075 \\
\hline
GRF  & 0.319 $\pm$ 0.058 & 0.204 $\pm$ 0.041 & 0.045 $\pm$ 0.012 & 0.382 $\pm$ 0.068 \\
\hline
ACW  & 0.462 $\pm$ 0.070 & \textit{\textbf{0.131 $\pm$ 0.030}} & \textbf{0.015 $\pm$ 0.004} & 0.501 $\pm$ 0.086 \\
\hline
BART & \textit{\textbf{0.248 $\pm$ 0.044}} & 0.166 $\pm$ 0.037 & 0.057 $\pm$ 0.011 & \textit{\textbf{0.329 $\pm$ 0.069}} \\
\hline
AIPW & 0.278 $\pm$ 0.051 & 0.149 $\pm$ 0.035 & 0.044 $\pm$ 0.010 & 0.399 $\pm$ 0.073 \\
\hline
GM    & 0.376 $\pm$ 0.061 & 0.231 $\pm$ 0.048 & 0.061 $\pm$ 0.013 & 0.449 $\pm$ 0.078 \\
\hline
MALTS & 0.341 $\pm$ 0.059 & 0.188 $\pm$ 0.039 & 0.065 $\pm$ 0.013 & 0.371 $\pm$ 0.072 \\
\hline
LCM   & 0.302 $\pm$ 0.056 & 0.226 $\pm$ 0.043 & 0.071 $\pm$ 0.014 & 0.401 $\pm$ 0.079 \\
\hline
ADD-MALTS & 0.265 $\pm$ 0.050 & 0.132 $\pm$ 0.031 & 0.068 $\pm$ 0.013 & 0.343 $\pm$ 0.071 \\
\hline
AHB   & 0.312 $\pm$ 0.055 & 0.161 $\pm$ 0.034 & 0.066 $\pm$ 0.012 & 0.361 $\pm$ 0.073 \\
\hline
FLAME & 0.407 $\pm$ 0.066 & 0.273 $\pm$ 0.049 & 0.073 $\pm$ 0.014 & 0.371 $\pm$ 0.075 \\
\hline
DAME  & 0.331 $\pm$ 0.057 & 0.211 $\pm$ 0.045 & 0.060 $\pm$ 0.012 & 0.355 $\pm$ 0.072 \\
\hline
\end{tabular}
}
\end{table}

\begin{figure}[ht!]
\begin{center}
\includegraphics[width=2.58in,height=2.3in]{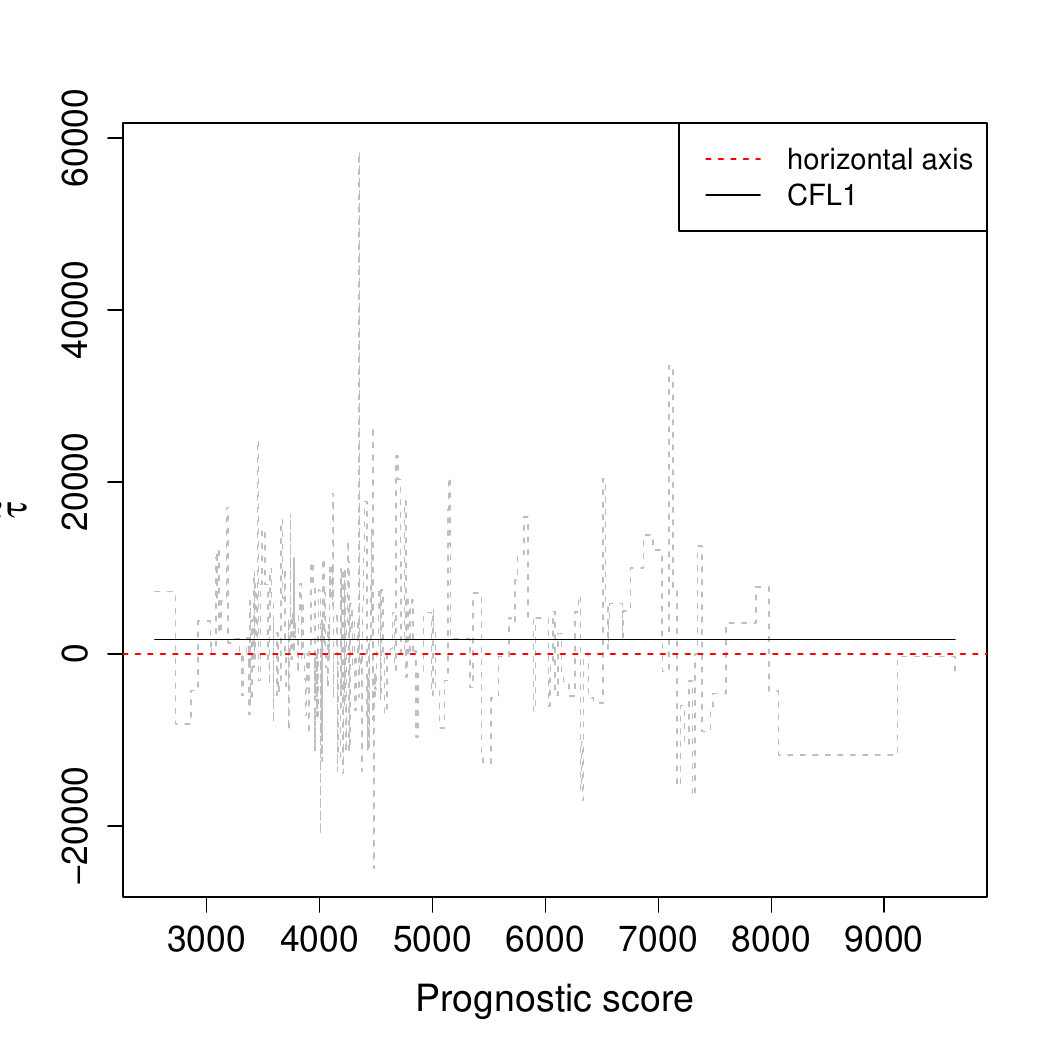} 
\includegraphics[width=2.58in,height=2.3in]{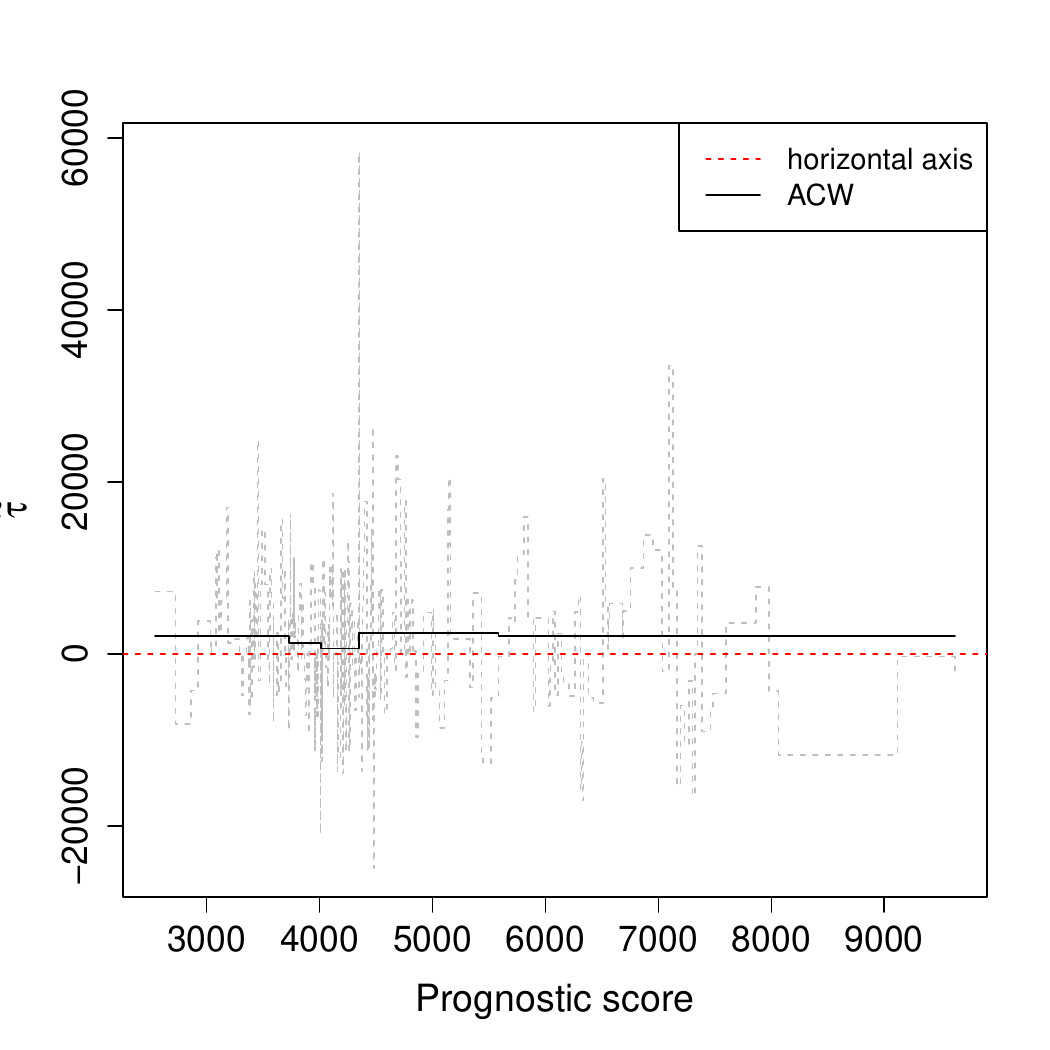} 	
\caption{\label{fig4}  For the NSW example from left to right the two panels  show the treatment effect estimates based on  causal fused lasso  with prognostic score (CFL1) and the ACW method from \cite{abadie2018endogenous}.}
\end{center}
\end{figure}

\begin{figure}
\begin{center}
\includegraphics[width=2.58in,height=2.3in]{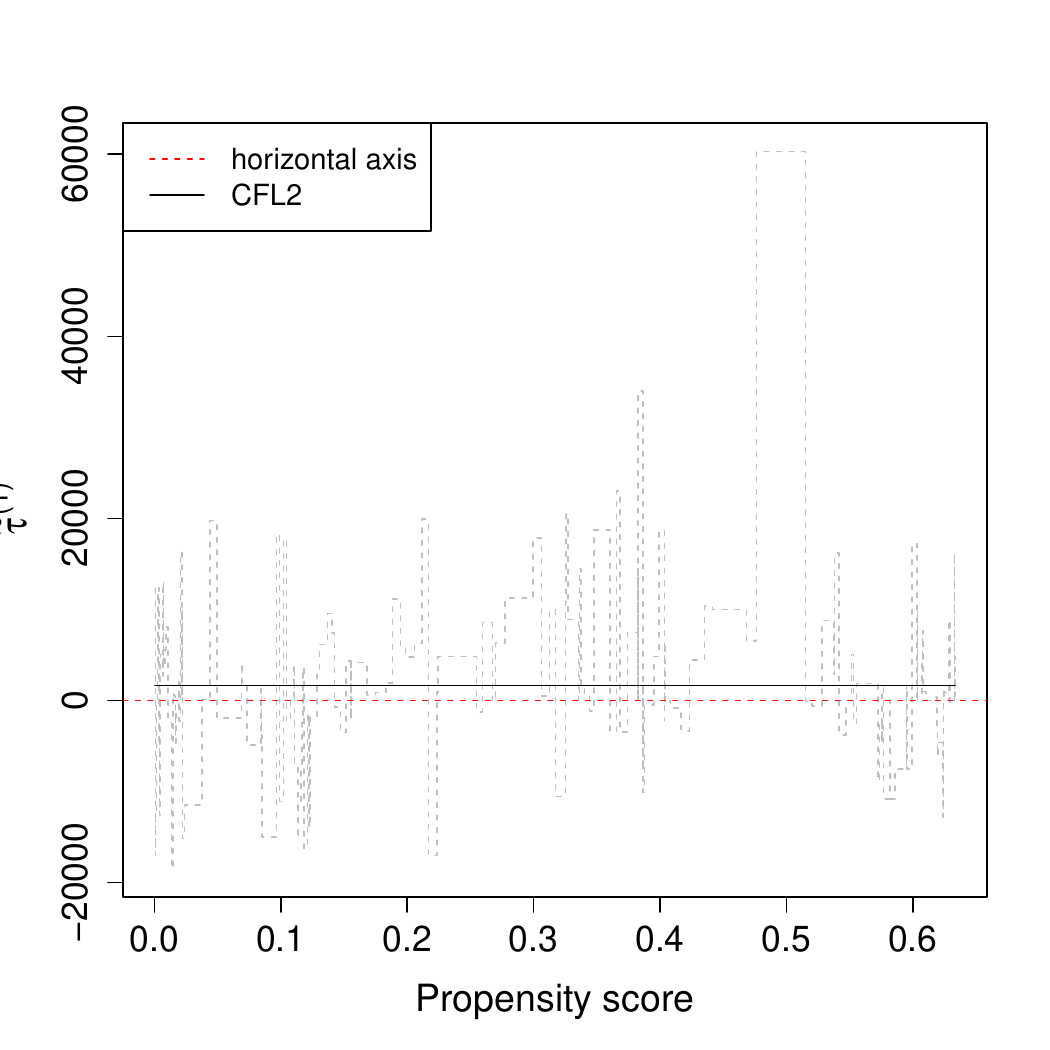} 
\includegraphics[width=2.58in,height=2.3in]{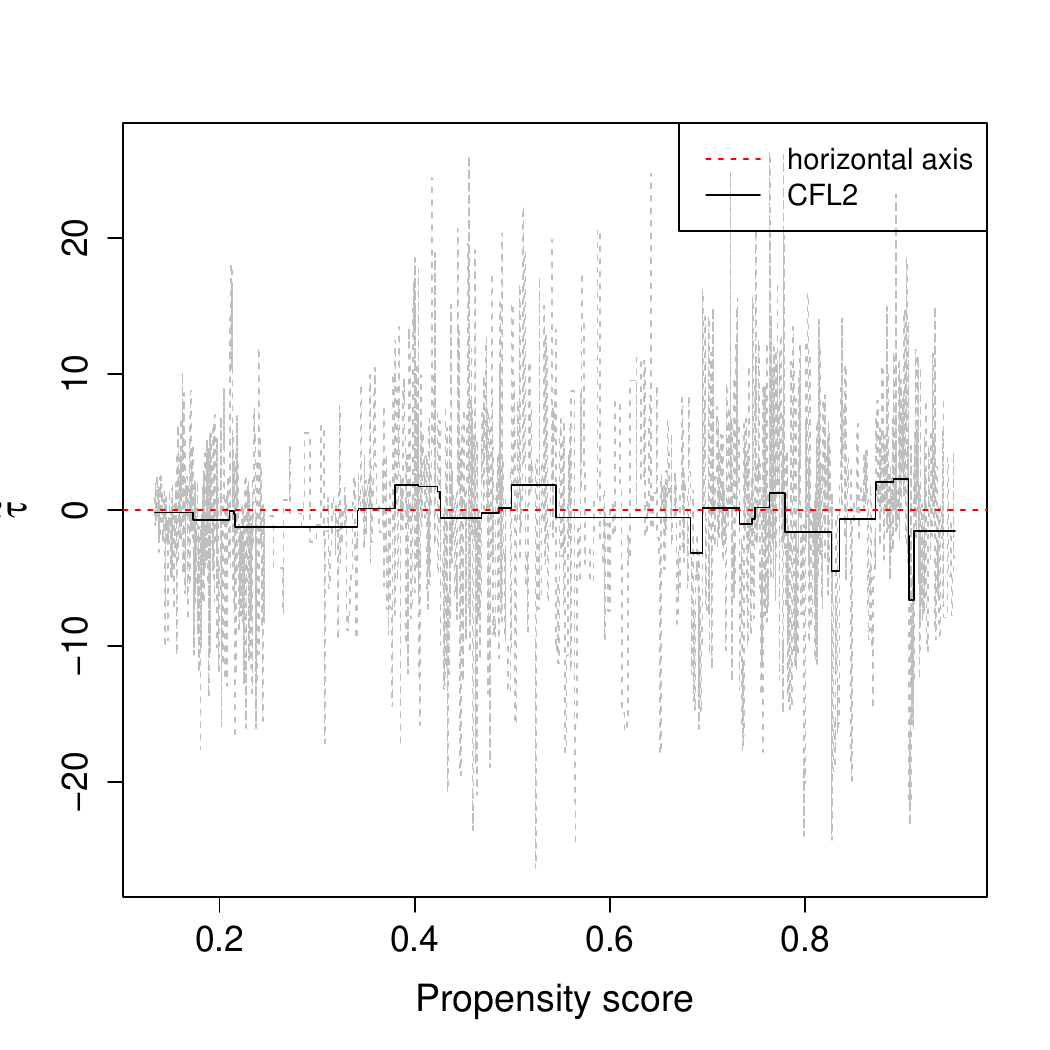} 
\caption{\label{fig5} The panel on the left shows the estimated treatment effects on the \emph{treated} based on the CFL2 estimator for the NSW observational data. The right panel then shows the corresponding treatment effect estimates of CFL2 for the NHANES data.}
\end{center}
\end{figure}

\subsection{National Supported Work  data}

\subsubsection{Randomized example}
\label{sec:real1}

To illustrate the behavior of our estimators, we use the  data  from \cite{lalonde1986evaluating,dehejia1999causal,dehejia2002propensity}. This dataset consists of a 445 sub-sample from the National Supported Work Demonstration (NSW). The NSW  was a program implemented in the mid-1970s  in which the treatment group consisted of randomly selected subjects to gain 12 to 18 months of work experience and  260  subjects in a control group.   The response  variable is  the post-study earnings in 1978. The predictor variables include  age, education, indicator of Black and Hispanic for race, marital status, high-school degree indicator, earnings in 1974, and earnings in 1975.

To construct our estimator, we first    estimate the prognostic score using the data from the control  group and running a linear regression model. 
With the prognostic scores, we then  compute an ordering and run the fused lasso  leading to our CFL1 estimator. This is  depicted in Figure \ref{fig4}. There, we also  see the estimates based on the method ACW from \cite{abadie2018endogenous}. Both CFL1 and ACW estimate small positive treatment effects, which is consistent with expectations given the nature of the NSW program—participants in the treatment group would be expected to benefit, in terms of future earnings, from the work experience they received. Interestingly, from an interpretability perspective, CFL1 estimates a single group effect, suggesting little to no meaningful heterogeneity in treatment effects. This finding is broadly in line with ACW, which reports slightly varying effects across four prespecified groups—groups that are not derived from the data but fixed in advance. This contrast might highlight a key strength of CFL1: its ability to adaptively detect underlying structure (or confirm its absence) directly from the data, rather than relying on predefined subgroup classifications.


\subsubsection{Observational  example}
\label{sec:real2}

For our second example based on the NSW data, we  combine the  185  observations in the treatment group  of the data from Section \ref{sec:real1} with the largest of the six observational control groups  constructed by Lalonde\footnote{Dataset is available here http://users.nber.org/~rdehejia/nswdata2.html}. This results in a total of  16177 samples. Due to the observational nature of the dataset,  we  run our propensity score based estimator  from Section \ref{sec:propensity} by only estimating treatment effects on the treated. Thus, our estimator is the one described in Corollary \ref{cor1} which we denote as CFL2.  
As shown in Figure \ref{fig5}, CFL2 estimates a constant positive treatment effect, consistent with our findings in Section \ref{sec:real1}. Specifically, CFL2 again estimates a single group effect, suggesting that there may be little to no heterogeneity in treatment effects—though a small positive effect is still detected.

\subsection{National Health and Nutrition Examination Survey}

In our final example, we use data from the   2007–2008 National Health and Nutrition Examination Survey (NHANES). The data consist of  2330 children and their participation in the National School Lunch or the School Breakfast programs in order to assess
the effectiveness of such meal programs in increasing  body mass index (BMI). In the study 1284 randomly selected children participated in the meal programs while 1046  did not. The predictor
variables are  age, gender, age of adult respondent, and categorical variables such as Black
race, Hispanic race,  whether the family of the child is  above 200\% of the federal poverty level, participation in Special Supplemental Nutrition program, Participation in food stamp program,
childhood food security, any type of insurance, and gender of the adult respondent.

Similarly,  as before, we run our propensity score based estimator   (CFL2) and  show the results in Figure \ref{fig5}. We can see that the sign of estimated treatment effects varies depending on the value of the propensity score. The latter was estimated by  logistic regression. Our findings for the treatment effects coincide with several authors who found positive and negative average treatment effects as discussed in \cite{chan2016globally}. In particular, we find that when the estimated propensity score is low (below 0.34), the treatment effect is predominantly negative. This suggests that individuals who are unlikely to participate in the program may be more prone to experiencing adverse effects. Further inspection of the data reveals that all individuals with propensity scores below 0.34 come from families with incomes above 200\% of the federal poverty level.

For propensity scores in the range of 0.34 to 0.68, the estimated treatment effects are generally small or slightly positive, indicating that the meal programs may offer modest benefits to individuals with a moderate likelihood of participation.

In contrast, for subjects with higher propensity scores, the estimated treatment effects display greater heterogeneity. While some subgroups exhibit small positive effects, others show negative ones. Notably, the largest positively affected subgroup falls within the 0.87 to 0.91 propensity score range. However, individuals with propensity scores above 0.91 consistently exhibit negative treatment effects. Examining the data, we find that all individuals with propensity scores above 0.87 fall below the 200\% poverty threshold. Among those in the 0.87–0.91 range, only 13\% experienced childhood food insecurity, whereas 66\% of those above 0.91 did.

Overall, this analysis highlights the heterogeneous nature of the data and the value of our method in identifying subgroups with differing treatment responses.

\subsection{Right Heart Catheterization (RHC) Study}

We now consider a clinical example drawn from an observational study examining the effects of right heart catheterization (RHC) on short-term survival outcomes in critically ill patients admitted to an intensive care unit (ICU). RHC is a diagnostic procedure used to assess cardiac function by directly measuring pulmonary pressures and cardiac output. While it can inform treatment decisions, it also poses non-negligible procedural risks, and its clinical benefit has been the subject of ongoing debate. Due to the ethical and logistical challenges of conducting randomized trials in this setting, observational data has been widely used to evaluate the causal effect of RHC on mortality outcomes.

The data consists of $2,707$ patients, with $1,103$ receiving RHC treatment within the first 24 hours of admission ($Z=1$), and $1,604$ not receiving the procedure ($Z=0$). The binary outcome $Y$ indicates whether the patient died within 180 days of hospital admission. Each patient is characterized by a set of $72$ covariates, which include continuous measurements, binary indicators, and dummy variables derived from categorical attributes. These variables capture demographic, clinical, and treatment-related characteristics relevant to patient prognosis.

We apply the propensity score-based fused lasso estimator CFL2 as defined in Section~\ref{sec:propensity} (Equation~(\ref{eqn:fused_lasso5})), focusing on the estimation of treatment effects for the treated population. Following standard preprocessing steps from prior clinical studies, we first exclude all patient records containing missing covariate values. Among the remaining 2,707 individuals, we drop any covariates with no variability and transform all categorical variables into dummy variables using one-hot encoding. This results in a covariate matrix with 72 features, including continuous, binary, and dummy-encoded categorical variables. We then fit a logistic regression model to estimate propensity scores using the full set of encoded covariates as predictors.

\begin{figure}[!htb]
\centering
\includegraphics[width=3.58in,height=4.in]{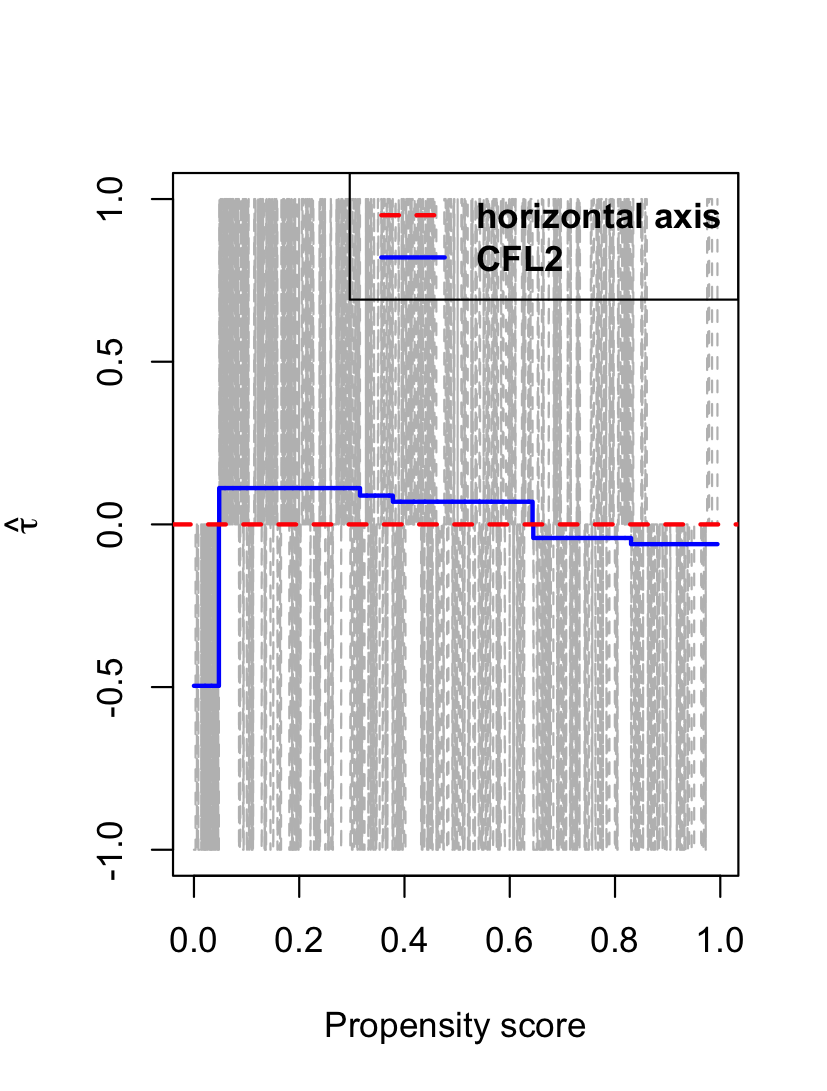}
\caption{Estimated treatment effects across the range of estimated propensity scores using CFL2. The stepwise structure of the estimated effects highlights heterogeneous responses to RHC treatment.}
\label{fig:rhc}
\end{figure}

Figure~\ref{fig:rhc} displays the estimated treatment effects across the range of estimated propensity scores using the CFL2 estimator. We observe that for most strata of the population, identified by similar estimated propensity scores, the estimated effect of RHC is either negative or close to zero. This suggests that RHC may offer limited or no survival benefit for the majority of patients, and may even be associated with worse short-term outcomes in some subgroups. These findings support the view that the use of RHC should be carefully evaluated on a patient-specific basis. The estimated step function highlights the ability of CFL2 to adaptively identify subgroups with differing treatment responses without imposing pre-specified strata. This interpretation is in line with prior work such as \cite{abadie2018endogenous}, which considers population partitions into a small number of subgroups with distinct average treatment effects. Our estimator adaptively detects such subgroup heterogeneity from data without requiring these partitions to be specified in advance. The presence of clear jumps in the estimated treatment effect across strata (as shown in Figure~\ref{fig:rhc}) aligns with this interpretation and illustrates the ability of CFL2 to flexibly model heterogeneity in observational settings.

\section{ Conclusion}

In this paper, we studied two methods for estimating heterogeneous treatment effects. The first approach, based on the prognostic score, is designed for randomized experiments. It involves constructing a prognostic score and then applying the fused lasso, using as input a noisy estimate of the treatment effect derived from matching, with the prognostic score serving as the covariate.

The second approach, which relies on the propensity score, is suitable for observational studies. It follows the same general structure as the first method, but substitutes the propensity score in place of the prognostic score.

A key strength of both methods lies in their simplicity and their usefulness as exploratory tools. For each, we provide theoretical guarantees in the form of finite-sample bounds on the mean squared error for estimating heterogeneous treatment effects. However, as with much of the existing literature on the fused lasso, our methods do not currently offer confidence bands with theoretical guarantees. One promising direction for addressing this limitation is to explore residual bootstrap techniques \citep{efron1992bootstrap}, potentially incorporating ideas from the nonparametric approach in \cite{padilla2024confidence}. We leave this extension for future research. 


On a related note, since both of our methods accommodate treatment effects that may exhibit discontinuities as a function of the score, it is natural to consider applications where such discontinuities arise organically. As one reviewer suggested, this setting might be particularly relevant in the social sciences \citep{wong2010addressing}, where individuals may become eligible for various forms of assistance or educational programs upon exceeding specific thresholds—such as standardized test scores or income cutoffs. 

Finally, another promising direction is to extend the results of Theorems~\ref{thm:prognostic_upper_bound} and~\ref{thm:tv} to out-of-sample settings. Specifically, for Theorem~\ref{thm:prognostic_upper_bound}, suppose $X_{n+1}$ is drawn independently from the same distribution as $X_1, \ldots, X_n$. We conjecture that it is possible to derive an upper bound on
\begin{equation}
\label{eqn:conjecture}
\mathbb{E}(  (\rho_{n+1}^* - \hat{\tau}_{n+1} )^2 )
\end{equation}
where  $\rho_{n+1}^* =  E\{Y (1)|g(X) =  g(X_{n+1}),  Z = 1\} -  E\{ Y (0)|g(X) =  g(X_{n+1}),  Z = 0\}$, and $\hat{\tau}_{n+1}$ is constructed via interpolation6y using the trained estimators $\hat{\tau}_1, \ldots, \hat{\tau}_n$. We expect that the upper bound for the quantity in (\ref{eqn:conjecture}) will match, up to logarithmic factors, the upper bound in (\ref{eqn:up1}). However, a formal proof of this result is nontrivial and is left for future work. We also conjecture that an analogous result may hold for Theorem~\ref{thm:tv}, replacing the prognostic score with the propensity score.

\appendix

\newpage
\section{Additional numerical results}
\label{add-simu-results}

In this appendix, we present an additional synthetic experiment, labeled \textit{Scenario 9}. This data-generating process is designed to challenge estimators with both strong nonlinearity in the outcome functions and treatment effect heterogeneity that interacts with the covariates in a complex, nonlinear fashion. In particular, it combines nonlinear transformations, sinusoidal interactions, and a nontrivial treatment assignment mechanism driven by a noisy logistic function of covariates.

\vspace{-4mm}
\begin{table}[h!]
\centering
\caption{\label{tab3} Performance evaluations (median $\pm$ standard error) over 50 Monte Carlo simulations for the additional nonlinear synthetic Scenario 9. $(n,d) = (4000, 10)$. \textbf{Bold} indicates the best method, and \textit{italic} indicates the second-best.}
\medskip
{\fontsize{13}{18}\selectfont
\setlength{\tabcolsep}{5pt}
\begin{tabular}{|l|c|}
\hline
Method & Scenario 9 \\
\hline
CFL1 & 0.267 $\pm$ 0.046 \\
\hline
CFL2 & \textit{\textbf{0.189 $\pm$ 0.039}} \\
\hline
WA1 & 0.395 $\pm$ 0.063 \\
\hline
WA2 & 0.319 $\pm$ 0.052 \\
\hline
GRF & 0.248 $\pm$ 0.045 \\
\hline
ACW & 0.401 $\pm$ 0.066 \\
\hline
BART & \textbf{0.174 $\pm$ 0.036} \\
\hline
AIPW & 0.305 $\pm$ 0.050 \\
\hline
GM & 0.362 $\pm$ 0.058 \\
\hline
MALTS & 0.331 $\pm$ 0.054 \\
\hline
LCM & 0.352 $\pm$ 0.057 \\
\hline
ADD-MALTS & 0.201 $\pm$ 0.041 \\
\hline
AHB & 0.369 $\pm$ 0.059 \\
\hline
FLAME & 0.402 $\pm$ 0.063 \\
\hline
DAME & 0.384 $\pm$ 0.061 \\
\hline
\end{tabular}
}
\end{table}
\vspace{-3.5mm}

\textit{Scenario 9.} This synthetic is inspired by nonlinear regression examples commonly used in benchmarking flexible estimators. The data is generated as follows, for $(n,d) = (4000,10)$:
\[
\begin{aligned}
& x_{i,1}, \ldots, x_{i,10} \overset{\mathrm{iid}}{\sim} \mathcal{U}(0,1), \quad \epsilon_{i,(0)}, \epsilon_{i,(1)}, \epsilon_{i,(\mathrm{treat})} \overset{\mathrm{iid}}{\sim} \mathcal{N}(0,1), \\
& Y_i(0) = 10 \sin \left( \pi x_{i,1} x_{i,2} \right) + 20(x_{i,3} - 0.5)^2 + 10 x_{i,4} + 5 x_{i,5} + \epsilon_{i,(0)}, \\
& Y_i(1) = Y_i(0) + x_{i,3} \cos \left( \pi x_{i,1} x_{i,2} \right) + \epsilon_{i,(1)}, \\
& Z_i = \boldsymbol{1}_{\left\{ \mathrm{expit}(x_{i,1} + x_{i,2} - 0.5 + \epsilon_{i,(\mathrm{treat})}) > 0.5 \right\}}, \\
& Y_i = (1 - Z_i) Y_i(0) + Z_i Y_i(1),
\end{aligned}
\]
where $\mathrm{expit}(u) = 1 / (1 + e^{-u})$ denotes the logistic sigmoid function.

This setup introduces modeling difficulties not only through the nonlinearity of the outcome regression, but also by encoding treatment effects that depend on both the covariates and their interactions. As shown in Table~\ref{tab3}, even in this challenging setting, our proposed estimator CFL2 performs competitively and achieves the second-best performance overall, closely trailing BART. Notably, CFL2 outperforms all other benchmark methods by a clear margin. This result highlights the flexibility and generalization strength of our proposed method under complex nonlinear conditions.

For completeness, Table~\ref{tab11} in the appendix summarizes the performance of all estimators across Scenarios 1–4.

\newpage

\begin{table}[h!]
\centering
\caption{\label{tab11} Performance evaluation (median $\pm$ standard error)  for synthetic scenarios with varying $(n,d)$. \textbf{Bold} indicates the best, and \textit{italic} indicates the second-best.}
\medskip
{\fontsize{8.5}{7.5}\selectfont
\setlength{\tabcolsep}{6pt}
\begin{tabular}{|l|l|l|l|l|l|}
\hline
Method & $(n,d)$ & Scenario 1 & Scenario 2 & Scenario 3 & Scenario 4 \\
\hline
\multirow{4}{*}{CFL1} 
& (800, 2)  & \textbf{0.004 $\pm$ 0.0012} & 0.195 $\pm$ 0.055 & 0.181 $\pm$ 0.037 & \textbf{0.301 $\pm$ 0.065} \\
& (1600, 2) & \textbf{0.003 $\pm$ 0.0009} & 0.108 $\pm$ 0.032 & 0.136 $\pm$ 0.028 & \textbf{0.183 $\pm$ 0.044} \\
& (800, 10) & \textbf{0.005 $\pm$ 0.0013} & \textbf{0.503 $\pm$ 0.083} & 0.412 $\pm$ 0.082 & \textbf{0.450 $\pm$ 0.086} \\
& (1600, 10)& \textbf{0.003 $\pm$ 0.0010} & \textbf{0.319 $\pm$ 0.069} & 0.293 $\pm$ 0.063 & \textbf{0.277 $\pm$ 0.068} \\
\hline
\multirow{4}{*}{CFL2} 
& (800, 2)  & 0.011 $\pm$ 0.0023 & * & \textbf{0.074 $\pm$ 0.017} & * \\
& (1600, 2) & 0.004 $\pm$ 0.0011 & * & 0.051 $\pm$ 0.014 & * \\
& (800, 10) & 0.016 $\pm$ 0.0038 & * & \textbf{0.146 $\pm$ 0.033} & * \\
& (1600, 10)& 0.005 $\pm$ 0.0016 & * & \textbf{0.109 $\pm$ 0.027} & * \\
\hline
\multirow{4}{*}{WA1} 
& (800, 2)  & 0.045 $\pm$ 0.011 & 0.495 $\pm$ 0.104 & 0.226 $\pm$ 0.044 & 4.923 $\pm$ 0.819 \\
& (1600, 2) & 0.029 $\pm$ 0.009 & 0.199 $\pm$ 0.057 & 0.164 $\pm$ 0.031 & 3.228 $\pm$ 0.607 \\
& (800, 10) & 0.067 $\pm$ 0.013 & 0.773 $\pm$ 0.139 & 0.550 $\pm$ 0.089 & 6.528 $\pm$ 0.908 \\
& (1600, 10)& 0.068 $\pm$ 0.012 & 0.534 $\pm$ 0.127 & 0.499 $\pm$ 0.086 & 6.146 $\pm$ 0.875 \\
\hline
\multirow{4}{*}{WA2} 
& (800, 2)  & 0.012 $\pm$ 0.0040 & 0.264 $\pm$ 0.069 & 0.143 $\pm$ 0.029 & 2.922 $\pm$ 0.486 \\
& (1600, 2) & 0.010 $\pm$ 0.0026 & 0.164 $\pm$ 0.043 & 0.106 $\pm$ 0.023 & 2.049 $\pm$ 0.417 \\
& (800, 10) & 0.007 $\pm$ 0.0021 & 0.794 $\pm$ 0.147 & 0.362 $\pm$ 0.063 & 5.796 $\pm$ 0.768 \\
& (1600, 10)& 0.003 $\pm$ 0.0015 & 0.723 $\pm$ 0.116 & 0.316 $\pm$ 0.059 & 5.322 $\pm$ 0.736 \\
\hline
\multirow{4}{*}{GRF} 
& (800, 2)  & 0.013 $\pm$ 0.0031 & \textbf{0.152 $\pm$ 0.048} & 0.143 $\pm$ 0.031 & 1.788 $\pm$ 0.318 \\
& (1600, 2) & 0.011 $\pm$ 0.0024 & \textbf{0.063 $\pm$ 0.018} & 0.106 $\pm$ 0.024 & 0.771 $\pm$ 0.177 \\
& (800, 10) & 0.010 $\pm$ 0.0032 & 0.565 $\pm$ 0.092 & 0.408 $\pm$ 0.078 & 3.261 $\pm$ 0.557 \\
& (1600, 10)& 0.006 $\pm$ 0.0021 & 0.350 $\pm$ 0.077 & 0.359 $\pm$ 0.070 & 1.291 $\pm$ 0.308 \\
\hline
\multirow{4}{*}{ACW} 
& (800, 2)  & 0.017 $\pm$ 0.0034 & 0.402 $\pm$ 0.088 & 0.213 $\pm$ 0.043 & 2.203 $\pm$ 0.378 \\
& (1600, 2) & 0.014 $\pm$ 0.0025 & 0.297 $\pm$ 0.067 & 0.181 $\pm$ 0.037 & 1.654 $\pm$ 0.324 \\
& (800, 10) & 0.029 $\pm$ 0.0056 & 0.638 $\pm$ 0.096 & 0.489 $\pm$ 0.081 & 4.394 $\pm$ 0.672 \\
& (1600, 10)& 0.024 $\pm$ 0.0049 & 0.412 $\pm$ 0.084 & 0.443 $\pm$ 0.075 & 3.612 $\pm$ 0.528 \\
\hline
\multirow{4}{*}{BART} 
& (800, 2)  & 0.009 $\pm$ 0.0026 & 0.184 $\pm$ 0.052 & 0.102 $\pm$ 0.026 & 1.842 $\pm$ 0.334 \\
& (1600, 2) & 0.006 $\pm$ 0.0017 & 0.095 $\pm$ 0.027 & 0.075 $\pm$ 0.019 & 1.021 $\pm$ 0.266 \\
& (800, 10) & 0.021 $\pm$ 0.0048 & 0.592 $\pm$ 0.096 & 0.351 $\pm$ 0.068 & 3.482 $\pm$ 0.562 \\
& (1600, 10)& 0.015 $\pm$ 0.0035 & 0.412 $\pm$ 0.085 & 0.306 $\pm$ 0.060 & 2.693 $\pm$ 0.504 \\
\hline
\multirow{4}{*}{AIPW} 
& (800, 2)  & 0.015 $\pm$ 0.0040 & 0.235 $\pm$ 0.058 & 0.109 $\pm$ 0.027 & 1.918 $\pm$ 0.362 \\
& (1600, 2) & 0.010 $\pm$ 0.0025 & 0.132 $\pm$ 0.036 & \textbf{0.049 $\pm$ 0.015} & 1.089 $\pm$ 0.291 \\
& (800, 10) & 0.023 $\pm$ 0.0052 & 0.672 $\pm$ 0.112 & 0.342 $\pm$ 0.059 & 3.988 $\pm$ 0.611 \\
& (1600, 10)& 0.019 $\pm$ 0.0041 & 0.498 $\pm$ 0.096 & 0.286 $\pm$ 0.051 & 3.134 $\pm$ 0.552 \\
\hline
\multirow{4}{*}{GM} 
& (800, 2)  & 0.020 $\pm$ 0.0051 & 0.300 $\pm$ 0.064 & 0.202 $\pm$ 0.041 & 2.421 $\pm$ 0.394 \\
& (1600, 2) & 0.005 $\pm$ 0.0013 & 0.187 $\pm$ 0.048 & 0.072 $\pm$ 0.020 & 1.618 $\pm$ 0.351 \\
& (800, 10) & 0.035 $\pm$ 0.0067 & 0.689 $\pm$ 0.103 & 0.489 $\pm$ 0.075 & 5.088 $\pm$ 0.779 \\
& (1600, 10)& 0.0062 $\pm$ 0.0015 & 0.522 $\pm$ 0.091 & 0.088 $\pm$ 0.023 & 4.523 $\pm$ 0.721 \\
\hline
\multirow{4}{*}{MALTS} 
& (800, 2)  & 0.018 $\pm$ 0.0042 & 0.248 $\pm$ 0.057 & 0.098 $\pm$ 0.023 & 2.156 $\pm$ 0.391 \\
& (1600, 2) & 0.011 $\pm$ 0.0031 & 0.112 $\pm$ 0.028 & \textit{\textbf{0.050 $\pm$ 0.014}} & 1.141 $\pm$ 0.283 \\
& (800, 10) & 0.030 $\pm$ 0.0061 & 0.618 $\pm$ 0.102 & 0.238 $\pm$ 0.051 & 3.274 $\pm$ 0.553 \\
& (1600, 10)& 0.024 $\pm$ 0.0054 & 0.401 $\pm$ 0.087 & 0.212 $\pm$ 0.046 & 2.148 $\pm$ 0.472 \\
\hline
\multirow{4}{*}{LCM} 
& (800, 2)  & 0.014 $\pm$ 0.0038 & 0.248 $\pm$ 0.051 & 0.129 $\pm$ 0.028 & 1.697 $\pm$ 0.323 \\
& (1600, 2) & 0.0042 $\pm$ 0.0012 & 0.157 $\pm$ 0.035 & 0.069 $\pm$ 0.018 & 0.996 $\pm$ 0.262 \\
& (800, 10) & 0.019 $\pm$ 0.0043 & 0.562 $\pm$ 0.089 & 0.278 $\pm$ 0.054 & 2.643 $\pm$ 0.497 \\
& (1600, 10)& 0.0055 $\pm$ 0.0014 & 0.384 $\pm$ 0.077 & 0.081 $\pm$ 0.021 & 1.839 $\pm$ 0.421 \\
\hline
\multirow{4}{*}{ADD-MALTS} 
& (800, 2)  & 0.015 $\pm$ 0.0040 & 0.276 $\pm$ 0.059 & 0.122 $\pm$ 0.026 & 1.823 $\pm$ 0.332 \\
& (1600, 2) & 0.0045 $\pm$ 0.0011 & 0.169 $\pm$ 0.036 & 0.063 $\pm$ 0.017 & 1.184 $\pm$ 0.271 \\
& (800, 10) & 0.017 $\pm$ 0.0038 & 0.588 $\pm$ 0.095 & 0.256 $\pm$ 0.051 & 2.392 $\pm$ 0.474 \\
& (1600, 10)& 0.0049 $\pm$ 0.0013 & 0.392 $\pm$ 0.081 & 0.072 $\pm$ 0.019 & 1.601 $\pm$ 0.409 \\
\hline
\multirow{4}{*}{AHB} 
& (800, 2)  & 0.010 $\pm$ 0.0028 & 0.231 $\pm$ 0.051 & 0.117 $\pm$ 0.025 & 1.503 $\pm$ 0.297 \\
& (1600, 2) & \textbf{0.003 $\pm$ 0.0009} & 0.136 $\pm$ 0.031 & 0.081 $\pm$ 0.017 & 0.908 $\pm$ 0.239 \\
& (800, 10) & 0.013 $\pm$ 0.0034 & 0.492 $\pm$ 0.083 & 0.234 $\pm$ 0.048 & 2.182 $\pm$ 0.459 \\
& (1600, 10)& 0.005 $\pm$ 0.0016 & 0.331 $\pm$ 0.071 & 0.191 $\pm$ 0.041 & 1.523 $\pm$ 0.382 \\
\hline
\multirow{4}{*}{FLAME} 
& (800, 2)  & 0.041 $\pm$ 0.009 & 0.298 $\pm$ 0.069 & 0.319 $\pm$ 0.062 & 4.121 $\pm$ 0.754 \\
& (1600, 2) & 0.030 $\pm$ 0.008 & 0.219 $\pm$ 0.052 & 0.271 $\pm$ 0.056 & 3.589 $\pm$ 0.645 \\
& (800, 10) & 0.064 $\pm$ 0.012 & 0.743 $\pm$ 0.135 & 0.613 $\pm$ 0.089 & 6.621 $\pm$ 0.922 \\
& (1600, 10)& 0.060 $\pm$ 0.011 & 0.701 $\pm$ 0.127 & 0.538 $\pm$ 0.082 & 6.179 $\pm$ 0.832 \\
\hline
\multirow{4}{*}{DAME} 
& (800, 2)  & 0.043 $\pm$ 0.010 & 0.334 $\pm$ 0.074 & 0.307 $\pm$ 0.058 & 4.308 $\pm$ 0.737 \\
& (1600, 2) & 0.038 $\pm$ 0.009 & 0.293 $\pm$ 0.069 & 0.278 $\pm$ 0.054 & 3.945 $\pm$ 0.692 \\
& (800, 10) & 0.072 $\pm$ 0.014 & 0.803 $\pm$ 0.141 & 0.622 $\pm$ 0.092 & 6.734 $\pm$ 0.941 \\
& (1600, 10)& 0.075 $\pm$ 0.013 & 0.762 $\pm$ 0.134 & 0.577 $\pm$ 0.088 & 6.502 $\pm$ 0.910 \\
\hline

\end{tabular}
}
\end{table}

\newpage
\section{Possible extensions}

A natural  extension of the  estimators  described in  Sections \ref{sec:prognostic}  and  \ref{sec:propensity} is to consider the case  where  the  number of  covariates can be large, perhaps  $d >>n$, but only a small number of them plays a role in the prognostic score (propensity score).  In the case of the prognostic score based estimator, it is reasonable to estimate  $g$ with lasso    regression \citep{tibshirani1996regression}.  The resulting  procedure  would be the same as in Section \ref{sec:prognostic}, except that we would define $\hat{g}(x) =  x^{\top} \hat{\theta}$ where 
\[
\displaystyle  \hat{\theta}   \,=\,    \underset{ \theta \in \mathbb{R}^d }{\arg \min }\,  \,  \left\{\frac{1}{m} \sum_{i=1}^m  \left(Y_i^{\prime} -  X_i^{\prime \top} \theta  \right)^2  +   \nu \sum_{ j=1  }^{d} \vert \theta_j \vert \right\}, 
\]
for a tuning parameter  $\nu>0$. Similarly, we can modify the  propensity  score  estimator, replacing $\hat{e}$ by $\ell_1$-regularized logistic regression in the spirit of \cite{ravikumar2010high}.

A simpler modification of the estimators from Sections \ref{sec:prognostic}--\ref{sec:propensity}  can be obtained  by adding a sparsity penalty in the objective function. This is similar to the definition of the fused lasso  in \cite{tibshirani2005sparsity}. The resulting estimator would be reasonable if there is a belief that most of the treatment effects are zero. It would basically  amount to apply  soft-thresholding to the estimators from  (\ref{eqn:fused_lasso3})--(\ref{eqn:fused_lasso5}), see for instance  \cite{wang2016trend}.

While the definition of our estimators naturally extends to high-dimensional settings, the more challenging task lies in analyzing their statistical properties. Our current theoretical results are limited to scenarios where the fused lasso is combined with parametric estimation of the prognostic or propensity score, which are tailored for low-dimensional covariate spaces. Extending the theory to high-dimensional contexts remains an important direction for future work.  

\section{Main result for propensity score based estimator}
\label{sec:propensity2}

We now  study  the statistical  properties of the  estimator  defined in Section \ref{sec:propensity}.  As in Section \ref{sec:theory2} we start by stating  required assumptions. 

\begin{assumption}[Sub-Gaussian errors]
\label{as2}
Define   $V(z, x) = E\lbrace Y | Z=z, e(X)=e(x)\rbrace$ and $\epsilon_i = Y_i - V(Z_i, X_i)$ for  $i=1,\ldots,n$, with  $e(\cdot)$ as in Assumption \ref{as1}. Then  the vector  $ (\epsilon_1,\ldots,\epsilon_n)^{\top}$  has independent coordinates that are mean zero sub-Gaussian$(v)$ for some constant $v>0$.   
\end{assumption}

Assumption \ref{as2} parallels of Assumption \ref{as8} when we replace the prognostic  score with the propensity score. 

\begin{assumption}
\label{as3} The functions $f_1(s)   =  E\{Y |  Z=1,\, e(X)   =  s  \}$, and  $f_0(s)   =  E\{Y|  Z=0,\, e(X)   =  s  \}$ for  $s \in [e_{\min},e_{\max}]$ are bounded and have  bounded  variation. 
\end{assumption}

As for the distribution of the covariates, we allow  for more generality than in Section \ref{sec:theory2}. Specifically, we  allow for general  multivariate sub-Gaussian distributions. 

\begin{assumption}[Distribution of covariates]
\label{as4}
The random vector  $X \in  \mathbb{R}^d$ is centered ($E(X)=0$) sub-Gaussian(C). 

\end{assumption}

We refer the reader to  \cite{vershynin2010introduction}  which contains important concentrations  results regarding multivariate sub-Gaussian distributions. 

\begin{assumption}
\label{as5}
The propensity score  staisfies  $e(X) :=   F( X^{\top} \theta^*)$ for some $\theta^* \in   \mathbb{R}^d $, where $F(x)=\exp(x)/\{1+\exp(x)\}$  as before. Furthermore  $e(X)$ is a continuous random variable   with  pdf  $h(\cdot)$  bounded by above  ($\|h\|_{\infty }  =  h_{\max}$ for some positive constant $h_{\max}$),  and 	
there exist  constants $a_1$ and $a_2$ such that
\begin{equation*}
a_1  t  \,\leq\,    \mathbb{P}(   \vert   e(X)  -   b   \vert \leq t  )   \,\leq \,a_2 t
\end{equation*}
for all $b $ in the support of $   e(X) $ and $t \in (0,t_0)$, where  $t_0>0$ is a constant.
\end{assumption}



\begin{assumption}[Dependency condition]
\label{as6}
Let  $\Lambda_{\min}(\cdot)$ and  $\Lambda_{\max}(\cdot)$  be the minimum  and maximum eigenvalue  functions, respectively. We assume that there exist
positive  $C_{\min}$ and $D_{\max}$ such that
\[
\Lambda_{\min}\left(E\left[ \{ \eta( X^{\top}\theta^*) X X^{\top}   \}   \right ] \right) \,>\, C_{\min } >  c_1\|X\|_{ \psi_2 }^2 \left(\frac{ d \log m }{m}\right)^{1/2},
\]
and
\[
\Lambda_{\max}\{E(   X X^{\top}  )  \}  \,<\, D_{\max},  
\]
with  $\eta(t) =F(t)\{1-F(t)\}$ with $F$ as in  Assumption \ref{as5}, and where $c_1 >0$ is a constant.
We also  require  that 
\begin{equation}
\label{eqn:low_b}
\frac{C_{ \min }^2}{D_{\max}   }  >   c_2\frac{d\log m}{ \sqrt{m} },
\end{equation}	
for a large enough constant $c_2>0$.
\end{assumption}

Assumption \ref{as6} is basically  the Dependency condition in the analysis of high-dimensional logistic regression from  \cite{ravikumar2010high}. As the authors there assert,  this condition prevents the covariates from becoming overly dependent.





We are now in position to present  the main result regarding our propensity scored based estimator.

\begin{theorem}
\label{thm:tv2}
Under Assumptions  \ref{as1}, and \ref{as2}--\ref{as6}, $   dn^{1/2} \geq C_{\min}$, $n \asymp m$,   there exists  $t>0$  such that
\begin{equation}
\label{eqn:scaling}
t\,\asymp\, \max\left\{ \frac{dn}{ C_{\min} } \frac{  \log^{1/2} m\,\log^{1/2} (nd) }{m^{1/2}}   , \log n\right\} 
\end{equation}
and choice of  $\lambda$ satisfying 
\[
\lambda  \,\asymp\,  n^{1/3}   (\log n)^{2}   (\log \log n) t^{-1/3},
\]
such that the estimator  $\hat{\tau}$ defined in (\ref{eqn:fused_lasso5}) satisfies
\begin{equation}
\label{eqn:up2}
\displaystyle \frac{1}{n}  \sum_{i=1}^{n} ( \rho^*_i  - \hat{\tau}_i )^2 \,=\,  O_{ \mathbb{P} }\left\{    \frac{ d^{2/3}  (\log n)^3 (\log \log n)}{C_{\min}^{2/3  }  n^{1/3 }}  \right\},
\end{equation}
where  $\rho_i^* =  E\{ Y (1)\,|  \,e(X) =  e(X_i),  Z = 1\} - E\{Y (0)\,|\,e(X) =  e(X_i),  Z = 0\}$  for  $i=1,\ldots,n$. If in addition   $Y(0), Y(1) \,\ind\,   Z   \,|\,e(X)$, then  (\ref{eqn:up2})  holds replacing $\rho_i^*$ with 
$\tau_i^* =  E\{ Y(1) - Y(0) \,|\,e(X)=e(X_i)\}$  for  $i=1,\ldots,n$.  
\end{theorem}

Importantly, Theorem \ref{thm:tv} implies that the estimator $\hat{\tau}$ defined in (\ref{eqn:fused_lasso5})  can consistently estimate the subgroup treatment effects  $\tau^*$ under general conditions.  One of such conditions is that  $Y(0), Y(1) \,\ind\,   Z   \,|\,e(X)$, which in the language of  \cite{rosenbaum1983central} means that treatment is strongly ignorable given $e(\cdot)$. As Theorem 3 in \cite{rosenbaum1983central} showed,  $Y(0), Y(1) \,\ind\,   Z   \,|\,e(X)$ holds under overlapping (Assumption \ref{as1}) and unconfoundedness   which can be written as $Y(0), Y(1) \,\ind\,   Z   \,|\,X$.  When these conditions are violated, Theorem \ref{thm:tv} shows that $\hat{\tau}$ can still approximate $\rho^*$ under Assumptions  \ref{as1}, and \ref{as2}--\ref{as6}.



Furthermore,   as in Remark \ref{remark1},    Theorem \ref{thm:tv}   can be relaxed.  Specifically,  we  can replace Assumption \ref{as3}  with 
\[
t^* \,:=\,  \max\{   \mathrm{TV}(f_0,n),\mathrm{TV}(f_1,n) \} .
\]
Then the upper bound in Theorem \ref{thm:tv} needs to be inflated by  $(t^*)^{2/3}$.

We conclude this section with immediate consequence of the proof of Theorem \ref{thm:tv}  concerning  heterogenous treatment effects of the treated units.

\begin{corollary}[Treatment effects of the treated]
\label{cor1}
Suppose that the conditions of  Theorem \ref{thm:tv} for  (\ref{eqn:up2}) to hold are met.  Let  $\hat{\tau}$ be the propensity score  estimator  from  Section  \ref{sec:propensity}  with a slight modification. After the matching is done and the signal  $Y-\tilde{Y}$ is calculated, we only run the the fused lasso estimator,  with the  ordering based on the estimated propensity score, on the  treated units. Then
\begin{equation}
\label{eqn:up3}
\displaystyle \frac{1}{n}  \sum_{i\,:\,  Z_i=1} ( \rho^*_i  - \hat{\tau}_i )^2 \,=\,  O_{\mathbb{P} }\left\{    \frac{ d^{2/3}   (\log n)^3 (\log \log n)}{C_{\min}^{2/3  }  n^{1/3 }}  \right\},
\end{equation}
where  $\rho_i^* =  E\{Y (1)\,|  \,e(X) =  e(X_i),  Z = 1\} - E\{ Y (0)\,|\,e(X) =  e(X_i),  Z = 0\}$  for  $i=1,\ldots,n$. If in addition $Y(0)\,\ind\,   Z   \,|\,e(X)$, then  (\ref{eqn:up3})  holds replacing $\rho_i^*$ with 
$\tau_i^* =  E\{ Y(1) - Y(0) \,|\,e(X)=e(X_i) ,\,Z= 1    \},$  for  $i=1,\ldots,n$.  
\end{corollary}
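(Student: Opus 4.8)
The plan is to recognize that Corollary~\ref{cor1} follows by rerunning the argument behind Theorem~\ref{thm:tv} on the subproblem indexed by the treated units, and then to supply the weaker identification step that the treated-effect interpretation requires only $Y(0) \ind Z \mid e(X)$. First I would record that restricting the fused lasso to the treated units is itself a one-dimensional fused lasso problem: we order $\{i : Z_i = 1\}$ by $\hat{e}(X_i)$ and denoise the imputed signal $\{Y_i - \widetilde{Y}_i\}_{Z_i = 1}$, whose estimation target along this subsequence is $\rho^*$ restricted to treated units. By Assumption~\ref{as3}, both $f_0$ and $f_1$ have bounded variation on $[e_{\min},e_{\max}]$, so $\rho = f_1 - f_0$ has bounded variation, and its evaluation along any ordering of a subset of the $e(X_i)$ has total variation no larger than $\mathrm{TV}(\rho)$. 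Thus the bounded-variation hypothesis driving the fused lasso rate is preserved on the treated subproblem, and by Assumption~\ref{as2} the errors restricted to treated indices remain independent, mean-zero, sub-Gaussian$(v)$. Finally, the overlap Assumption~\ref{as1} forces the number of treated units $n_1$ to satisfy $n_1 = \Theta_{\mathrm{pr}}(n)$, so replacing $n$ by $n_1$ in the fused lasso rate leaves the exponent and every factor in \eqref{eqn:up2} unchanged; since the normalization $\tfrac{1}{n}\sum_{i:Z_i=1}$ differs from $\tfrac{1}{n_1}\sum_{i:Z_i=1}$ only by the bounded factor $n_1/n$, this yields \eqref{eqn:up3}.

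Next I would note that the proof of Theorem~\ref{thm:tv} controls three sources of error: (i) the estimation error of $\hat{e}$ relative to $e$ through the logistic-regression analysis under Assumptions~\ref{as5}--\ref{as6}; (ii) the imputation bias of the matched signal, bounded using the matching discrepancy $|\hat{e}(X_i)-\hat{e}(X_{N(i)})|$ together with the bounded variation of $f_0,f_1$; and (iii) the fused lasso oracle inequality for bounded-variation signals. Each bound is assembled from per-unit contributions, so each restricts verbatim to the treated indices. In particular, for a treated unit the imputed control outcome $\widetilde{Y}_i = Y_{N(i)}$ is drawn from a control unit with nearby estimated propensity score, so its conditional mean approximates $f_0(e(X_i))$; combined with $E\{Y_i \mid Z_i = 1, e(X_i) = s\} = f_1(s)$, this makes the conditional mean of the treated signal close to $\rho_i^* = f_1(e(X_i)) - f_0(e(X_i))$. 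This establishes the first display \eqref{eqn:up3}.

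For the refined statement I would verify the identification under the single hypothesis $Y(0) \ind Z \mid e(X)$. For a treated unit with $e(X_i) = s$,
\[
\tau_i^* = E\{Y(1) \mid e(X)=s,\, Z=1\} - E\{Y(0) \mid e(X)=s,\, Z=1\} = f_1(s) - E\{Y(0) \mid e(X)=s,\, Z=1\}.
\]
Because $Y(0) \ind Z \mid e(X)$, the second term equals $E\{Y(0) \mid e(X)=s,\, Z=0\} = f_0(s)$, so $\tau_i^* = f_1(s) - f_0(s) = \rho_i^*$ on the treated. Substituting this identity into \eqref{eqn:up3} gives the stated bound with $\tau_i^*$. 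The point is that only $Y(0)$, the potential outcome being imputed for treated units, must be conditionally independent of $Z$; the observed $Y(1)$ on the treated requires no such assumption, which is precisely why the treated-effect version needs a weaker condition than Theorem~\ref{thm:tv}.

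The main obstacle I anticipate is step (ii): confirming that the imputation-bias control from the proof of Theorem~\ref{thm:tv} goes through on the treated subproblem using only $Y(0) \ind Z \mid e(X)$ rather than the joint conditional independence. The delicate point is that imputing $Y(0)$ for a treated unit draws on a control unit, and the resulting bias must be charged entirely to the matching gap and the bounded variation of $f_0$; one must check that no step in the original argument implicitly invoked the conditional independence of $Y(1)$, which is never needed here because $Y(1)$ is directly observed on the treated.
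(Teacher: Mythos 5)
Your proposal is correct and follows essentially the same route the paper intends: the paper offers no separate proof, asserting only that the corollary is an immediate consequence of the proof of Theorem~\ref{thm:tv}, and your writeup is a faithful elaboration of that claim (restriction of the bounded-variation and sub-Gaussian bounds to the treated subsequence, $n_1 = \Theta_{\mathrm{pr}}(n)$ from overlap, and the observation that only $Y(0)\ind Z \mid e(X)$ is needed to identify $\tau^*$ with $\rho^*$ on the treated since $Y(1)$ is observed there). Your anticipated obstacle in step (ii) is in fact vacuous: the bound on $\rho^*$ in Theorem~\ref{thm:tv} never invokes any conditional independence --- it is stated purely in terms of $f_0,f_1$ defined as conditional means of the observed outcome --- so the independence assumption enters only in the final identification step, exactly as you conclude.
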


\section{Proof of   Theorem \ref{thm:tv} }

Throughout this section  we write 
\[
\hat{L}(\theta) =     - \sum_{i=1}^{m} \frac{1}{m}   \left[ Z_i^{\prime} \log  F( X_i^{\prime \top} \theta )  +  (1-Z_i^{\prime})\log \{1-  F(  X_i^{\prime \top} \theta )\}\right],
\]
and 
\[
\delta  \,:=\, \frac{d}{ C_{\min} }\frac{  (\log^{1/2} m)\log^{1/2} (nd)}{m^{1/2}}.
\]
We also define the first order matrix $\Delta^{(1)} \in \mathbb{R}^{(n-1) \times n}$, such that for any $b \in \mathbb{R}^{n}$, the following holds:
\[
\|\Delta^{(1)}b\|_1\,=\,  \sum_{i=1}^{n-1} \vert(\Delta^{(1)}b)_i\vert\,=\, \sum_{i=1}^{n-1} \vert b_i -b_{i+1}\vert.  
\]
Hence, with this notation, the estimator defined in (\ref{eqn:fused_lasso5}) becomes 
\begin{equation}
\label{eqn:delta_version}
\hat{\tau}\,=\,   \underset{b \in \mathbb{R}^n }{\arg \min } \left\{ \frac{1}{2}\sum_{i=1}^{n} (Y_i -   \widetilde{Y}_i   + (-1)^{ Z_i }  b_i )^2  \,+\, \lambda \| \Delta^{(1)}\hat{P} b \|_1 \right\}.
\end{equation}

\subsection{Total variation auxiliary lemmas}

\begin{lemma}
\label{lem1}
The  estimator $\hat{\tau}$   defined  in  (\ref{eqn:fused_lasso5}) satisfies
\[
\|\hat{\tau}\|_{\infty}  \,= \, O_{\mathbb{P}}\left(  \max\{ \|  f_0\|_{\infty},\|  f_1\|_{\infty} \}  +   \lambda  +  \log^{1/2} n \right).
\]
\end{lemma}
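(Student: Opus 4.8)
The plan is to reduce the claim to an $\ell_\infty$ bound for an ordinary one-dimensional fused lasso fit and then control the magnitude of its input signal. Write $r_i := (-1)^{Z_{\hat\sigma(i)}+1}\bigl(Y_{\hat\sigma(i)} - \widetilde Y_{\hat\sigma(i)}\bigr)$ for the permuted, matched signal, so that by (\ref{eqn:fused_lasso5}) we have $\hat\tau = \hat P^{\top}\hat b$ where $\hat b = \argmin_{b\in R^n}\{\tfrac12\|b-r\|^2 + \lambda\sum_{i=1}^{n-1}|b_i-b_{i+1}|\}$. Since $\hat P$ is a permutation matrix, $\|\hat\tau\|_\infty = \|\hat b\|_\infty$, and it suffices to bound $\|\hat b\|_\infty$.

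First I would establish the deterministic inequality $\|\hat b\|_\infty \le \max_i |r_i| + 2\lambda$. This follows from the stationarity (KKT) conditions $\hat b - r = -\lambda D^{\top}\hat u$, where $D$ is the first-difference operator and $\hat u$ is a subgradient with $\|\hat u\|_\infty\le 1$. On any maximal block $[i_1,i_2]$ on which $\hat b$ is constant equal to $c$, summing the stationarity conditions over the block makes the $D^{\top}\hat u$ terms telescope to a boundary difference, giving $c = \frac{1}{i_2-i_1+1}\sum_{i=i_1}^{i_2} r_i - \frac{\lambda(\hat u_{i_1-1}-\hat u_{i_2})}{i_2-i_1+1}$; the block average is a convex combination of the $r_i$ and hence bounded by $\max_i|r_i|$, while the correction term is at most $2\lambda$. (An even cleaner alternative is a clipping argument: replacing $\hat b$ by its coordinatewise projection onto $[\min_i r_i,\max_i r_i]$ cannot increase either the data-fit term or the total variation, so by strict convexity $\hat b\in[\min_i r_i,\max_i r_i]^n$, which already yields $\|\hat b\|_\infty\le\max_i|r_i|$.)

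Second I would control $\max_i|r_i|$. By Assumptions \ref{as2}--\ref{as3} we may write $Y_j = f_{Z_j}\{e(X_j)\} + \epsilon_j$, where $f_0,f_1$ are bounded and the $\epsilon_j$ are mean-zero sub-Gaussian$(v)$. Because $\widetilde Y_{\hat\sigma(i)} = Y_{N(\hat\sigma(i))}$ is an observed outcome drawn from the opposite treatment arm, each $r_i$ is a signed difference of two such terms, so $|r_i|\le \|f_0\|_\infty + \|f_1\|_\infty + |\epsilon_{\hat\sigma(i)}| + |\epsilon_{N(\hat\sigma(i))}|$. Taking the maximum over $i$ and applying a union bound over the $n$ sub-Gaussian tails gives $\max_j|\epsilon_j| = O_{\mathrm{pr}}\{(\log n)^{1/2}\}$, hence $\max_i|r_i| = O_{\mathrm{pr}}\bigl(\max\{\|f_0\|_\infty,\|f_1\|_\infty\} + (\log n)^{1/2}\bigr)$; combining with the first step yields the stated bound. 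The only point requiring care is that both the sorting permutation $\hat\sigma$ and the matching map $N(\cdot)$ are data-dependent, so one cannot simply treat $\epsilon_{N(\hat\sigma(i))}$ as a fixed coordinate. The resolution, which I expect to be the crux of the argument, is that $\hat\sigma$ and $N$ only ever reindex among the fixed finite collection $\{\epsilon_1,\dots,\epsilon_n\}$, so the single uniform bound $\max_j|\epsilon_j|$ dominates every term regardless of how the matching is realized, and no further union bound over matchings is needed.
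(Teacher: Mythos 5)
Your proposal is correct and follows essentially the same route as the paper: the paper also bounds $\|\hat\tau\|_\infty$ via the KKT stationarity condition (reading off $\|\hat\tau\|_\infty \le 2\|Y\|_\infty + 2\lambda$ coordinatewise, since each entry of $\lambda \hat P^{\top}(\Delta^{(1)})^{\top}\mathrm{sign}(\cdot)$ is at most $2\lambda$ in absolute value, rather than block-averaging) and then controls the noise by the sub-Gaussian maximal inequality after conditioning on $X$ and $Z$. Your block-summation and clipping variants are both valid minor refinements of the same idea, and your observation that the data-dependent reindexing by $\hat\sigma$ and $N(\cdot)$ is dominated by the single maximum $\max_j|\epsilon_j|$ matches how the paper handles that point.
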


\begin{proof}
We beging by introducing some notation.	For  a vector  $x \in  \mathbb{R}^s$,  a vector  $\mathrm{sign}(x)  \in  \mathbb{R}^s$ is defined  as
\[
	(\mathrm{sign}(x ))_i \,=\, \begin{cases}
1     & \text{if }\,\,   x_i >0\\
-1     & \text{if }\,\,   x_i <0\\
\in [0,1]     & \text{otherwise.}\,\,  \\
\end{cases}
\]
To proceed,  we first condition  on $X$ and $Z$. Then, using Equation (\ref{eqn:delta_version}) along with the KKT conditions, we obtain:
\begin{equation}
\label{eqn:kkt}
Y -   \widetilde{Y}   + (-1)^{ Z } \circ \hat{\tau}        +   \lambda \hat{P}^T ( \Delta^{(1)})^{\top}\mathrm{sign}(\Delta^{(1)} \hat{P} \hat{\tau} )  \,=\,0,
\end{equation}
where $\circ$ is the  Hadamard product. 
Next, notice  that (\ref{eqn:kkt})   implies that
\[
\begin{array}{lll}
\|  \hat{\tau}\|_{\infty}    & \leq & 2\|Y\|_{\infty}   +   2\lambda\\
& \leq&   2 \max\{ \|  f_0\|_{\infty},\|  f_1\|_{\infty} \}  +   2\lambda +    2 \|\epsilon\|_{\infty}.
\end{array}
\]
The claim then  follows since  $\|\epsilon\|_{\infty} = O_{\mathbb{P}}( \log^{1/2} n )$, by Sub-Gaussian  tail inequality, and integrating over $X$ and $Z$.
\end{proof}

\begin{lemma}
\label{lem9}
Assumptions \ref{as1} and \ref{as2} imply that
\[
\displaystyle	 \left\vert \frac{1}{n^{1/2}  }  \sum_{i=1}^{n} (-1)^{Z_i+1}\epsilon_{N(i)}  \right\vert^2 \,=\, O_{\mathbb{P}}( \log^3 n  ).
\]
\end{lemma}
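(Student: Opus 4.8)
The plan is to reorganize the sum so that each error term $\epsilon_j$ appears with a coefficient equal, up to sign, to the number of times unit $j$ is used as a match, and then to decouple the randomness of the errors from that of the matching by conditioning on the design.

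First I would set $M_j = |\{i : N(i) = j\}|$, the number of units matched to $j$. By the definition of $N(\cdot)$ in (\ref{eqn:matching2}), every $i$ with $N(i) = j$ satisfies $Z_i \neq Z_j$, so the factor $(-1)^{Z_i+1}$ is constant over all such $i$. Consequently the sum telescopes into
\[
\sum_{i=1}^n (-1)^{Z_i+1}\epsilon_{N(i)} = \sum_{j=1}^n c_j \epsilon_j, \qquad |c_j| = M_j, \qquad \sum_{j=1}^n M_j = n .
\]
Next I would condition on the $\sigma$-field $\mathcal{F}$ generated by $\{(X_i, Z_i)\}_{i=1}^n$ together with the auxiliary sample, which renders $\hat{e}$, the matching $N(\cdot)$, and hence every $c_j$ deterministic. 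By Assumption \ref{as2} the coordinates $\epsilon_j$ remain independent, mean zero and sub-Gaussian$(v)$ given $\mathcal{F}$, so $\sum_j c_j\epsilon_j$ is conditionally sub-Gaussian with variance proxy of order $v^2 \sum_j c_j^2 = v^2\sum_j M_j^2$. A standard sub-Gaussian tail bound with threshold of order $(\log n \sum_j M_j^2)^{1/2}$ then yields
\[
\left|\sum_{j=1}^n c_j \epsilon_j\right| = O_{\mathrm{pr}}\left\{\left(\log n \sum_{j=1}^n M_j^2\right)^{1/2}\right\}.
\]

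It remains to bound $\sum_j M_j^2$, for which I would use the crude inequality $\sum_j M_j^2 \le (\max_j M_j)\sum_j M_j = n\max_j M_j$ and show $\max_j M_j = O_{\mathrm{pr}}(\log^2 n)$. The key structural observation is that, because the matching is one-dimensional nearest-neighbour matching in $\hat{e}$, the units matched to a fixed $j$ all have estimated scores lying in the Voronoi cell of $\hat{e}(X_j)$ with respect to the scores of the units sharing $j$'s treatment arm; in one dimension this cell is an interval whose length is at most the maximal spacing between consecutive same-arm scores. I would therefore (i) bound this maximal spacing by $O(\log n / n)$ with high probability using the overlap Assumption \ref{as1}, the boundedness above of the density of $e(X)$ and the regularity condition (\ref{eqn:regularity}) in Assumption \ref{as5}, after transferring from $\hat{e}$ to $e$ on a high-probability event where $\hat\theta$ is close to $\theta^*$; and (ii) control the number of opposite-arm scores falling in any interval of that length by a binomial (scan-statistic) tail bound together with a union bound over the $O(n)$ cells. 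This gives $\max_j M_j = O_{\mathrm{pr}}(\log^2 n)$, hence $\sum_j M_j^2 = O_{\mathrm{pr}}(n\log^2 n)$.

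Combining the two displays,
\[
\frac{1}{n}\left|\sum_{i=1}^n (-1)^{Z_i+1}\epsilon_{N(i)}\right|^2 = O_{\mathrm{pr}}\left(\frac{\log n \cdot n \log^2 n}{n}\right) = O_{\mathrm{pr}}(\log^3 n),
\]
which is the claim. The hard part will be step (ii): the uniform control of $\max_j M_j$, where one must simultaneously handle the randomness of the Voronoi cell lengths (through maximal spacings) and the count of matched units inside each cell, uniformly over all $n$ cells and while working with the estimated rather than the true propensity score. The error concentration in the first display is routine by comparison, and the final exponent $\log^3 n$ arises precisely as the product of the $\log n$ from that concentration and the $\log^2 n$ from the maximal matching multiplicity.
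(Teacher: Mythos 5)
Your decomposition and the concentration step coincide with the paper's: its proof sets $S_i=\{j:N(j)=i\}$, rewrites the sum as $\sum_{i}\{\sum_{j\in S_i}(-1)^{Z_j+1}\}\epsilon_i$, bounds $\frac1n\sum_i c_i^2\le(\max_i|S_i|)^2$, and finishes with the sub-Gaussian tail inequality; your remark that all signs within $S_i$ agree (so $|c_i|=|S_i|$ exactly) is implicit there. The only point of divergence is how $\max_i|S_i|$ is controlled. The paper asserts that $(|S_1|,\dots,|S_n|)\sim\mathrm{Multinomial}(n;1/n,\dots,1/n)$ because the $\hat e(X_i)$ are i.i.d., and applies Chernoff plus a union bound to get $\max_i|S_i|=O_{\mathrm{pr}}(\log n)$. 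That is a one-line step, but the multinomial claim is not literally correct for nearest-opposite-arm matching counts; your Voronoi-cell/maximal-spacing/scan-statistic argument is the honest version of it, and is essentially the argument the paper deploys elsewhere (Lemmas \ref{lem6} and \ref{lem8}) for a closely related count.

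Two caveats about your version of that step. First, it invokes Assumptions \ref{as4}, \ref{as5} and the consistency of $\hat\theta$ (Lemma \ref{thm:parametric}), i.e., strictly more than the Assumptions \ref{as1}--\ref{as2} named in the lemma; this is harmless inside the proof of Theorem \ref{thm:tv}, but it is a weaker statement than the one displayed. Second, and more substantively: if you transfer from $\hat e$ to $e$ on the event $\max_i|\hat e(X_i)-e(X_i)|\le C_0\epsilon$ (with $\epsilon$ the uniform estimation error of Lemma \ref{thm:parametric}), the Voronoi cells acquire length of order $\log n/n+\epsilon$ and your count bound degrades to $O_{\mathrm{pr}}(\max\{\log n,\,n\epsilon\}\log n)$ rather than $O_{\mathrm{pr}}(\log^2 n)$ --- this is exactly the $\max\{\log n, n\epsilon\}$ term that appears in Lemmas \ref{lem5} and \ref{lem8}, and $n\epsilon$ need not be $O(\log n)$. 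The fix is to not transfer at all: conditional on $\hat\theta$ the scores $\hat e(X_i)$ are themselves i.i.d., so the maximal same-arm spacing and the opposite-arm counts per cell can be bounded directly in $\hat e$-space, giving $\max_jM_j=O_{\mathrm{pr}}(\log n)$ (your $\log^2 n$ is a harmless overestimate). With that repair your argument closes, and $\sum_jM_j^2\le n\max_jM_j$ together with the $\log n$ from the tail bound yields the stated $\log^3 n$.
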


\begin{proof}
Let $S_i = \{j  \in  \{1,\ldots,n\} \,:\, N(j) =i  \}$, we have
\[
\begin{array}{lll}
\displaystyle	 \left\vert \frac{1}{n^{1/2} }  \sum_{i=1}^{n} (-1)^{Z_i+1}\epsilon_{N(i)}  \right\vert^2 \,=\,  \left\vert \frac{1}{n^{1/2}  }  \sum_{i=1}^{n} \left\{ \sum_{j \in S_i } (-1)^{Z_j+1} \right\}\epsilon_{i}  \right\vert^2,
\end{array}
\]
and
\[
\frac{1}{n}\sum_{i=1}^{n} \left\{ \sum_{j \in S_i } (-1)^{Z_j+1} \right\}^2 \,\leq \,   \left(  \underset{  i =1,\ldots,n}{\max} \, \vert S_i \vert  \right)^2.
\]
On the other hand, since  $\hat{e}(X_1),\ldots, \hat{e}(X_n)$ are independent and identically distributed, we have that 
\[
(  \vert S_1\vert,\ldots, \vert S_n\vert  )\,\sim \,  \text{Multinomial}\left(n; \frac{1}{n},\ldots,\frac{1}{n}   \right).
\]
Therefore,  by  Chernoff's inequality and union bound,
\[
\left(  \underset{  i =1,\ldots,n}{\max} \, \vert S_i \vert  \right)^2  \,=\,  O_{\mathbb{P}}( \log^2 n  ),
\]
and so the claim follows by the sub-Gaussian  tail inequality.

\end{proof}

\begin{lemma}
\label{thm1}
Let   $\tau_i^* =  E\{Y_i |e(X_i),  Z_i  = 1\} - E\{Y_i | e(X_i), Z_i  = 0\}$   be  the  treatment effect for unit $i$.  Suppose  that  $\hat{e} $ is independent of $\{(Y_i,X_i,Z_i)\}_{i=1}^n$ and satisfies that  for $\hat{ \sigma}$ as defined in (\ref{eqn:permutation_hat2}) we write 
\begin{equation}
\label{eqn:tv3}
\displaystyle 	 \mathrm{tv}_1\,:= \,\max\left\{   \sum_{i=1}^{n-1}  \left\vert f_0\{e( X_{ \hat{\sigma}(i) } ) \}  - f_0\{e( X_{ \hat{\sigma}(i+1) } ) \}\right\vert ,\,\, \sum_{i=1}^{n-1}  \left\vert f_1\{ e( X_{ \hat{\sigma}(i) } ) \}  - f_1\{e( X_{ \hat{\sigma}(i+1) } ) \}\right\vert  \right\},
\end{equation}
and
\begin{equation}
\label{eqn:tv2}
\mathrm{tv}_2 \,:=\, \max\left\{ 	\sum_{i=1}^{n} \vert   f_0\{  e(X_{i}  ) \}  -  f_0\{ e(X_{N(i)} ) \} \vert,	\,\,\sum_{i=1}^{n} \vert   f_1\{ e(X_{i}  ) \}  -  f_1\{ e(X_{N(i)} )\} \vert \right\},
\end{equation}
and assume that these  random sequences satisfy $\max\{  \mathrm{tv}_1,\mathrm{tv}_2\} = O_{\mathbb{P}}(t)$  for a deterministic $t$  that can depend on $n$ and  can diverge.  Then, if  Assumptions  \ref{as1}, \ref{as2} and \ref{as3}  hold, and  
\begin{equation}
\label{eqn:choice_of_lambda}
\lambda  \,=\,  \Theta\left\{ n^{1/3}   (\log n)^{2}   (\log \log n) t^{-1/3}       \right\},
\end{equation}
we have that
\[
\displaystyle \frac{1}{n}  \sum_{i=1}^{n} ( \tau^*_i  - \hat{\tau}_i )^2 \,=\,  O_{ \mathbb{P} }\left\{ \frac{\log^3 n}{n} +   (\log n)^2 (\log \log n)\left( \frac{t}{n} \right)^{2/3}  \,+\, \frac{ t \log^{1/2} n  }{n}  \,+\, \frac{t^2}{n^2}    \right\}.
\]
\end{lemma}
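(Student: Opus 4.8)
The plan is to reduce (\ref{eqn:fused_lasso5}) to a one-dimensional fused-lasso denoising problem along the fixed ordering $\hat\sigma$ and then run a fast-rate total-variation analysis. First I would unpack the matched signal. Writing $Y_j = f_{Z_j}\{e(X_j)\} + \epsilon_j$ as in Assumption \ref{as2}, the quantity denoised in (\ref{eqn:fused_lasso5}), namely $(-1)^{Z_i+1}(Y_i - \widetilde Y_i) = (-1)^{Z_i+1}(Y_i - Y_{N(i)})$, splits as $\tau_i^* + r_i + w_i$, where the bias is $r_i = f_0\{e(X_i)\} - f_0\{e(X_{N(i)})\}$ if $Z_i=1$ and $r_i = f_1\{e(X_{N(i)})\} - f_1\{e(X_i)\}$ if $Z_i=0$, and $w_i = (-1)^{Z_i+1}(\epsilon_i - \epsilon_{N(i)})$. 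By the definition of $\mathrm{tv}_2$ in (\ref{eqn:tv2}) one has $\|r\|_1 \le 2\,\mathrm{tv}_2 = O_{\mathrm{pr}}(t)$, and, writing $\beta^*$ for $\tau^*$ arranged along $\hat\sigma$ and $D$ for the first-difference operator, $\|D\beta^*\|_1 \le 2\,\mathrm{tv}_1 = O_{\mathrm{pr}}(t)$ by (\ref{eqn:tv3}). Crucially, I condition on $\hat e$ (hence on $\hat\sigma$ and on the matching $N(\cdot)$), which is legitimate since $\hat e$ is assumed independent of $\{(Y_i,X_i,Z_i)\}$; conditionally the $\epsilon_i$ remain independent, mean-zero, sub-Gaussian$(v)$.

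Second, I invoke optimality. Letting $\hat\beta$ be the minimizer in (\ref{eqn:fused_lasso5}) and $v = \hat\beta - \beta^*$, expanding the objective at $\hat\beta$ versus $\beta^*$ yields the basic inequality
\[
\tfrac12\|v\|_2^2 \;\le\; \langle r, v\rangle + \langle w, v\rangle + \lambda\bigl(\|D\beta^*\|_1 - \|D\hat\beta\|_1\bigr).
\]
The bias term is handled by H\"older: $\langle r, v\rangle \le \|r\|_1\|v\|_\infty$, and Lemma \ref{lem1} together with the boundedness of $f_0,f_1$ in Assumption \ref{as3} gives $\|v\|_\infty = O_{\mathrm{pr}}(\lambda + \log^{1/2}n)$. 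Hence $\langle r, v\rangle = O_{\mathrm{pr}}\{t(\lambda + \log^{1/2}n)\}$, which after division by $n$ produces the $t\lambda/n$ and $t\log^{1/2}n/n$ contributions.

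Third, and this is the crux, I bound the stochastic term $\langle w, v\rangle$ at the fast rate. A naive summation by parts bounds it by $\max_i|\sum_{j\le i}w_j|\,\|Dv\|_1$ plus a boundary term, but the running maxima of the partial sums scale like $n^{1/2}$, which only delivers the slow $n^{-1/2}$ rate and would force $\lambda \gtrsim n^{1/2}$. To recover $n^{-2/3}$ I would instead partition $\{1,\dots,n\}$ into $s$ contiguous blocks $B_1,\dots,B_s$, decompose $v$ into its block-averaged part $\bar v$ and the within-block remainder $v-\bar v$, bound $\langle w,\bar v\rangle \le \|v\|_2\bigl(\sum_l \langle w,\mathbf{1}_{B_l}\rangle^2/|B_l|\bigr)^{1/2}$ by Cauchy--Schwarz, and bound $\langle w, v-\bar v\rangle$ by summation by parts inside each block. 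Sub-Gaussian maximal inequalities give $\sum_l \langle w,\mathbf{1}_{B_l}\rangle^2/|B_l| = O_{\mathrm{pr}}(s\log n)$ and a within-block partial-sum scale of order $\{(n/s)\log n\}^{1/2}$, so $\langle w, v\rangle = O_{\mathrm{pr}}\bigl( (s\log n)^{1/2}\|v\|_2 + \{(n/s)\log n\}^{1/2}\|Dv\|_1 \bigr)$. Taking $s \asymp n(\log n)/\lambda^2$ makes the second piece absorbable into $\lambda\|D\hat\beta\|_1$, and an AM--GM step converts $(s\log n)^{1/2}\|v\|_2$ into $\tfrac14\|v\|_2^2 + O(n(\log n)^2/\lambda^2)$. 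The main obstacle I anticipate is that $w$ contains the matched errors $\epsilon_{N(i)}$, which repeat according to the random matching and so are \emph{not} independent: the per-block maximal inequality must be supplemented by the multinomial occupancy control of Lemma \ref{lem9} to absorb the repeated-error mass into the $\log^3 n/n$ term, and the fact that the optimal $s$ depends on the random $\|v\|_2$ forces a union bound over a dyadic grid of candidate $s$, which injects the extra $\log\log n$ factor seen in (\ref{eqn:choice_of_lambda}).

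Finally I would collect terms. After the absorption steps one gets $\tfrac1n\|v\|_2^2 = O_{\mathrm{pr}}\bigl\{ (\log n)^2/\lambda^2 + \lambda t/n + t\log^{1/2}n/n + \log^3 n/n + t^2/n^2 \bigr\}$, the last term coming from the lower-order interaction of $\langle r,v\rangle$ and the block remainder when $t$ is large relative to $\|v\|_2$. Substituting the prescribed $\lambda \asymp n^{1/3}(\log n)^2(\log\log n)\,t^{-1/3}$ from (\ref{eqn:choice_of_lambda}) balances the first two terms at order $(\log n)^2(\log\log n)(t/n)^{2/3}$, and the remaining terms reproduce exactly the bound claimed in the lemma, completing the argument.
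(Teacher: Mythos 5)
Your proposal is sound and reaches the stated bound, but the crux of your argument --- the control of the empirical process term $\langle w,v\rangle$ --- takes a genuinely different route from the paper. You use the block-partition device: split $\{1,\dots,n\}$ into $s$ contiguous blocks, bound the block-averaged part of $v$ by Cauchy--Schwarz and the within-block remainder by summation by parts, then tune $s\asymp n\log n/\lambda^2$ so that the second piece is absorbed into $\lambda\|D\hat\beta\|_1$. The paper instead splits $\hat\tau-\tau^*$ into its projections onto $\mathrm{row}(\Delta^{(1)})$ and its orthogonal complement, handles the mean direction separately via Lemma \ref{lem9}, and for the row-space part invokes the entropy bound $\sup\{\epsilon^{\top}u/\|u\|^{1/2}:\|\Delta^{(1)}\hat P u\|_1\le 1\}=O_{\mathrm{pr}}\{n^{1/4}(\log\log n)^{1/2}\}$ from Lemma 9, Theorem 10 and Corollary 12 of \cite{wang2016trend} (ultimately Lemma 3.5 of \cite{van1990estimating}), arriving at $\|P_R(\tau^*-\hat\tau)\|^2=O_{\mathrm{pr}}\{\lambda t+K^4\lambda^{-2}\}$ with $K=n^{1/4}(\log\log n)^{1/2}\log n$. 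Both routes deliver the same $\lambda t+n(\log n)^2/\lambda^2$ trade-off and hence the same rate; the entropy route imports the $\log\log n$ from van de Geer's peeling lemma rather than from your dyadic grid over $s$, and it packages the absorption step as a clean case split on whether $\|P_R(\tau^*-\hat\tau)\|^2/4$ exceeds the bias term $A$. You also correctly anticipate the two real obstacles and resolve them in the same spirit as the paper: the matched errors $\epsilon_{N(i)}$ are not independent, and the paper handles this by peeling them into $M=O_{\mathrm{pr}}(\log n)$ independent layers $\epsilon^{(1)},\dots,\epsilon^{(M)}$ (with $M$ controlled by the multinomial occupancy argument of Lemma \ref{lem9}) and paying a factor $M$ in the union bound, which is where the extra $\log n$ in $K$ comes from --- morally the same as your ``inflate the variance proxy by the maximal occupancy'' fix. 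Your signal decomposition, basic inequality, and H\"older-plus-Lemma-\ref{lem1} treatment of the bias term coincide with the paper's, and your final bookkeeping of the four terms in the rate is consistent with where they arise in the paper's proof (the $t^2/n^2$ and $\log^3 n/n$ terms coming from the mean direction, the other two from the row-space part).
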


\begin{proof}

Let  $R  =   \mathrm{row}(\Delta^{(1)}) $,  the row  space  of $\Delta^{(1)}$. Also  write  $P_{{R}}$ and  $P_{ {R}^{\perp} }$
for the orthogonal projection matrices onto $ {R}$ and  its orthogonal complement $ {R}^{\perp}$, respectively. Then let
\[
\tilde{\tau} \,  =\, \underset{ \tau \in {R}^n }{\arg \min}\, \left\{  \frac{1}{2}\|  P_{R}U -  \tau     \|^2  +  \lambda\| \Delta^{(1)}\hat{P}\tau  \|_1   \right\} ,
\]
where  $U_i \,=\,   (-1)^{Z_i+1}\left\{ Y_i -  Y_{N(i)}  \right\}$, for  $i= 1,\ldots,n$. Hence, $\hat{\tau} =  P_{ {R}^{\perp} } U  + \tilde{\tau}$, see Proof of Theorem  3 in \cite{wang2016trend}. 

Therefore,
\begin{equation*}
\| P_{ {R}^{\perp} }(\hat{\tau} - \tau^*)  \|^2  \, =\,  \displaystyle \left\vert \frac{1}{n^{1/2} }  \sum_{i=1}^{n} (U_i  -  \tau_i^*)  \right\vert^2.
\end{equation*}
We now study the quantity
\[
\left| \frac{1}{\sqrt{n}} \sum_{i=1}^n (U_i - \tau_i^*) \right|^2.
\]
We begin by noticing that for each $i = 1, \ldots, n$, the observed outcome can be written as
\[
Y_i = f_1(e(X_i)) Z_i + f_0(e(X_i))(1 - Z_i) + \epsilon_i,
\]
where $e(X_i)$ is the true propensity score, and the functions $f_0, f_1$ are defined as
\[
f_0(s) := \mathbb{E}(Y \mid Z = 0, e(X) = s), \quad
f_1(s) := \mathbb{E}(Y \mid Z = 1, e(X) = s).
\]
The error term $\epsilon_i = Y_i - \mathbb{E}(Y \mid Z_i, e(X_i))$ is mean-zero and sub-Gaussian by Assumption~\ref{as8}. In addition, we have
\[
\tau_i^* = f_1(e(X_i)) - f_0(e(X_i)),
\]
by the definition of $\tau_i^*$, $f_0$, and $f_1$.
Expanding the definition of $U_i$ and subtracting $\tau_i^*$ yields
\[
\begin{aligned}
U_i - \tau_i^*
&= Z_i \left[ -f_0(e(X_{N(i)})) + f_0(e(X_i)) \right] + (1 - Z_i)\left[ -f_1(e(X_i)) + f_1(e(X_{N(i)})) \right] \\
&\quad + (-1)^{Z_i + 1}(\epsilon_i - \epsilon_{N(i)}).
\end{aligned}
\]
Consequently, the quantity of interest becomes
\[
\begin{aligned}
\left| \frac{1}{\sqrt{n}} \sum_{i=1}^n (U_i - \tau_i^*) \right|^2 
&= \left| \frac{1}{\sqrt{n}} \sum_{i=1}^n \Big[ Z_i \left\{ -f_0(e(X_{N(i)})) + f_0(e(X_i)) \right\} \right. \\
&\quad + (1 - Z_i) \left\{ -f_1(e(X_i)) + f_1(e(X_{N(i)})) \right\} \\
&\quad \left. + (-1)^{Z_i + 1}(\epsilon_i - \epsilon_{N(i)}) \Big] \right|^2.
\end{aligned}
\]
To bound this expression, define
\[
\begin{aligned}
A &:= \frac{1}{\sqrt{n}} \sum_{i=1}^n \left[ Z_i \left\{ -f_0(e(X_{N(i)})) + f_0(e(X_i)) \right\} + (1 - Z_i) \left\{ -f_1(e(X_i)) + f_1(e(X_{N(i)})) \right\} \right], \\
B &:= \frac{1}{\sqrt{n}} \sum_{i=1}^n (-1)^{Z_i + 1} \epsilon_i, \quad 
C := \frac{1}{\sqrt{n}} \sum_{i=1}^n (-1)^{Z_i + 1} \epsilon_{N(i)}.
\end{aligned}
\]
Then, applying the inequality $(a + b + c)^2 \leq 2a^2 + 4b^2 + 4c^2$, we obtain
\[
\left| \frac{1}{\sqrt{n}} \sum_{i=1}^n (U_i - \tau_i^*) \right|^2 \leq 2|A|^2 + 4|B|^2 + 4|C|^2.
\]
This leads to the bound
\[
\begin{aligned}
\left| \frac{1}{\sqrt{n}} \sum_{i=1}^n (U_i - \tau_i^*) \right|^2 
&\leq 2 \left| \frac{1}{\sqrt{n}} \sum_{i=1}^{n} \left( Z_i[f_0(e(X_{N(i)})) - f_0(e(X_i))] + (1 - Z_i)[f_1(e(X_i)) - f_1(e(X_{N(i)}))] \right) \right|^2 \\
&\quad + 4 \left| \frac{1}{\sqrt{n}} \sum_{i=1}^n (-1)^{Z_i + 1} \epsilon_i \right|^2 + 4 \left| \frac{1}{\sqrt{n}} \sum_{i=1}^n (-1)^{Z_i + 1} \epsilon_{N(i)} \right|^2.
\end{aligned}
\]

Therefore,    
\begin{equation}
\label{eqn:small_part}
\begin{array}{lll}
\| P_{ {R}^{\perp} }(\hat{\tau} - \tau^*)  \|^2  & =&  \displaystyle \left\vert \frac{1}{n^{1/2} }  \sum_{i=1}^{n} (U_i  -  \tau_i^*)  \right\vert^2\\
& \leq & \displaystyle 2\left\vert \frac{1}{n^{1/2} }  \sum_{i=1}^{n} \left( Z_i[ f_0\{e(X_{i})\}  - f_0\{e(X_{N(i)}) \}]   +  (1-Z_i)[ -f_1\{e(X_{i})\}  + f_1\{e(X_{N(i)})\} ] \right)  \right\vert^2 \\
& & \displaystyle  + 4\left\vert \frac{1}{n^{1/2} }  \sum_{i=1}^{n} (-1)^{Z_i+1}\epsilon_i  \right\vert^2  +  4\left\vert \frac{1}{n^{1/2} }  \sum_{i=1}^{n} (-1)^{Z_i+1}\epsilon_{N(i)}  \right\vert^2\\
& \leq &\displaystyle  \frac{4}{n} \left[    \sum_{i=1}^{n} \vert f_0\{e(X_{i})\}  - f_0\{e(X_{N(i)})\}  \vert   \right]^2  +   \frac{4}{n} \left[    \sum_{i=1}^{n} \vert f_1\{e(X_{i})\}  - f_1\{e(X_{N(i)})\}  \vert   \right]^2 \\
& & \displaystyle  + 4\left\vert \frac{1}{n^{1/2} }  \sum_{i=1}^{n} (-1)^{Z_i+1}\epsilon_i  \right\vert^2  +  4\left\vert \frac{1}{\sqrt{n}}  \sum_{i=1}^{n} (-1)^{Z_i+1}\epsilon_{N(i)}  \right\vert^2.\\
\end{array}
\end{equation}
However, by the sub-Gaussian  tail inequality,
\begin{equation}
\label{eqn:gaussian_tail}
\left\vert \frac{1}{n^{1/2}}  \sum_{i=1}^{n} (-1)^{Z_i+1}\epsilon_i  \right\vert^2   \,=\,  o_{\mathbb{P}}\left(\log n\right).
\end{equation}
Therefore,  combining (\ref{eqn:small_part}), (\ref{eqn:gaussian_tail}), (\ref{eqn:tv2}) and Lemma \ref{lem9},
\begin{equation}
\label{eqn:new}
\| P_{ {R}^{\perp} }(\hat{\tau} - \tau^*)  \|^2  \,=\, O_{\mathbb{P}} \left( \log^3 n  +  \frac{t^2}{n}  \right).
\end{equation}
Next,  we proceed to bound $	\| P_{R }(\hat{\tau} - \tau^*)  \|^2$. Notice that by the optimality of  $\tilde{\tau}$
\begin{equation}
\label{eqn:basic}
\begin{array}{lll}
\displaystyle  \frac{1}{2}  \| P_{R}(\tau^* - \tilde{\tau})   \|^2 \, \leq \,   \displaystyle  \tilde{\epsilon}^{\top}  P_R( \tilde{\tau} -  \tau^* )  +  \lambda\left\{  \|\Delta^{(1)}\hat{P}\tau^*\|_1  -  \|\Delta^{(1)} \hat{P} \tilde{\tau} \|_1       \right\},
\end{array}
\end{equation}
where
\[
\begin{array}{lll}
\tilde{\epsilon}_i   &:= &   (-1)^{Z_i+1}( Y_i - Y_{N(i)})  - \tau_i^*\\
&  = &  Z_i [ f_0\{e(X_i)\}  -  f_0\{  e(X_{N(i)})   \}  ]  +  (1- Z_i)[ -f_1(e(X_i))  +  f_1\{e(X_{N(i)})    \}  ] +     (-1)^{Z_i+1} ( \epsilon_i  - \epsilon_{N(i)}),
\end{array}
\]
for  $i=1,\ldots,n$.   

Then  by H\"{o}lder's  inequality, Lemma  \ref{lem1}, and the inequality $\| P_R v\|_{\infty} \leq 2\|v\|_{\infty}$, there exists a constant $\tilde{C}>0$   such that, with probability   approaching one,
\begin{equation}
\label{eqn:basic_2}
\begin{array}{lll}
\tilde{\epsilon}^{\top}  P_R( \tilde{\tau} -  \tau^* )  & \leq  & \displaystyle \sum_{i=1}^{n} (-1)^{Z_i+1} ( \epsilon_i  - \epsilon_{N(i)})\{P_R(\tilde{\tau} -\tau^* )   \}_i +  \\
& &  \displaystyle  + 2 \| \tilde{\tau} - \tau^* \|_{\infty}\sum_{i=1}^{n} \vert  f_1\{e(X_i)\}  - f_1\{  e(X_{N(i)}) \} \vert  + 2\| \tilde{\tau} - \tau^* \|_{\infty}\sum_{i=1}^{n} \vert  f_0\{e(X_i)\}  - f_0\{  e(X_{N(i)} )\} \vert\\
& \leq&  \displaystyle \sum_{i=1}^{n} (-1)^{Z_i+1} ( \epsilon_i  - \epsilon_{N(i)})\{P_R(\tilde{\tau} -\tau^* )   \}_i +   A,\\
\end{array}
\end{equation}
where  
\[
A :=     \tilde{C}\,\left( \max\{ \|  f_0\|_{\infty},\|  f_1\|_{\infty} \}  +   \lambda  +  \log^{1/2} n \right)t,
\]
for some positive constant $\tilde{C}$.

Next, suppose  that   $\|  P_R(\tau^* - \hat{\tau})  \|^2/4   \leq   A  $. Then, due to our choice  of $\lambda$, (\ref{eqn:basic}), and (\ref{eqn:basic_2}), there exists  $\tilde{C}_2 >0$ such that
\begin{equation}
\label{eqn:first_bound}
\begin{array}{lll}
\displaystyle	  \frac{1}{n}\| P_R( \tau^* - \hat{\tau})  \|^2  &\leq&  \displaystyle \frac{ 4\tilde{C}\,\left( \max\{ \|  f_0\|_{\infty},\|  f_1\|_{\infty} \}  +   \lambda  + \log^{1/2} n \right)t}{n}\\
&\leq&  \displaystyle  \frac{  4 \tilde{C}\,\max\{ \|  f_0\|_{\infty},\|  f_1\|_{\infty} \}t}{n }  + \\
& &\displaystyle  \frac{4\,\tilde{C}_2\,\tilde{C}  (\log \log n)   (\log n)^{2} \,t^{2/3}}{n^{2/3}   }\,+\,  \\
&&	\displaystyle    4 \tilde{C}\,\frac{t \log^{1/2} n }{n},
\end{array}
\end{equation}
with probability approaching one.

On the contrary,  if  $\| P_R( \tau^* - \hat{\tau})  \|^2/4   >  A$, then    (\ref{eqn:basic}) and (\ref{eqn:basic_2})  imply
\begin{equation}
\label{eqn:basic3}
\begin{array}{lll}
\displaystyle  \frac{1}{4} \| P_{R}(\tau^* - \tilde{\tau})   \|^2 &\leq&   \displaystyle \sum_{i=1}^{n} (-1)^{Z_i+1} ( \epsilon_i  - \epsilon_{N(i)})\{P_R(\tilde{\tau} -\tau^* )   \}_i   + \\
& & \lambda\left\{  \|\Delta^{(1)}\hat{P}\tau^*\|_1  -  \|\Delta^{(1)} \hat{P} \tilde{\tau} \|_1       \right\},
\end{array}
\end{equation}
Next, we proceed to bound the  first term in the right hand size of the previous inequality. Towards that end, with the notation from the proof of Lemma \ref{lem9}, we exploit the argument in the proof of Lemma 9 from \cite{wang2016trend}. First,  by  Lemma  9, Theorem  10, and Corollary  12 from \cite{wang2016trend}, we have  that
\begin{equation}
\label{eqn:empirical_process_p1}
\begin{array}{lll}
\displaystyle     \underset{u \in \mathrm{row}(\Delta^{(1)}) \,:\,   \|  \Delta^{(1)} \hat{P} u\|_1 \leq 1,    }{\sup }\, \frac{ \sum_{i=1}^{n }   (-1)^{ Z_i +1 }  \epsilon_i\, u_i   }{ \|u \|^{1/2} }  \,=\, O_{\mathbb{P}} \left\{  n^{1/4} (\log \log  n)^{1/2} \right\},
\end{array}
\end{equation}
which follows due to the independence and sub-Gaussian assumption  of the errors  $\{\epsilon_i\}_{i=1}^n$. 

On the other hand, conditioning on $X$ and $\hat{e}$, we define the  random vectors $\epsilon^{(1)},\ldots, \epsilon^{(M)  }   \in \mathbb{R}^n$,  constructed as follows. First, let
\[
M \,=\,  \underset{  1 \leq i \leq n }{\max} \,\left\vert \left\{   j   \in \{1,\ldots,n\}\,:\,  N(j) =i \right\} \right\vert.
\]
Then
\[
\epsilon^{(1)}_i   \,=\,\ \begin{cases}
(-1)^{ Z_i +1 }  \epsilon_{N(1)}    &  \text{if}\,\,\,  i =1,\\
(-1)^{ Z_i +1 }  \epsilon_{N(i)} &  \text{if}\,\,\,   N(j) \neq N(i) \,\,\,\mathrm{for}\,\,\mathrm{ all}\,\,\,    j < i\\
0   & \text{otherwise,}
\end{cases}
\]
and
\[
S^{(1)}  \,=\, \left\{ i \,:\,   N(j) \neq  N(i)  \,\,\, \,\forall  j <i \,\,\,\ \right\} \cup \{1\}.
\]
And for  $l > 1$, we iteratively construct
\[
\epsilon^{(l)}_i   \,=\,\ \begin{cases}
(-1)^{ Z_i +1 } \epsilon_{N(i)} &  \text{if}\,\,\,   i   \notin   \cup_{m=1}^{l-1}  S^{(m)},   \,\,\,\mathrm{and} \,\,\,  N(j) \neq N(i) \,\,\,\mathrm{for}\,\,\mathrm{ all}\,\,\,    j < i\,\,\, \text{and}\,\,\,\,  j   \notin   \cup_{m=1}^{l-1}  S^{(m)}\\
0   & \text{otherwise.}
\end{cases}
\]
Notice that, by construction, the components  of  each $\epsilon^{(l)}$  are independent  and subGaussian($v$).
Hence, by triangle inequality,
\begin{equation*}
\begin{array}{lll}
\displaystyle     \underset{u \in \mathrm{row}(\Delta^{(1)}) \,:\,   \|  \Delta^{(1)} \hat{P} u\|_1 \leq 1,    }{\sup }\, \frac{ \sum_{i=1}^{n }   (-1)^{ Z_i +1 }  \epsilon_{ N(i)} \, u_i   }{ \|u \|^{1/2} } &\leq &    	\displaystyle 	\sum_{j=1}^M   \underset{u \in \mathrm{row}(\Delta^{(1)}) \,:\,   \|  \Delta^{(1)} \hat{P} u\|_1 \leq 1,    }{\sup }\, \frac{ u^{\top}\epsilon^{(j)  } }{ \|u \|^{1/2} }.     \\
\end{array}
\end{equation*}
Then  as in  the proof of Theorem 10 in \cite{wang2016trend}, which  exploits  Lemma  3.5 from   \cite{van1990estimating}, we have that  
\[
\mathbb{P}\left(    \underset{ u \in \mathrm{row}(\Delta^{(1)})\,:\,   \|   \Delta^{(1)} \hat{P}u \|_1  \leq 1 }{\sup} \,   \frac{   u^{\top}   \epsilon^{(j)} }{\|u\|^{1/2}  }   \geq   c_1 L   n^{1/4} (\log \log n)^{1/2}  \,\bigg|\, X, \hat{e}  \right) \,\leq \,    \exp\left(   -\frac{  c_0   L^2  \log \log n}{ v } \right)  ,
\]
for $j = 1,\ldots,M$  and for some  constant  $c_0, c_1 >0$, and for any constant  $L >L_0$, where $L_0$  is  a fixed  constant.

Therefore, by a union bound,
\begin{equation}
\label{eqn:empirical_p2}
\begin{array}{l}
\displaystyle 	\mathbb{P}\left(  \underset{u \in \mathrm{row}(\Delta^{(1)}) \,:\,   \|  \Delta^{(1)} \hat{P} u\|_1 \leq 1,    }{\sup }\, \frac{ \sum_{i=1}^{n }   (-1)^{ Z_i +1 }  \epsilon_{ N(i)} \, u_i   }{ \|u \|^{1/2} }  \geq  c_1 L   n^{1/4} (\log \log n)^{1/2} M   \,\bigg|\, X, \hat{e} \right)\\
\displaystyle 	 \,\leq \,  M  \exp\left(   -\frac{  c_0   L^2  \log  \log n}{ v } \right).
\end{array}
\end{equation}

Hence,  combining  (\ref{eqn:empirical_process_p1})   with (\ref{eqn:empirical_p2}),  integrating over  $X$ and  $\hat{e}$, and proceeding as in the proof of Lemma \ref{lem9},  we arrive at 
\begin{equation}
\label{eqn:empirical_p3}
\underset{  u \in \mathrm{row}(\Delta^{(1)})\,:\,   \|\Delta^{(1)}  \hat{P} u  \|_1  \leq 1}{\sup}\,  \frac{  \sum_{i=1}^{n} (-1)^{Z_i+1} ( \epsilon_i  - \epsilon_{N(i)})u_i   }{\|  u\|^{1/2}  }   \,=\,O_{\mathbb{P}}\left(K \right),
\end{equation}
where 
\[
K = n^{ 1/4 }  (\log \log n )^{1/2} \log n. 
\]
Next, we  notice that due to (\ref{eqn:basic3})   , (\ref{eqn:empirical_p3}), the proof of Lemma 9  in \cite{wang2016trend}, and since $\tilde{\tau} =P_R \hat{\tau}$,  we have that
\begin{equation}
\label{eqn:upperbound_1}
\begin{array}{lll}
\displaystyle   \|  P_R (\tau^* -   \hat{\tau} ) \|^2 &=&O_{\mathbb{P}}\left\{   \lambda \,t+  K^{4} \left( \frac{1}{\lambda} \right)^2   \right\},\\
\end{array}
\end{equation}
and
\[
\begin{array}{lll}
\lambda \,t  +  K^{4 } \left( \frac{1}{\lambda} \right)^2  &=& \displaystyle  O \big\{  n^{1/3} (\log n)^{2} ( \log \log n)\, t^{2/3}      \big\}.
\end{array}
\]
Hence, combining  (\ref{eqn:new}), (\ref{eqn:first_bound})  and  (\ref{eqn:upperbound_1}), we obtain that
\[
\begin{array}{lll}
\displaystyle \| \tau^* - \hat{\tau}\|^2  & = &\displaystyle  O_{\mathbb{P}} \Bigg\{   \log^3 n +  n^{1/3} (\log n)^{2} ( \log \log n)\, t^{2/3} + t\log^{1/2} n +  \frac{t^2}{n} \Bigg\},
\end{array}
\] 
and the claim follows.

\end{proof}

\subsection{Propensity score auxiliary lemmas}

\begin{lemma}
\label{lem2}
Under  Assumption \ref{as4},	for  all $\zeta >0$, 
\[
\mathbb{P}\left[  \left\vert \nabla \hat{L}(\theta^*)_j  \right\vert  \leq \sigma_0 \left\{ \frac{2 \log (3/\zeta ) }{m} \right\}^{1/2}  \right]  \geq   1 - \zeta,
\] 
for all $j= 1,\ldots,d$ and for a positive constant $\sigma_0$.
\end{lemma}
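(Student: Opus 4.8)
The plan is to compute the gradient of $\hat L$ in closed form, recognize each coordinate of $\nabla \hat L(\theta^*)$ as a normalized sum of i.i.d.\ mean-zero sub-Gaussian terms, and then invoke a standard sub-Gaussian tail bound.

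First I would differentiate $\hat L$. Using $F'(t) = F(t)\{1-F(t)\}$, a direct calculation gives, for each coordinate $j$,
\[
\nabla \hat L(\theta)_j \,=\, \frac{1}{m}\sum_{i=1}^m \left\{ F(\widetilde X_i^{\top} \theta) - \widetilde Z_i \right\} \widetilde X_{ij}.
\]
Evaluating at $\theta = \theta^*$ and writing $W_i := \{ F(\widetilde X_i^{\top} \theta^*) - \widetilde Z_i \}\widetilde X_{ij}$, Assumption \ref{as5} guarantees the propensity model is correctly specified, so that $E(\widetilde Z_i \mid \widetilde X_i) = F(\widetilde X_i^{\top} \theta^*)$. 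Hence $E(W_i \mid \widetilde X_i) = 0$, and in particular $E(W_i)=0$, so that $\nabla \hat L(\theta^*)_j = m^{-1}\sum_{i=1}^m W_i$ is an average of i.i.d.\ mean-zero random variables (the pairs $(\widetilde X_i,\widetilde Z_i)$ being independent copies of $(X,Z)$).

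The key observation is that each $W_i$ is sub-Gaussian. Indeed $|F(\widetilde X_i^{\top}\theta^*)-\widetilde Z_i| \le 1$ because $F \in (0,1)$ and $\widetilde Z_i \in \{0,1\}$, so $|W_i| \le |\widetilde X_{ij}|$ pointwise. By the definition of the $\psi_2$ norm this yields $\|W_i\|_{\psi_2} \le \|\widetilde X_{ij}\|_{\psi_2}$, and taking $v = \boldsymbol{e}_j$ in the definition of sub-Gaussianity of $X$ together with Assumption \ref{as4} bounds $\|\widetilde X_{ij}\|_{\psi_2}$ by a constant multiple of $C$. Thus the $W_i$'s are i.i.d.\ mean-zero sub-Gaussian with a uniformly bounded $\psi_2$ norm.

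Finally I would apply a Hoeffding-type tail bound for averages of independent mean-zero sub-Gaussian variables: there is a constant $\sigma_0>0$ (depending only on $C$) with
\[
\mathrm{pr}\left( \left| \frac{1}{m}\sum_{i=1}^m W_i \right| > t \right) \,\le\, 3\exp\!\left( -\frac{m t^2}{2\sigma_0^2} \right),
\]
where the leading constant $3$ merely over-estimates the usual factor of $2$. Setting the right-hand side equal to $\delta$ and solving for $t$ gives $t = \sigma_0\{2\log(3/\delta)/m\}^{1/2}$, which is precisely the claimed bound; no union over $j$ is required since the statement is for each fixed $j$. The only points demanding care are the bookkeeping that turns the product of the bounded residual $F-\widetilde Z_i$ and the sub-Gaussian covariate $\widetilde X_{ij}$ into a genuinely sub-Gaussian (rather than merely sub-exponential) summand, and tracking the absolute constant so that it can be absorbed into $\sigma_0$; neither presents a real obstacle.
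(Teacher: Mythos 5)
Your proposal is correct and follows essentially the same route as the paper: compute $\nabla\hat L(\theta^*)$ in closed form, argue the summands are mean zero, bound each summand pointwise by $|\widetilde X_{ij}|$ so that sub-Gaussianity of $X$ (Assumption \ref{as4}) transfers to the summands, and finish with a Hoeffding-type sub-Gaussian tail bound (the paper cites Corollary 2.6 of \cite{boucheron2013concentration}). The only cosmetic difference is that you justify the mean-zero step via correct specification of the propensity model ($E(\widetilde Z_i\mid\widetilde X_i)=F(\widetilde X_i^{\top}\theta^*)$, from Assumption \ref{as5}), whereas the paper derives $E\{ZX-F(X^{\top}\theta^*)X\}=0$ from the population first-order condition; under Assumption \ref{as5} these amount to the same fact.
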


\begin{proof}
First, we observe that 
\begin{equation}
\label{eqn:gradient}
\displaystyle 	   \nabla \hat{L}(\theta^*) \,=\,  -\frac{1}{m}  \sum_{i=1}^{m} X_i^{\prime} \left\{ Z_i^{\prime}   -  F(  X_i^{ \prime  \top} \theta^{*}  )  \right\}.
\end{equation}
Furthermore, notice that
\[
\begin{array}{lll}
L(\theta)	 & =  & E\left( E\left[   Z \log F(X^{\top}\theta)  + (1- Z) \log \{1- F(X^{\top}\theta)\}    |X\right]\right)\\
& = & E\left[   e(X) \log F(X^{\top}\theta)  + (1- e(X)) \log \{1- F(X^{\top}\theta)\}   \right],
\end{array}
\]
and define 
\[
H(x,\theta) =  e(x) \log F(x^{\top}\theta)  + \{1- e(x)\} \log \{1- F(x^{\top}\theta)\}.
\]
Then
\[
\nabla_{\theta} H(x,\theta) \,=\,  e(x)   x -   F(x^{\top}\theta) x, 
\]
and $\vert(\nabla_{\theta} H(x,\theta) )_j \vert  \leq   2 \vert  x_j \vert  $ for all $j \in \{1,\ldots,d\}$. Hence, by the dominated convergence theorem,
\[
\begin{array}{lll}
\nabla L(\theta) &=&  E\left\{  \nabla_{\theta} H(X,\theta)   \right\}\\ 
&= & E\left( e(X)   X -   F(X^{\top}\theta) X \right)\\
& = &  E\left[E\left\{ Z  X -   F(X^{\top}\theta) X |X\right\}\right]\\
& = & E\left\{ Z   X -   F(X^{\top}\theta) X \right\}.\\
\end{array}   
\]
Therefore, 
\begin{equation}
\label{eqn:zero_gradient}
E\left\{ Z   X -   F(X^{\top}\theta^*) X \right\} = 0.
\end{equation}
Also,
\[
\left\vert  X_{i,j}^{\prime } \left\{  Z_i^{\prime }   -  F(  X_i^{\prime \top} \theta^{*}  )  \right\} \right\vert  \leq   \vert  X_{i,j}^{\prime}\vert. 
\]
Hence, by Assumption \ref{as4}, (\ref{eqn:gradient}), (\ref{eqn:zero_gradient}) and Corollary 2.6 from \cite{boucheron2013concentration}, we obtain the conclusion.
\end{proof}

\begin{lemma}
\label{lem3}
Suppose  that Assumption \ref{as6}  holds and  $d \leq  4  \|X\|_{ \psi_2 }^2 m^{1/2}/\log^{1/2} m$.
Then 
\[
\displaystyle      \mathbb{P}\left\{ \Lambda_{\min}(Q^m) \leq  C_{\min}/2 \right\}\leq   \,2\exp\left(   -\frac{c d}{16}\log^2 m    +  d \log 9 \right),
\]
where  
\[
\displaystyle   Q^m   =   \frac{1}{m} \sum_{i=1}^m  \eta( X_i^{\prime \top}\theta^* ) X_i^{\prime} X_i^{\prime \top},
\]
and $c$ is a positive constant. 
Similarly,
\[
\displaystyle	\mathbb{P}\left( \Lambda_{\max}\left\{ \frac{1}{m} \sum_{i=1}^{m}  X_i^{\prime} X_i^{\prime \top}   \right\}\geq  3 D_{\max}/2 \right)\leq \,2\exp\left(   -\frac{c d}{16}\log^2 m    +  d \log 9 \right).
\]
\end{lemma}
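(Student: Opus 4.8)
The plan is to control both extreme eigenvalues by a standard covering (epsilon-net) argument combined with a per-direction sub-exponential concentration inequality; this is exactly the structure that produces the two terms in the stated exponent, namely the $d\log 9$ term from the cardinality of a $1/4$-net of the unit sphere and the $-\tfrac{cd}{16}\log^2 m$ term from the tail of a quadratic form along a fixed direction. Recall the elementary facts that for a symmetric matrix $A$ and a $1/4$-net $\mathcal{N}$ of $S^{d-1}$ (with $|\mathcal{N}|\le 9^d$) one has $\|A\|_{\mathrm{op}}\le 2\max_{v\in\mathcal{N}}|v^\top A v|$, and that $\eta(\cdot)=F(\cdot)\{1-F(\cdot)\}\le 1/4$.

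For the minimum-eigenvalue claim, write $\Sigma_\eta:=E\{\eta(X^\top\theta^*)XX^\top\}$, so that $EQ^m=\Sigma_\eta$ and, by Assumption \ref{as6}, $\Lambda_{\min}(\Sigma_\eta)>C_{\min}$. By the net bound it suffices to control $\max_{v\in\mathcal{N}}|v^\top(Q^m-\Sigma_\eta)v|$, because on the event that this maximum is at most $C_{\min}/4$ we get, via Weyl's inequality, $\Lambda_{\min}(Q^m)\ge \Lambda_{\min}(\Sigma_\eta)-\|Q^m-\Sigma_\eta\|_{\mathrm{op}}>C_{\min}-C_{\min}/2=C_{\min}/2$. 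For a fixed $v$, $v^\top Q^m v=\tfrac1m\sum_{i=1}^m \eta(\tilde X_i^\top\theta^*)(v^\top\tilde X_i)^2$ is an average of i.i.d. nonnegative variables; since $\eta\le 1/4$ and, by Assumption \ref{as4}, $(v^\top\tilde X_i)^2$ is sub-exponential with $\psi_1$-norm at most $\|X\|_{\psi_2}^2$, each summand is sub-exponential with norm $K\lesssim\|X\|_{\psi_2}^2$ uniformly in $v$. Applying Bernstein's inequality with deviation $t=C_{\min}/4$ and a union bound over $\mathcal{N}$ gives a bound of the form $9^d\cdot 2\exp(-c\,m\min\{t^2/K^2,\,t/K\})$. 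The remaining bookkeeping is to note we are in the quadratic regime (since $C_{\min}<\tfrac14 D_{\max}\lesssim\|X\|_{\psi_2}^2$ forces $t\le K$) and to insert the lower bounds on $C_{\min}$ from Assumption \ref{as6} and (\ref{eqn:low_b}); these make $mC_{\min}^2/K^2$ at least of order $d\log^2 m$, so the per-direction tail is dominated by $\exp(-\tfrac{cd}{16}\log^2 m)$, yielding the claim after absorbing the factor $9^d=\exp(d\log 9)$. The hypothesis $d\le 4\|X\|_{\psi_2}^2 m^{1/2}/\log^{1/2}m$ is what keeps these scalings consistent.

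The maximum-eigenvalue claim follows from the identical argument applied to $\hat\Sigma:=\tfrac1m\sum_{i=1}^m\tilde X_i\tilde X_i^\top$, i.e. with $\eta\equiv 1$, centering at $E(XX^\top)$ whose top eigenvalue is below $D_{\max}$ by Assumption \ref{as6}. On the event $\max_{v\in\mathcal{N}}|v^\top(\hat\Sigma-E XX^\top)v|\le D_{\max}/4$ one obtains $\Lambda_{\max}(\hat\Sigma)\le D_{\max}+D_{\max}/2=3D_{\max}/2$, and the per-direction terms $\tfrac1m\sum_i(v^\top\tilde X_i)^2$ are again sub-exponential with norm $\lesssim\|X\|_{\psi_2}^2$, so the same Bernstein-plus-net computation produces the matching exponent.

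I expect the main obstacle to be the per-direction tail estimate: verifying that $\eta(\tilde X_i^\top\theta^*)(v^\top\tilde X_i)^2$ has a sub-exponential norm bounded by $\|X\|_{\psi_2}^2$ uniformly over $v\in S^{d-1}$, and, more delicately, choosing the deviation $C_{\min}/4$ so that it lies in the quadratic regime of Bernstein and, through the scaling hypotheses of Assumption \ref{as6}, yields an exponent that dominates the net-cardinality term $d\log 9$. By contrast, the net-to-sphere transfer and the cardinality bound $|\mathcal{N}|\le 9^d$ are standard and constitute the routine part of the argument.
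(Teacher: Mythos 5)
Your proposal is correct and follows essentially the same route as the paper: a per-direction Bernstein (sub-exponential) tail bound for $v^{\top}(Q^m-EQ^m)v$, a $1/4$-net of $S^{d-1}$ of cardinality at most $9^d$ to pass to the operator norm (the source of the $d\log 9$ term), and Assumption \ref{as6}'s lower bound on $C_{\min}$ to convert the resulting deviation bound into $\Lambda_{\min}(Q^m)>C_{\min}/2$ via Weyl's inequality, with the identical argument (taking $\eta\equiv 1$) for the maximum eigenvalue. The only cosmetic difference is that the paper fixes the deviation level at $r=\|X\|_{\psi_2}^2(d\log m/m)^{1/2}$ and then invokes Assumption \ref{as6} to ensure $r\le C_{\min}/2$, whereas you fix the deviation at $C_{\min}/4$ and invoke the same assumption to lower-bound the exponent; these are the same argument.
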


\begin{proof}
Proceeding as in the proof of Lemma 5 in \cite{ravikumar2010high}, we obtain that 
\[
\Lambda_{\min}\left( Q^{m} \right) \geq  C_{\min}  -  \|Q -  Q^m  \|_2,
\]
where  $\| \cdot \|_2$  denotes the spectral norm and  with
\[
Q \,=\, E\left\{\eta( X^{\top}\theta^*) X X^{\top} \right\}.
\]

To bound the quantity  $\|Q -  Q^m \|_2 $   we let  $v \in   \mathbb{R}^d$ with $\|v \|=1$,  and notice that
\[
\displaystyle   v^T  \left(  Q^m - Q   \right)v = \frac{1}{m}\sum_{i=1}^{m}\left\{ \left[ \{\eta(X_i^{\prime \top} \theta^* )\}^{1/2}v^T X_i^{\prime } \right]^2 \,-\, E\left(\left[ \{\eta(X_i^{\prime\top} \theta^* )\}^{1/2}v^T X_i^{\prime} \right]^2  \right)\right\} ,
\]
and by Proposition 5.16 in \cite{vershynin2010introduction}
\[
\displaystyle   \mathbb{P}\left\{ \left\vert  v^T  \left(  Q^m - Q   \right)v  \right\vert  \geq r \right\}\,\leq \,2\exp\left(   -c \min\left\{  \frac{r^2  m }{16 \|X\|_{\psi_2}^4   },\frac{r m}{4   \|X\|_{\psi_2}^2 }  \right\}  \right),
\]
for all $r>0$, and for an absolute  constant  $c>0$. Hence, taking  $r =   \|X\|_{ \psi_2 }^2  d^{1/2} \log^{1/2} m/m^{1/2}$,
and with the same entropy based  argument from  the proof of Lemma 5 in \cite{wang2017optimal}, we arrive at 
\[
\displaystyle   \mathbb{P}\left\{  \left\| Q^m - Q    \right\|_2  \geq \|X\|_{ \psi_2 }^2 \left(\frac{ d \log m }{m}\right)^{1/2}  \right\}\,\leq \,2\exp\left(   -\frac{c d}{16}\log^2 m    +  d \log 9 \right).
\]
Finally, 
\[
\begin{array}{l}
\displaystyle	\mathbb{P}\left\{ \Lambda_{\max}\left( \frac{1}{m} \sum_{i=1}^{m}  X_i^{\prime} X_i^{ \prime  \top}   \right)\geq  3 D_{\max}/2 \right\}\\
\displaystyle \,\leq \,  \mathbb{P}\left\{  \left\| \frac{1}{m} \sum_{i=1}^{m} X_i^{\prime} X_i^{\prime \top}   -     E\left(\frac{1}{m} \sum_{i=1}^{m}  X_i^{\prime } X_i^{\prime \top} \right)  \right\|_2  \geq \|X\|_{ \psi_2 }^2 \left(\frac{ d \log m }{m}\right)^{1/2} \right\}

\end{array}
\]
and the proof concludes with the same argument  from above.

\end{proof}

\begin{lemma}
\label{lem4}
Suppose that Assumption \ref{as4}--\ref{as6} hold. 
Then for a positive  constant  $C_1$,
the  estimator  $\hat{\theta}$ defined in  (\ref{eqn:logistic_theta})  satisfies 
\[
\| \hat{\theta}  -  \theta^*  \|  \,\leq \,  \frac{C_1}{  C_{\min} } \left(  \frac{d  \log m}{m} \right)^{1/2}, 
\]
with probability approaching one.
\end{lemma}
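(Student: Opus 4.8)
The plan is to combine the gradient bound of Lemma \ref{lem2} with the curvature bound of Lemma \ref{lem3} through a convexity-plus-sphere argument, in the spirit of the analysis of (possibly misspecified) logistic regression in \cite{ravikumar2010high}. Write $\hat{u}=\hat{\theta}-\theta^*$ and define the recentered empirical loss $G(u)=\hat{L}(\theta^*+u)-\hat{L}(\theta^*)$. Since $\hat{L}$ is convex (being the negative log-likelihood of a logistic model) and $\hat{\theta}$ is its minimizer, $G$ is convex with $G(0)=0$ and $G(\hat{u})\le 0$. I would fix the target radius $B:=(C_1/C_{\min})(d\log m/m)^{1/2}$ and show that, with probability tending to one, $G(u)>0$ uniformly over the sphere $\{\|u\|=B\}$. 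By convexity this forces the minimizer strictly inside the ball: if $\|\hat{u}\|\ge B$, the point $v=B\,\hat{u}/\|\hat{u}\|$ lies on the segment from $0$ to $\hat{u}$, so $G(v)\le (B/\|\hat{u}\|)G(\hat{u})+(1-B/\|\hat{u}\|)G(0)\le 0$, contradicting $G(v)>0$. Hence $\|\hat{\theta}-\theta^*\|<B$, which is exactly the claimed rate.

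To verify positivity on the sphere I would Taylor-expand: for each $u$ with $\|u\|=B$ there is $t\in(0,1)$ with $G(u)=\langle\nabla\hat{L}(\theta^*),u\rangle+\tfrac{1}{2}u^\top\nabla^2\hat{L}(\theta^*+tu)\,u$, where the empirical Hessian is $\nabla^2\hat{L}(\theta)=m^{-1}\sum_i\eta(\widetilde{X}_i^\top\theta)\,\widetilde{X}_i\widetilde{X}_i^\top$. The linear term is handled by Lemma \ref{lem2}: a union bound over the $d$ coordinates with $\delta$ polynomially small in $m$ yields $\|\nabla\hat{L}(\theta^*)\|_\infty=O_{\mathrm{pr}}\{(\log m/m)^{1/2}\}$, so that $|\langle\nabla\hat{L}(\theta^*),u\rangle|\le\|\nabla\hat{L}(\theta^*)\|_\infty\,\|u\|_1\le d^{1/2}\|\nabla\hat{L}(\theta^*)\|_\infty\,B=O_{\mathrm{pr}}\{(d\log m/m)^{1/2}\}\,B$. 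For the quadratic term I would invoke Lemma \ref{lem3}, which gives $\Lambda_{\min}(Q^m)\ge C_{\min}/2$ with high probability for $Q^m=\nabla^2\hat{L}(\theta^*)$.

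The main obstacle is that the Hessian in the remainder is evaluated at the intermediate point $\theta^*+tu$ rather than at $\theta^*$, so Lemma \ref{lem3} does not apply verbatim. I would close this gap by a localization argument exploiting that the displacement satisfies $\|tu\|\le B$ and $B$ is small: since $\eta(s)=F(s)\{1-F(s)\}$ is Lipschitz (with $|\eta'|\le 1/4$), one has $|\eta(\widetilde{X}_i^\top(\theta^*+tu))-\eta(\widetilde{X}_i^\top\theta^*)|\le \tfrac{1}{4}|\widetilde{X}_i^\top u|$, whence $\|\nabla^2\hat{L}(\theta^*+tu)-Q^m\|_2\le (4m)^{-1}\sum_i|\widetilde{X}_i^\top u|\,\|\widetilde{X}_i\|^2$. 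The delicate point, and the step that must be pushed through carefully, is showing that this remainder is negligible relative to $C_{\min}$: a crude bound via $m^{-1}\sum_i\|\widetilde{X}_i\|^3$ is too lossy, so one must control $m^{-1}\sum_i|\widetilde{X}_i^\top u|\,\widetilde{X}_i\widetilde{X}_i^\top$ in operator norm directly through sub-Gaussian matrix concentration under Assumption \ref{as4}, and it is precisely the scaling built into Assumption \ref{as6}, namely $C_{\min}^2/D_{\max}>c_2\,d\log m/m^{1/2}$, that renders the remainder $o_{\mathrm{pr}}(C_{\min})$ (this is the analogue of the remainder bound in \cite{ravikumar2010high}, but for unbounded covariates). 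On the resulting event one obtains $u^\top\nabla^2\hat{L}(\theta^*+tu)u\ge (C_{\min}/4)\|u\|^2=(C_{\min}/4)B^2$, so that $G(u)\ge(C_{\min}/8)B^2-O_{\mathrm{pr}}\{(d\log m/m)^{1/2}\}B$; choosing $C_1$ large enough makes the quadratic term dominate, giving $G(u)>0$ on the sphere and completing the argument.
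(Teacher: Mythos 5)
Your proposal is correct and follows essentially the same route as the paper's proof: the convexity-plus-sphere argument for $G(u)=\hat{L}(\theta^*+u)-\hat{L}(\theta^*)$, Lemma \ref{lem2} with a union bound and $\|u\|_1\le d^{1/2}\|u\|$ for the gradient term, Lemma \ref{lem3} for $\Lambda_{\min}(Q^m)\ge C_{\min}/2$, and a perturbation bound on the Hessian at the intermediate point controlled in operator norm via the sub-Gaussianity of $u^\top\widetilde{X}_i$, the bound on $\Lambda_{\max}\{m^{-1}\sum_i\widetilde{X}_i\widetilde{X}_i^\top\}$, and the scaling condition (\ref{eqn:low_b}) of Assumption \ref{as6}. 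The only cosmetic difference is that you phrase the remainder via the Lipschitz constant of $\eta$ while the paper writes it exactly with $\eta'$; the resulting bound is the same.
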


\begin{proof}
For  $u \in   \mathbb{R}^d$  let
\[
G(u)  \,=\,   \hat{L}(\theta^* + u) - \hat{L}(\theta^*).
\]
Clearly,  $G(0) =  0$,   and  $G(\hat{u})  \leq  0$  where  $\hat{u} = \hat{\theta} - \theta^*$.
Let 
\begin{equation}
\label{eqn:b}
B  := \frac{8 \sigma_0 }{C_{\min}}   \left( \frac{12 d\log m}{m} \right)^{1/2}.
\end{equation}
We proceed to  show that   $G(u) >0$  for  all $\|u\|  = B$, which implies,  by convexity, that $\|\hat{u}\|\leq B$. Towards  that end, notice that,  by Taylor's theorem,  we have 
\[
G(u) =  \nabla \hat{L}(\theta^*)^{\top } u  +  u^{\top} \nabla^2 \hat{L}(\theta^*  +  \alpha   u ) u,
\]
for some $\alpha \in [0,1]$. Also,
\[
\vert \nabla \hat{L}(\theta^*)^{\top} u   \vert  \leq   \|\nabla \hat{L}(\theta^*)\|_{\infty} \|u\|_1 \leq   d^{1/2}\|\nabla \hat{L}(\theta^*)\|_{\infty} \|u\|.
\]
Hence,  by Lemma \ref{lem2} and a union bound,
\begin{equation}
\label{eqn:gradient_part}
G(u)   \geq -  d^{1/2} \|u\|\sigma_0   \left( \frac{12 \log  m}{m} \right)^{1/2}    +  u^T \nabla^2 \hat{L}(\theta^*  +  \alpha   u ) u,  
\end{equation}
with probability  at least  $1-  1/m$. 

Furthermore,
\[
\begin{array}{lll}
\displaystyle 	\nabla^2 L(\theta^* +  \alpha u)  
& =& \displaystyle   \frac{1}{m} \sum_{i=1}^m F^{\prime}\{ X_i^{\prime \top} (\theta^* +  \alpha u)  \}X_i^{\prime} X_i^{\prime \top}\\
& =: &  \mathcal{A}_1.
\end{array}
\]
Also,
\begin{equation}
\label{eqn:eigen1}
\begin{array}{lll}
q^*    &  := & \displaystyle   \Lambda_{\min}(  \mathcal{A}_1   )\\
& \geq  &\displaystyle \underset{  \alpha \in [0,1] }{\min} \,\Lambda_{\min}\left[  \frac{1}{m }   \sum_{i=1}^{m}  \eta\{X_i^{\prime \top}( \theta^* + \alpha u  )  \}  \, X_i^{\prime} X_i^{\prime \top}     \right ]\\
& \geq & \displaystyle \underset{  \alpha \in [0,1] }{\min} \,\Lambda_{\min}\left\{  \frac{1}{m  }   \sum_{i=1}^{m}  \eta( X_i^{\prime \top} \theta^*    )  \, X_i^{\prime} X_i^{\prime \top}     \right \}      \,-\,\\
& &  \displaystyle\underset{  \alpha \in [0,1] }{\max}  \left\|      \frac{1}{m}   \sum_{i=1}^{m}     \eta^{\prime}\{ X_i^{\prime \top}(\theta^* +  \alpha u )    \}(u^T X_i^{\prime }) X_i^{\prime} X_i^{\prime \top}  \right\|_2\\
& = & \displaystyle          \Lambda_{\min}(Q^m)  \, -\, \underset{  \alpha \in [0,1] }{\max}  \|A(\alpha)\|_2,\\
\end{array}
\end{equation}
where
\[
A(\alpha)   =       \frac{1}{m}   \sum_{i=1}^{m}     \eta^{\prime}\{  X_i^{\prime \top}(\theta^* +  \alpha u )    \} (u^T X_i^{\prime}) X_i^{\prime } X_i^{\prime \top} . 
\]

Now   for  $\alpha \in [0,1]$,  $v \in  \mathbb{R}^d$  with  $\|v\|=1$, we have 
\begin{equation}
\label{eigen2}
\begin{array}{lll}
\displaystyle    v^T  A(\alpha) v   &  = &    \displaystyle\frac{1}{m} \sum_{i=1}^{m} \eta^{\prime}\{ X_i^{\prime} ( \theta^* +  \alpha u )    \}\left( u^{\top} X_i^{\prime}   \right)\left( X_i^{\prime \top} v  \right)^2\\
& \leq & \displaystyle      \|u\|   \left\{ \underset{i = 1,\ldots ,m }{\max} \,  \left\vert (u/ \| u\|)^{\top} X_i^{\prime} \right\vert  \right\}  \|\eta^{\prime}\|_{\infty}  \frac{1}{m} \sum_{i=1}^{m }  \left( X_i^{\prime \top} v  \right)^2 \\
& \leq&  \displaystyle  \,     \|u\|   \left\{ \underset{i = 1,\ldots ,m }{\max} \,  \left\vert (u/ \| u\|)^{\top} X_i^{\prime } \right\vert  \right\} \left\| \frac{1}{m}  \sum_{i=1}^{m}   X_i^{\prime}  X_i^{\prime \top} \right\|_2\\
& \leq &\displaystyle c_2  d^{1/2} \log^{1/2} m \|u\|\, D_{\max}.
\end{array}
\end{equation}
with probability approaching  one, and where $c_2 >0$ is a constant. Here, we have used the fact the random variables  $\{ (u/ \| u\|)^{\top} X_i^{\prime}  \}_{i=1}^m$ are sub-Gaussian, and the second claim in Lemma \ref{lem3}. 
Therefore, with  probability approaching  one,
\[  
\begin{array}{lll}
\displaystyle	G(u)   &\geq &	\displaystyle  -  d^{1/2} \|u\|\sigma_0   \left( \frac{12 \log  m}{m} \right)^{1/2}     +  \frac{C_{\min}  \|u\|^2 }{2} -  c_2  (d \log m)^{1/2} \|u\|^3\, D_{\max}\\
& \geq&  	\displaystyle  - d^{1/2} \|u\|\sigma_0   \left( \frac{12 \log  m}{m} \right)^{1/2}     +  \frac{C_{\min}  \|u\|^2 }{4}\\
& >&0,
\end{array}
\]
where  the  first  inequality  follows from (\ref{eqn:gradient_part})-- (\ref{eigen2}), the second from  (\ref{eqn:b}) and (\ref{eqn:low_b}), and the  third  since
\[
\|u\| =  B   >  \frac{4 \sigma_0 }{C_{\min}}   \left(  \frac{12 d \log m}{m} \right)^{1/2}. 
\]
\end{proof}

\begin{lemma}
\label{thm:parametric}
Under Assumptions  \ref{as1} and \ref{as4}--\ref{as6},  there  exists a constant  $C_0>0$   such that 
\[
\underset{i =1,\ldots,n }{\sup}\,\vert  \hat{e}(X_i) -  e(X_i) \vert  \leq  C_0 \frac{d}{ C_{\min} }\left[  \frac{  (\log m)\{\log (nd)\} }{m} \right]^{1/2},    
\]
with probability approaching one, provided that $d = O(m)$.
\end{lemma}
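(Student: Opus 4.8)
The plan is to reduce the uniform control of $|\hat{e}(X_i) - e(X_i)|$ to the $\ell_2$ bound on $\hat{\theta} - \theta^*$ already supplied by Lemma \ref{lem4}, paying only a logarithmic price for passing to a maximum over the $n$ design points. First I would exploit that the logistic link is globally Lipschitz: since $F'(t) = F(t)\{1 - F(t)\} \le 1/4$ for every $t$, the mean value theorem gives
\[
\underset{i=1,\ldots,n}{\sup}\,|\hat{e}(X_i) - e(X_i)| = \underset{i}{\sup}\,\left|F(X_i^{\top}\hat{\theta}) - F(X_i^{\top}\theta^*)\right| \le \frac{1}{4}\,\underset{i}{\sup}\,\left|X_i^{\top}(\hat{\theta} - \theta^*)\right|.
\]
This removes the nonlinearity and leaves a linear functional of the estimation error $u := \hat{\theta} - \theta^*$ to control.

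Second, rather than Cauchy--Schwarz, I would apply H\"{o}lder's inequality in the $\ell_1$--$\ell_\infty$ pairing, so as to keep a sharp dependence on the dimension and to isolate the individual entries of the design:
\[
\underset{i}{\sup}\,\left|X_i^{\top} u\right| \le \|u\|_1 \underset{1\le i\le n,\,1\le j\le d}{\max}|X_{ij}| \le d^{1/2}\,\|u\|_2\,\underset{i,j}{\max}\,|X_{ij}|.
\]
Lemma \ref{lem4} controls the middle factor, giving $\|u\|_2 \le (C_1/C_{\min})(d\log m/m)^{1/2}$ with probability tending to one; here the sample-splitting independence of $\hat{\theta}$ from $\{X_i\}_{i=1}^n$ is convenient, since it lets this event and the entrywise bound below be combined by a union bound. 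For the last factor I would use that, taking $v = \boldsymbol{e}_j$ in the definition of the sub-Gaussian norm, Assumption \ref{as4} forces each coordinate to satisfy $\|X_{ij}\|_{\psi_2} < C$; a standard sub-Gaussian tail bound together with a union bound over the $nd$ entries then yields $\max_{i,j}|X_{ij}| = O_{\mathrm{pr}}\{(\log(nd))^{1/2}\}$. Multiplying the three displays gives
\[
\underset{i}{\sup}\,|\hat{e}(X_i) - e(X_i)| = O_{\mathrm{pr}}\left\{\frac{d}{C_{\min}}\left(\frac{\log m\,\log(nd)}{m}\right)^{1/2}\right\},
\]
which is the claimed bound; the side condition $d = O(m)$ is what makes the scaling of Lemma \ref{lem4} and Assumption \ref{as6} applicable.

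As for the main obstacle: once Lemma \ref{lem4} is in hand there is no deep difficulty remaining, and the only point requiring genuine care is the dimensional bookkeeping. The choice of the $\ell_1$--$\ell_\infty$ H\"{o}lder pairing (rather than $\ell_2$--$\ell_2$ Cauchy--Schwarz) is precisely what produces the factor $d$ and the $\log(nd)$ inside the entrywise maximum. A naive $\ell_2$ bound would instead involve $\|X_i\|_2$, whose maximum over $n$ sub-Gaussian vectors concentrates near $(d + \log n)^{1/2}$ and yields a different, less transparent dependence. I would therefore be deliberate about keeping $\|u\|_1$ paired with the entrywise sup-norm of the design, and about verifying that the two high-probability events (the coefficient bound from the auxiliary sample and the design maximum from the primary sample) can be intersected so that the conclusion holds with probability approaching one.
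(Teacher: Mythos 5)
Your proposal is correct and follows essentially the same route as the paper's proof: Lipschitz continuity of the logistic link, reduction to $d^{1/2}\,\|\hat{\theta}-\theta^*\|_2\,\max_{i,j}|X_{ij}|$ (the paper gets there via Cauchy--Schwarz plus $\|X_i\|_2\le d^{1/2}\max_j|X_{ij}|$, you via the $\ell_1$--$\ell_\infty$ pairing plus $\|u\|_1\le d^{1/2}\|u\|_2$ --- algebraically the same bound), a sub-Gaussian union bound over the $nd$ entries giving the $\log^{1/2}(nd)$ factor, and Lemma \ref{lem4} for the coefficient error. The only cosmetic difference is that the paper centers the coordinates before applying the sub-Gaussian tail bound, which your argument absorbs into the same step.
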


\begin{proof}
We have  that 
\begin{equation}
\label{eqn:calculation}
\begin{array}{lll}
\underset{ i = 1,\ldots, n}{\max}\,\left\vert  \hat{e}(X_i) -   e(X_i)\right\vert & = &  \underset{ i = 1,\ldots, n}{\max}\,\left\vert  F( X_i^{\top}\hat{\theta} ) -  F(X_i^{\top}  \theta^* )\right\vert \\ 
& \leq &  \underset{ i = 1,\ldots, n}{\max}\,  \| F^{\prime} \|_{\infty}  \left\vert     X_i^{\top}\hat{\theta}  -X_i^{\top}  \theta^*\right\vert\\
& \leq &\|     \hat{\theta}  -   \theta^*\|\, \underset{ i = 1,\ldots, n,\,\,}{\max}\,  \|X_{i}\|  \\
&\leq&d^{1/2}\, \|     \hat{\theta}  -   \theta^*\|\,   \underset{ i = 1,\ldots, n,\,\, j = 1,\ldots, d }{\max}\,  \vert X_{i,j}\vert  \\
&\leq&d^{1/2}\, \|     \hat{\theta}  -   \theta^*\|\,   \underset{ i = 1,\ldots, n,\,\, j = 1,\ldots, d }{\max}\,  \vert X_{i,j}  - (\mathbb{E}(X))_j \vert  \\
& &  +  d^{1/2}\, \|     \hat{\theta}  -   \theta^*\|\,   \|E(X)\|_{\infty},\\
\end{array}
\end{equation}
and the claim follows  by  Lemma \ref{lem4}.

\end{proof}

\subsection{Lemma combining both stages}

\begin{lemma}
\label{lem5}
There exists  a positive  constant $C_1$ such that the event
\[
\underset{i = 1,\ldots, n}{\max }\,  \vert  i -  \sigma^{-1} \{\hat{\sigma}(i)\}  \vert   \leq  C_1 \max\{  \log n,  n \delta \},     
\]	
holds with probability  approaching one.

\end{lemma}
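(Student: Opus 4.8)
The plan is to reduce the rank displacement $|i - \sigma^{-1}\{\hat\sigma(i)\}|$ to a counting problem---how many units have true propensity scores clustered within $O(\epsilon)$ of a given value---and then to control that count uniformly over all centers via a covering argument combined with a binomial concentration inequality. Throughout I work on the event, shown in Lemma \ref{thm:parametric} to have probability approaching one, that $\sup_{i}\,|\hat e(X_i) - e(X_i)| \le C_0\,\epsilon$, with $\epsilon$ the deterministic sequence defined above and $C_0$ the constant from that lemma.

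First I would establish the reduction. Fix a position $i$ and set $j = \hat\sigma(i)$, so that $i = \hat\sigma^{-1}(j)$ is the rank of unit $j$ in the ordering induced by $\hat e$, while $\sigma^{-1}(j)$ is its rank in the ordering induced by $e$. Writing each rank as a count and subtracting, $i - \sigma^{-1}(j)$ equals the net number of units $k$ whose relative order with $j$ disagrees under $\hat e$ and under $e$. Such a disagreement forces $|e(X_k) - e(X_j)| \le 2C_0\epsilon$: if instead $e(X_k) > e(X_j) + 2C_0\epsilon$, then $\hat e(X_k) \ge e(X_k) - C_0\epsilon > e(X_j) + C_0\epsilon \ge \hat e(X_j)$, so the order is preserved, and symmetrically in the other direction. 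Consequently,
\[
\max_{i}\,\bigl|i - \sigma^{-1}\{\hat\sigma(i)\}\bigr| \;\le\; \max_{j}\,\#\{\,k : |e(X_k) - e(X_j)| \le 2C_0\epsilon\,\}.
\]

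It then remains to bound the right-hand side uniformly in $j$. Since the support $[e_{\min},e_{\max}]$ of $e(X)$ is compact (Assumption \ref{as1}), I cover it by $O(1/\epsilon)$ intervals of width $2C_0\epsilon$; any window of radius $2C_0\epsilon$ meets at most three consecutive grid intervals, so it suffices to control $N_I = \#\{k : e(X_k) \in I\}$ for each grid interval $I$. Each $N_I$ is $\mathrm{Binomial}(n,p_I)$, and the regularity condition (\ref{eqn:regularity}) of Assumption \ref{as5} gives $p_I \le a_2 C_0\,\epsilon$, hence $\mathbb{E}[N_I] \le a_2 C_0\, n\epsilon$. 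Applying Bernstein's inequality to each $N_I$ and a union bound over the $O(1/\epsilon)$ intervals---whose number is at most polynomial in $n$ under the scaling assumptions of Theorem \ref{thm:tv}, so that its logarithmic cost is $O(\log n)$---yields $\max_I N_I \le C_1' \max\{\log n, n\epsilon\}$ with probability approaching one. The two regimes are exactly what Bernstein's inequality produces: when $n\epsilon \gtrsim \log n$ the sub-Gaussian term dominates and deviations of order $n\epsilon$ are controlled at rate $\exp(-c\,n\epsilon)$, which beats the union bound, whereas when $n\epsilon \lesssim \log n$ the mean is negligible and the linear (Poisson-type) term pushes the threshold up to order $\log n$. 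Combining this with the factor of three from the covering and the reduction of the previous paragraph gives the claim. The main obstacle is making the count control hold simultaneously over all $n$ centers while keeping the grid fine enough to capture every window; here it is essential that $\hat e$ is built from the auxiliary sample $\{(\widetilde X_i,\widetilde Z_i)\}$, independent of $\{(X_i,Z_i,Y_i)\}$, so that the window radius $2C_0\epsilon$ is deterministic and $N_I$ depends only on the distribution of $e(X)$.
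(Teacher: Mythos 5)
Your proof is correct and follows essentially the same route as the paper's: both reduce the rank displacement to the count of units whose true propensity scores lie within $2C_0\epsilon$ of a given value (using the uniform approximation from Lemma \ref{thm:parametric}) and then control that count by binomial concentration plus a union bound. The only difference is technical bookkeeping---you union bound over a deterministic grid of $O(1/\epsilon)$ intervals with Bernstein's inequality, while the paper centers the windows at the sample points $e(X_i)$ themselves, absorbs the $\log n$ regime into $\tilde\epsilon = \max\{\epsilon, C_1 n^{-1}\log n\}$, and applies Chernoff's inequality; both yield the stated $\max\{\log n, n\epsilon\}$ bound.
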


\begin{proof}
First, by  Lemma \ref{thm:parametric}, we will assume that  the event 
\begin{equation}
\label{eqn:uniform_p}  
\underset{i =1,\ldots,n }{\sup}\,\vert  \hat{e}(X_i) -   e(X_i) \vert  \leq  C_0 \delta,
\end{equation}
holds.

Then for  each $i \in \{1,\ldots,n\}$ define  
\[
m_i  = \left\vert  \left\{ k  \in   \{1,\ldots,n \} \backslash \{i\}  \,:\, e(X_k) \in \left( e(X_i) -2C_0 \tilde{\delta}   , e(X_i) + 2C_0 \tilde{\delta} \right)    \right\}   \right\vert.
\]
where  $\tilde{\delta} =  \max\{ \delta\,,\, C_1 \,n^{-1} \log n \}$ for a  positive  constant  $C_1$  to be chosen later.

Then by Assumption \ref{as5},
\[
m_i \,\sim\, \text{Binomial}\left(n-1,  \int_{ 0 }^{ 1 }\int_{ \max\{ 0, t -2 C_0\tilde{\delta}    \} }^{ \min\{1,  t+ 2  C_0\tilde{\delta}   \} }     h(s)\,h(t) \, ds \,dt   \right).
\]
Hence,
\[
E(m_i) \leq  4  C_0 h_{\max}^2 n \tilde{\delta}.
\]
Therefore,  by a union bound, and Chernoff's inequality,
\begin{equation}
\label{eqn:binom_concent.}
\begin{array}{lll}
\mathbb{P}\left(  \underset{ i = 1, \ldots,n }{\max}\,  m_i \,\geq \,  12 C_0\,h_{\max}^2 n \tilde{\delta}	 \right)  & \leq &  \exp\left(- C_1\log n/4     +\log n \right)  \,\rightarrow   0,  \,\,\,\,\,\,\,\text{as}  \,\,  n\rightarrow \infty,
\end{array}
\end{equation}
provide that  $C_1$  is  chosen large  enough.

On the other hand,  by (\ref{eqn:uniform_p}), we have 
\begin{itemize}
\item If $e(X_{ i^{\prime} })   <   e(X_i) -  2 C_0 \tilde{\delta}$, then   $\hat{e}(X_{ i^{\prime}  })  <  \hat{e}(X_i)$.
\item   If $e(X_{ i^{\prime} })   >   e(X_i) +  2 C_0 \tilde{\delta}$, then   $\hat{e}(X_{ i^{\prime}  })  >  \hat{e}(X_i)$.
\end{itemize}

Therefore,	
\[
\underset{i = 1,\ldots, n}{\max }\,  \vert  i -  \sigma^{-1} \{\hat{\sigma}(i)\}  \vert \,\leq \,  \underset{ i = 1, \ldots,n }{\max}\,  m_i,
\]
and the claim  follows combining (\ref{eqn:uniform_p}) and (\ref{eqn:binom_concent.}).
\end{proof}

\begin{lemma}
\label{lem7}
With the notation from before,
\[
\displaystyle	  \sum_{i=1}^{n-1} \vert f_1\{ e(X_{\hat{\sigma}(i)}  )\}   - f_1\{ e(X_{\hat{\sigma}(i+1)}  ) \}    \vert  \,=\, O_{\mathbb{P}}\left(   \max\{\log n, n\delta  \}\, \right),
\]
and
\[
\displaystyle		  \sum_{i=1}^{n-1} \vert f_0\{e(X_{\hat{\sigma}(i)}  )  \}   - f_0\{ e(X_{\hat{\sigma}(i+1)}  )  \}    \vert  \,=\,  O_{\mathbb{P}}\left(  \max\{\log n, n\delta \}\,  \right).
\]
\end{lemma}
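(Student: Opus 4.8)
The plan is to reduce the whole estimate to the bounded-variation control on $f_1$ and $f_0$ supplied by Assumption \ref{as3}, paying only a multiplicative price governed by how far the estimated ordering $\hat\sigma$ can drift from the true ordering $\sigma$. Throughout I would condition on the high-probability event of Lemma \ref{lem5}, so that with $D_n := C_1\max\{\log n, n\epsilon\}$ one has $\max_i |i - \sigma^{-1}\{\hat\sigma(i)\}| \le D_n$; this only costs a failure probability tending to zero. Since $e(X)$ is continuous by Assumption \ref{as5}, almost surely there are no ties and both $\sigma$ and $\hat\sigma$ are well defined.

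First I would set $\pi := \sigma^{-1}\circ\hat\sigma$, a (random) permutation satisfying $|\pi(i)-i|\le D_n$ for every $i$, and write $a_k := f_1\{e(X_{\sigma(k)})\}$ for the value of $f_1\circ e$ at the $k$-th unit in the \emph{true} ordering. Because $e(X_{\sigma(1)})\le\cdots\le e(X_{\sigma(n)})$ is monotone, the consecutive differences along the true ordering obey $\sum_{k=1}^{n-1}|a_{k+1}-a_k|\le \mathrm{TV}(f_1)=O(1)$ by Assumption \ref{as3}. The target quantity is exactly $\sum_{i=1}^{n-1}|a_{\pi(i+1)}-a_{\pi(i)}|$, so the task becomes bounding this reordered sum by the true-ordering total variation.

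The key step is a telescoping-plus-multiplicity argument. Applying the triangle inequality along the true ordering, I would bound each term $|a_{\pi(i+1)}-a_{\pi(i)}|$ by the sum of $|a_{k+1}-a_k|$ over edges $(k,k+1)$ lying between $\pi(i)$ and $\pi(i+1)$. Summing over $i$ and interchanging the order of summation, each true-ordering edge $(k,k+1)$ is charged once for every $i$ whose interval $[\min(\pi(i),\pi(i+1)),\max(\pi(i),\pi(i+1))]$ straddles it. The displacement bound $|\pi(i)-i|\le D_n$ forces any such $i$ to satisfy $|i-k|\le D_n$, so each edge is charged at most $2D_n+1=O(D_n)$ times. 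Hence $\sum_{i=1}^{n-1}|a_{\pi(i+1)}-a_{\pi(i)}| \le (2D_n+1)\,\mathrm{TV}(f_1)=O(D_n)=O(\max\{\log n, n\epsilon\})$, which is the first claim; the identical argument with $a_k=f_0\{e(X_{\sigma(k)})\}$ yields the $f_0$ statement.

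The main obstacle is precisely the multiplicity count in the final step, namely verifying that $|\pi(i)-i|\le D_n$ genuinely caps at $O(D_n)$ the number of reordered adjacent pairs whose span can cross any fixed true-ordering edge. Everything else is bookkeeping: the total-variation control is immediate from Assumption \ref{as3}, and the reduction to the event of Lemma \ref{lem5} is free up to vanishing probability. The one detail I would double-check is that the triangle-inequality expansion uses only edges strictly between the two endpoints $\pi(i)$ and $\pi(i+1)$, so that no spurious edges are introduced and the charging bound remains tight up to the stated constant.
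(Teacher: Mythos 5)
Your proposal is correct and follows essentially the same route as the paper: condition on the displacement event from Lemma \ref{lem5}, expand each reordered difference by the triangle inequality along the true ordering of $e(X_{\sigma(1)})\le\cdots\le e(X_{\sigma(n)})$, and count that each true-ordering edge is charged $O(\max\{\log n,n\epsilon\})$ times before invoking the bounded-variation control of Assumption \ref{as3}. Your explicit multiplicity verification (that a straddling index $i$ must satisfy $|i-k|\le D_n$) is in fact slightly more careful than the paper's one-line interchange of summation, but the argument is the same.
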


\begin{proof}
Suppose that  the event 
\begin{equation}
\label{eqn:event}
\underset{i = 1,\ldots, n}{\max }\,  \vert  i - \sigma^{-1} \{\hat{\sigma}(i)\}  \vert  \leq  C_1  \max\{ \log n ,  n\delta   \},
\end{equation}

holds  for some positive  constant  $C_1$, see Lemma \ref{lem5}.

Then
\[
\begin{array}{lll}
\displaystyle	  \sum_{i=1}^{n-1} \vert f_1\{ e(X_{\hat{\sigma}(i)}  )\}   - f_1\{ e(X_{\hat{\sigma}(i+1)}  ) \}    \vert     & \leq & \displaystyle	\sum_{i=1}^{n-1} \,\sum_{j = \max\{1, i - C_1  \max\{ \log n ,  n\delta   \}   \}   }^{   \min\{n, i +  C_1  \max\{ \log n ,  n\delta   \}\}   }     \vert f_1\{e(X_{\sigma(j)})  \}  - f_1\{ e(X_{\sigma(j+1)}  ) \}     \vert \\
& \leq& \displaystyle 2 C_1   \max\{ \log n ,  n\delta   \}   \sum_{i=1}^{n-1} \vert f_1\{ e(X_{\sigma(i)}  )\}   - f_1\{ e(X_{\sigma(i+1)}  ) \}     \vert\\
\end{array}
\]
and the claim follows.

\end{proof}

\begin{lemma}
\label{lem6}
There exists  a positive constant $c_1$  such that the event
\begin{equation}
\label{eqn:rate}
\max\left\{   \underset{ i = 1,\ldots, n  }{\max} \,\underset{  Z_j \neq  Z_i  }{\min}\,  \left\vert     e(X_i) - e(X_j)  \right\vert,\,\, \underset{ i = 1,\ldots, n  }{\max} \,\underset{  Z_j =  Z_i  }{\min}\,  \left\vert     e(X_i) - e(X_j)  \right\vert \right\} \,\leq\, \frac{ c_1  \log n }{n},
\end{equation}
happens with probability  approaching one.

\end{lemma}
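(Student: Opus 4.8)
The plan is to fix the threshold $t = c_1 \log n / n$ and to show, via a union bound over the $n$ units, that with probability tending to one every unit $i$ has a near-neighbor in \emph{both} the opposite and the same treatment arm: that there exists $j$ with $Z_j \neq Z_i$ (resp. $Z_j = Z_i$) and $|e(X_i) - e(X_j)| \leq t$. It suffices to bound, for a fixed $i$, the probability of the bad event that no such $j$ exists, and then to sum over $i$ and over the two arms. Since $t = c_1 \log n / n \to 0$, for $n$ large we may assume $t < t_0$, so that the regularity bound (\ref{eqn:regularity}) of Assumption \ref{as5} applies.

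First I would condition on $(X_i, Z_i)$ and use that the remaining pairs $\{(X_j, Z_j)\}_{j \neq i}$ are i.i.d.\ and independent of $(X_i, Z_i)$. Writing $p_i$ for the conditional probability that a single other unit lands in the window and carries the opposite treatment, the tower property gives
\[
p_i = E\!\left[\boldsymbol{1}\{|e(X_j)-e(X_i)|\le t\}\,\mathrm{pr}(Z_j \neq Z_i \mid X_j)\,\middle|\, X_i, Z_i\right].
\]
By the overlap condition (Assumption \ref{as1}), $\mathrm{pr}(Z_j = 1 \mid X_j) = e(X_j) \in [e_{\min}, e_{\max}]$, so $\mathrm{pr}(Z_j \neq Z_i \mid X_j) \geq \underline{e} := \min\{e_{\min}, 1 - e_{\max}\} > 0$ regardless of the value of $Z_i$. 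Pulling this constant out and applying the lower half of (\ref{eqn:regularity}) with $b = e(X_i)$, which is legitimate because $e(X_i)$ lies in the support of $e(X)$, yields $p_i \geq \underline{e}\,a_1\,t$.

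The bad event, that none of the $n-1$ other units falls in the window with opposite treatment, then has conditional probability at most $(1 - p_i)^{n-1} \leq \exp\{-\underline{e}\,a_1\,t\,(n-1)\}$, a deterministic bound, so it holds unconditionally as well. With $t = c_1 \log n / n$ this is at most $n^{-\underline{e}\,a_1\,c_1/2}$ for $n$ large, since $(1 - 1/n) \geq 1/2$ eventually. The same-arm case is entirely symmetric, because $\mathrm{pr}(Z_j = Z_i \mid X_j) \geq \underline{e}$ as well, giving an identical per-unit bound. A union bound over the $n$ units and the two arms then produces a total failure probability of at most $2\,n^{\,1 - \underline{e}\,a_1\,c_1/2}$, which tends to zero once $c_1$ is taken large enough that $\underline{e}\,a_1\,c_1 > 2$.

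There is no deep obstacle here; the only point requiring genuine care is the uniform lower bound $p_i \geq \underline{e}\,a_1\,t$, where one must verify that combining the density-regularity bound with overlap does not degrade near the endpoints $e_{\min}, e_{\max}$ of the support. This is precisely why Assumption \ref{as5} states (\ref{eqn:regularity}) as a two-sided bound valid for \emph{all} $b$ in the support, including boundary points, rather than merely asserting a bounded density: the lower bound $a_1 t$ remains available even when the window $[e(X_i)-t, e(X_i)+t]$ is one-sidedly truncated by the support.
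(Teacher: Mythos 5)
Your proof is correct, but it takes a genuinely different route from the paper's. The paper argues in two stages: it first works in the propensity-score-sorted order and bounds the length of the longest run of consecutive units sharing (or opposing) unit $i$'s treatment arm via a geometric tail bound $\max\{e_{\max},1-e_{\min}\}^{c_1\log n}$ plus a union bound, concluding that within $O(\log n)$ \emph{ranked} positions of each $i$ there is a unit of each arm; it then separately invokes Assumption \ref{as5} together with the argument of Proposition 30 of \cite{von2014hitting} to convert $O(\log n)$ ranked positions into an $O(\log n/n)$ gap in propensity value. You instead collapse these two sources of randomness into a single calculation: for each $i$ you lower-bound the probability $p_i \geq \underline{e}\,a_1 t$ that one other unit lands in the window $[e(X_i)-t,\,e(X_i)+t]$ \emph{and} carries the required arm, using overlap for the arm and the lower half of (\ref{eqn:regularity}) for the window, and then bound the empty-window probability by $(1-p_i)^{n-1}$. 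Your version is more self-contained (no appeal to the external nearest-neighbor result) and makes explicit exactly where the two-sided validity of (\ref{eqn:regularity}) on the whole support is used; the paper's version separates the treatment-assignment randomness from the score-spacing randomness, which keeps each step simpler at the cost of the citation. Both yield the same $c_1\log n/n$ rate with $c_1$ determined by $e_{\min}$, $e_{\max}$, and (in your case) $a_1$.
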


\begin{proof}
Let  $c_1>0$  be a constant to be chosen later. 
Fix  $i \in \{1,\ldots,n\}$ and  define  
$$\Lambda_i =  \left\{   j \geq  1 \,:\,   Z_{i+k} \neq  Z_i  \,\,\text{for}\,\,\,k \in \{1,\ldots,j\}   \right\}, $$
and    $m_i  =  \max \{a  \,:\,  a\in  \Lambda_i  \}$.

Then we have that 
\[
\begin{array}{lll}
\displaystyle    \mathbb{P}\left(m_i \geq   c_1 \log n   \right) &= &\displaystyle    \mathbb{P}\left(m_i \geq   c_1 \log n  |Z_i =  1  \right) \mathbb{P}(Z_i=1)  + \mathbb{P}\left(m_i \geq   c_1 \log n  |Z_i =  1  \right) \mathbb{P}(Z_i=0)    \\
& \leq & \displaystyle    \mathbb{P}\left(m_i \geq   c_1 \log n  |Z_i =  1  \right) e_{\max}  +  \mathbb{P}\left(m_i \geq   c_1 \log n  |Z_i =  0  \right) (1-e_{\min})    \\
& \leq &\displaystyle     (1-e_{\min})^{c_1 \log n  }e_{\max}  + e_{\max}^{c_1 \log n  } (1-e_{\min})\\
& \leq&   \max\{e_{\max},1-e_{\min}  \}^{  c_1 \log n   }\\
& \leq& \exp\{ c_1 (\max\{\log(e_{\max}),\log(1-e_{\min})  \}) \log n    \}.
\end{array}
\]
Hence,  by union bound, 
\[
\begin{array}{lll}
\displaystyle    \mathbb{P}\left(  \underset{i=1,\ldots,n}{\max}m_i \geq   c_1 \log n   \right) 
& \leq& \exp\{ c_1 (\max\{\log(e_{\max}),\log(1-e_{\min})  \}) \log n     +  \log n\},
\end{array}
\]
and so we set  $c_1 =  -2/\max\{\log(e_{\max}),\log(1-e_{\min})  \}$.  The claim follows  by Assumption \ref{as5} and the argument in the proof of Proposition 30 in \cite{von2014hitting}  implying that, with high probability,  the distance of   of each $ e(X_i)$ to its  $r$  nearest  neighbor is of order    $O(\log n/n)$  for  $r \asymp \log n$.

\end{proof}

\begin{lemma}
\label{lem8}
For any $i \in \{1,\ldots, n\}$ let 
\[
\xi_i\,=\, \left\vert  \{  j  \in \{1,\ldots,n\} \,:\,\,\,\,\,\text{and }\,\,\,\,\,  e(X_i) \leq  e(X_j) \leq  e(X_{N(i)}) \,\,\,\text{or}\,\,\,  e(X_{N(i)}) \leq e(X_j) \leq  e(X_{i})    \} \right\vert.
\]
Then, for some constant $C_2 >0$,
\[
\underset{i =1 ,\ldots,n}{\max}\,\xi_i  \,\leq \, C_2 \max\{  \log n,  n \delta \},
\]
with probability approaching one.
\end{lemma}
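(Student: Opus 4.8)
The plan is to show that the matched partner $N(i)$ of every unit is close to $i$ in its \emph{true} propensity score, and then to count how many sample scores can fall into the resulting short interval. The argument parallels that of Lemma~\ref{lem5}: we first pin down a deterministic interval length, and then control the maximal cardinality by a Binomial concentration bound together with a union bound over $i$.

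First I would condition on the high-probability event of Lemma~\ref{thm:parametric},
\[
\sup_{i=1,\ldots,n}\,\bigl|\hat e(X_i)-e(X_i)\bigr|\,\le\, C_0\,\epsilon .
\]
The crux of the proof is to turn the fact that $N(i)$ minimizes $|\hat e(X_i)-\hat e(X_j)|$ over $j$ with $Z_j\neq Z_i$ into a bound on $|e(X_i)-e(X_{N(i)})|$. To do this I would invoke Lemma~\ref{lem6}, which guarantees a witness $j^{*}$ with $Z_{j^{*}}\neq Z_i$ and $|e(X_i)-e(X_{j^{*}})|\le c_1\log n/n$. By the defining optimality of $N(i)$ together with two passes through the uniform approximation via the triangle inequality,
\[
\bigl|e(X_i)-e(X_{N(i)})\bigr|\ \le\ \bigl|\hat e(X_i)-\hat e(X_{N(i)})\bigr|+2C_0\epsilon\ \le\ \bigl|\hat e(X_i)-\hat e(X_{j^{*}})\bigr|+2C_0\epsilon\ \le\ \frac{c_1\log n}{n}+4C_0\epsilon .
\]
Set $L:=c_1\log n/n+4C_0\epsilon=O(\max\{\log n/n,\epsilon\})$. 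On the conditioning event the interval between $e(X_i)$ and $e(X_{N(i)})$ is contained in $[\,e(X_i)-L,\,e(X_i)+L\,]$, so $\xi_i$ is dominated by the deterministic-window count $\widetilde m_i:=|\{j\neq i:\ |e(X_j)-e(X_i)|\le L\}|$.

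It then remains to bound $\max_i\widetilde m_i$. Conditionally on $X_i$, the variable $\widetilde m_i$ is $\mathrm{Binomial}(n-1,p_i)$ with $p_i=\mathrm{pr}(|e(X_j)-e(X_i)|\le L)\le 2h_{\max}L$ by the bounded-density part of Assumption~\ref{as5}, whence $E(\widetilde m_i)\le 2h_{\max}(n-1)L=O(\max\{\log n,n\epsilon\})$. Exactly as in the Chernoff-plus-union-bound step of Lemma~\ref{lem5} (display~(\ref{eqn:binom_concent.})), this gives $\max_{i}\widetilde m_i=O(\max\{\log n,n\epsilon\})$ with probability tending to one, and intersecting with the event of Lemma~\ref{thm:parametric} yields the claim. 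The main obstacle is the crux step above: everything hinges on certifying that the \emph{estimated}-score nearest neighbour $N(i)$ cannot be far in \emph{true} score, which forces one to exhibit the true-score witness $j^{*}$ of Lemma~\ref{lem6} and to pay for the $\hat e$-versus-$e$ discrepancy twice through the uniform bound.
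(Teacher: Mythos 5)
Your proposal is correct and follows essentially the same route as the paper's proof: conditioning on the uniform approximation event of Lemma~\ref{thm:parametric}, invoking the true-score witness $j(i)$ from Lemma~\ref{lem6}, chaining the triangle inequality with the optimality of $N(i)$ to get $\vert e(X_i)-e(X_{N(i)})\vert\le 4C_0\epsilon+c_1\log n/n$, and then counting points in the resulting window via the Binomial/Chernoff argument from Lemma~\ref{lem5}. No substantive differences.
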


\begin{proof}
First, assume that the event  (\ref{eqn:uniform_p})
holds. Next, by Lemma \ref{lem6}, with high probability,  for all $i \in \{1,\ldots,n\}$  there exits  $j(i) \in \{1,\ldots,n\}$  such  that 
$Z_{j(i)} \neq Z_i$ and $\vert  e(X_i)    - e(X_{ j(i) }) \vert \leq  (c_1 \log n)/n $.  Under such event, 
\[
\begin{array}{lll}
\vert  e(X_i) - e(X_{N(i)} ) \vert   & \leq &  \vert  e(X_i) - \hat{e}(X_i) \vert +   \vert  \hat{e}(X_i) - \hat{e}(X_{N(i)}) \vert +   \vert \hat{e}(X_{N(i)}) -  e(X_{N(i)}) \vert
\\
& \leq&  \vert  e(X_i) - \hat{e}(X_i) \vert +   \vert  \hat{e}(X_i) - \hat{e}(X_{j(i)}) \vert +   \vert \hat{e}(X_{N(i)}) -  e(X_{N(i)})\vert\\
& \leq& 2C_0 \delta  + \vert  \hat{e}(X_i) - \hat{e}(X_{j(i)}) \vert\\
&\leq & 4C_0 \delta+\vert  e(X_i) - e(X_{j(i)}) \vert\\
&\leq & 4C_0 \delta +   \frac{c_1 \log n}{n},\\
\end{array}
\]
where the second inequality follows from the definition of $N(i)$,  the third and fourth  from 
Lemma \ref{thm:parametric}, and the last from the construction of $j(i)$. Therefore, 
\[
\xi_i \,\leq \, \left\vert \left\{   j\,:\,   \vert   e(X_i) - e(X_j) \vert \leq    4C_0 \delta   + \frac{c_1 \log n}{n} \right \}\right \vert,
\]
and the claim  follows  in the same way that we bounded the counts  $\{m_i\}$ in the proof of Lemma \ref{lem5}.
\end{proof}

\begin{lemma}
\label{lem10}
With the notation from before, 
\[
\sum_{i=1}^{n} \vert   f_l\{  e(X_{i}  ) \}  -  f_l\{e(X_{N(i)} )\} \vert   =  O_{\mathbb{P}}\left(      \max\{\log n, n\delta   \}\,\right),
\]
for $l \in \{0,1\}$.
\end{lemma}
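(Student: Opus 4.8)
The plan is to reduce the matched total variation to the ordinary total variation of $f_l$ along the true propensity score ordering, paying a multiplicative price equal to the maximal number of matched pairs whose span can straddle any single gap in that ordering. Throughout I would condition on the high-probability event of Lemma \ref{lem8}, on which $\max_i \xi_i \le C_2 \max\{\log n, n\epsilon\} =: \Delta$.

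First I would introduce the permutation $\sigma$ sorting the true scores, $e\{X_{\sigma(1)}\} \le \ldots \le e\{X_{\sigma(n)}\}$, and note that the scores are almost surely distinct since $e(X)$ is continuous by Assumption \ref{as5}. For each $i$ set $a_i = \min\{\sigma^{-1}(i), \sigma^{-1}(N(i))\}$ and $b_i = \max\{\sigma^{-1}(i), \sigma^{-1}(N(i))\}$. Since $\xi_i$ counts exactly the units whose score lies in the closed interval between $e(X_i)$ and $e(X_{N(i)})$, we have $b_i - a_i = \xi_i - 1 \le \Delta - 1$. A telescoping bound along the sorted order then gives, for each $i$,
\[
\bigl| f_l\{e(X_i)\} - f_l\{e(X_{N(i)})\} \bigr| \,\le\, \sum_{k=a_i}^{b_i - 1} \bigl| f_l\{e(X_{\sigma(k)})\} - f_l\{e(X_{\sigma(k+1)})\} \bigr|.
\]

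Summing over $i$ and exchanging the order of summation, I would write
\[
\sum_{i=1}^{n} \bigl| f_l\{e(X_i)\} - f_l\{e(X_{N(i)})\} \bigr| \,\le\, \sum_{k=1}^{n-1} w_k \bigl| f_l\{e(X_{\sigma(k)})\} - f_l\{e(X_{\sigma(k+1)})\} \bigr|,
\]
where $w_k = \bigl|\{ i : a_i \le k < b_i \}\bigr|$ is the number of matched pairs whose sorted span crosses the gap between positions $k$ and $k+1$. The crux of the argument is the uniform control of $w_k$. If $i$ crosses this gap then $a_i \le k$, $b_i \ge k+1$, and $b_i - a_i \le \Delta - 1$, which forces both $a_i$ and $b_i$, and in particular $\sigma^{-1}(i)$, to lie in the window $[k - \Delta + 2,\, k + \Delta - 1]$ of at most $2\Delta - 2$ integer positions. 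Because $\sigma^{-1}$ is a bijection, at most $2\Delta - 2$ indices can have $\sigma^{-1}(i)$ in this window, so $w_k \le 2\Delta - 2 = O(\Delta)$ uniformly in $k$.

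Finally I would factor out $\max_k w_k$ and invoke Assumption \ref{as3}: the remaining sum $\sum_{k=1}^{n-1} | f_l\{e(X_{\sigma(k)})\} - f_l\{e(X_{\sigma(k+1)})\} |$ is a total variation of $f_l$ evaluated at points of $[e_{\min}, e_{\max}]$ and is therefore bounded by $\mathrm{TV}(f_l) = O(1)$. Combining the two displays yields $\sum_{i=1}^n | f_l\{e(X_i)\} - f_l\{e(X_{N(i)})\} | = O_{\mathrm{pr}}(\Delta) = O_{\mathrm{pr}}(\max\{\log n, n\epsilon\})$, as claimed. The main obstacle, and the step deserving the most care, is the uniform bound on the crossing counts $w_k$; the remaining ingredients are the telescoping identity and the bounded-variation hypothesis.
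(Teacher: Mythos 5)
Your proof is correct and follows essentially the same route as the paper's: telescope each matched difference through the consecutive gaps of the true-score ordering, use Lemma \ref{lem8} to bound the span of each matched pair by $O(\max\{\log n, n\epsilon\})$, and conclude via the bounded variation of $f_l$ from Assumption \ref{as3}. Your explicit uniform bound on the gap-crossing counts $w_k$ makes rigorous a step the paper leaves implicit in its double-sum display, but the underlying argument is the same.
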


\begin{proof}
By Lemma \ref{lem8} and the triangle inequality, we have that, with probability  approaching one,
\[
\begin{array}{lll}
\displaystyle		  	\sum_{i=1}^{n} \vert   f_l\{ e(X_{i}  ) \}  -  f_l\{ e(X_{N(i)} )\} \vert    & \leq & \displaystyle	\sum_{i=1}^{n-1} \,\sum_{j = \max\{1, i -  C_2\max\{\log n ,  n \delta  \} \}  }^{   \min\{n, i + C_2\max\{\log n ,  n \delta  \}   \}  }     \vert f_l\{ e(X_{ \sigma(j) }  ) \}   - f_l\{ e(X_{ \sigma(j+1) }  ) \}   \vert \\
& \leq & \displaystyle    \left[  C_2\max\{\log n ,  n \delta  \}  \right] \left[ \sum_{j=1}^{n-1} \vert f_l\{ e(X_{ \sigma(j) }  ) \}   - f_l\{ e(X_{ \sigma(j+1) }  ) \}  \vert   \right]\\
\end{array}
\]
and the claim follows.

\end{proof}

\subsection{Putting the pieces together }

The claim in Theorem \ref{thm:tv} follows immediately from Lemmas \ref{lem7}, \ref{lem10} and   \ref{thm1}.

\section{Proof of   Theorem \ref{thm:prognostic_upper_bound}}

\subsection{Notation}
Throughout this section we define the function  $\hat{L}  \,:\,   \mathbb{R}^d  \rightarrow  \mathbb{R}$ as
\[
\hat{L}(\theta)   \,=\, \frac{1}{m}\sum_{i=1}^{m}\left(Y_i^{\prime }  -X_i^{\prime \top}  \theta \right)^2,\,\,\,\,\,\,\,\,\,\,\theta \in  \mathbb{R}^d,
\]
and set 
\[
\delta  =   \frac{d}{C_{\min}} \left\{  \frac{  \left(  \log^{1/2} m    +   d^{1/2}\|\theta^*\|\right) \log m  }{m}\right\}^{1/2}.
\]
Furthermore,  we  consider  the  orderings  $\sigma$,  $\tilde{\sigma}$, and $\hat{\sigma}$  satisfying
\begin{equation}
\label{eqn:sigma}
g\{X_{\sigma(1)}\}   <\,\ldots\,< g\{X_{\sigma(n)}\},
\end{equation}
\begin{equation}
\label{eqn:sigma_tilde}
\tilde{g}\{X_{\tilde{\sigma}(1)}\}   <\,\ldots\,< \tilde{g}\{X_{\tilde{\sigma}(n)}\},
\end{equation}
and 
\begin{equation}
\label{eqn:sigma_hat}
\hat{g}\{X_{\hat{\sigma}(1)}\}   <\,\ldots\,< \hat{g}\{X_{\hat{\sigma}(n)}\}.
\end{equation}

\subsection{Auxiliary  lemmas}

\begin{lemma}
\label{lem11}
Under Assumptions \ref{as7}-\ref{as10}, we have that for  some  $C_1>0$,
\[
\underset{j =1 ,\ldots,d}{\max}\,\left\vert \left\{\nabla \hat{L}(\theta^*)\right\}_j\right\vert   \,\leq \,    C_1\left\{ \frac{  (\log^{1/2} m + d^{1/2} \|\theta^*\|  )\log m  }{m} \right\}^{1/2},
\]
with probability approaching one.
\end{lemma}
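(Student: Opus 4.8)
The plan is to recognize $\nabla\hat L(\theta^*)$ as a centered empirical average and to control it coordinate-wise by Bernstein's inequality, exploiting that the least-squares residual splits into a mean-zero sub-Gaussian part and a deterministically bounded approximation error. First I would compute
\[
\{\nabla\hat L(\theta^*)\}_j = -\frac{2}{m}\sum_{i=1}^m \widetilde X_{i,j}\,r_i,\qquad r_i := \widetilde Y_i - \widetilde X_i^\top\theta^*,
\]
and note that each summand is mean zero: the normal equations defining $\theta^*$ in Assumption \ref{as7} give $E(\widetilde X\,r)=\nabla L(\theta^*)/(-2)=0$. To handle the tails I would decompose $r_i=\epsilon_i+\delta_i$, where $\epsilon_i=\widetilde Y_i-g(\widetilde X_i)$ is the noise of Assumption \ref{as8} and $\delta_i=g(\widetilde X_i)-\widetilde X_i^\top\theta^*$ is the best-linear-surrogate error. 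Both sub-sums are themselves centered: $E(\widetilde X_j\epsilon)=0$ because $E(\epsilon\mid\widetilde X)=0$ (here $\epsilon=\widetilde Y-E(\widetilde Y\mid g(\widetilde X))$ and the tower property gives $E(\widetilde Y\mid g(\widetilde X))=g(\widetilde X)$, as $g$ is the conditional mean $E(\widetilde Y\mid\widetilde X)$), and $E(\widetilde X_j\delta)=0$ since $E(\widetilde X\widetilde Y)=E(\widetilde X\widetilde X^\top)\theta^*$.

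Treat the two pieces separately. For the noise piece, each $\widetilde X_{i,j}\epsilon_i$ is a bounded variable (Assumption \ref{as10} gives $|\widetilde X_{i,j}|\le B$ for a constant $B$) times a sub-Gaussian$(v)$ variable, hence sub-Gaussian with parameter $O(v)$ and variance $O(1)$; a sub-Gaussian tail bound with a union bound over $j=1,\dots,d$ (using $\log d=O(\log m)$) yields $\max_j|\tfrac1m\sum_i\widetilde X_{i,j}\epsilon_i|=O_{\mathrm{pr}}(\sqrt{\log m/m})$. For the approximation piece I would use that $\delta_i$ is \emph{deterministically} bounded, $|\delta_i|\le\|g\|_\infty+\|\widetilde X_i\|\,\|\theta^*\|\le\|g\|_\infty+\sqrt d\,B\|\theta^*\|=O(1+d^{1/2}\|\theta^*\|)$, while its second moment stays $O(1)$ because $E(\delta^2)\le E(g^2)$ is finite by the boundedness of $f_0$ in Assumption \ref{as9}. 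Bernstein's inequality for the bounded, mean-zero, variance-$O(1)$ summands $\widetilde X_{i,j}\delta_i$, with envelope $M=O(1+d^{1/2}\|\theta^*\|)$ and a union bound over $j$, then gives
\[
\max_j\Bigl|\tfrac1m\textstyle\sum_i\widetilde X_{i,j}\delta_i\Bigr|=O_{\mathrm{pr}}\!\left(\sqrt{\tfrac{\log m}{m}}+\tfrac{(1+d^{1/2}\|\theta^*\|)\log m}{m}\right).
\]

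Finally I would combine and simplify. Writing $R:=\log^{1/2}m+d^{1/2}\|\theta^*\|$, the total bound is $O_{\mathrm{pr}}(\sqrt{\log m/m}+R\log m/m)$; since $R\ge1$ and the scaling hypothesis $d(\log^{1/2}n+d^{1/2}\|\theta^*\|_1)\le c_1 n/\log n$ with $m\asymp n$ forces $R\log m\le Cm$, both terms are dominated by $\sqrt{R\log m/m}$, which is exactly the claimed rate. The main obstacle is the variance bookkeeping: a naive Hoeffding bound treating $r_i$ as bounded by $R$ would only deliver $R\sqrt{\log m/m}$, which is too large, so the sharper rate genuinely relies on showing the per-coordinate variance is $O(1)$ (driven by the boundedness of $g$ via Assumption \ref{as9}) and on invoking Bernstein rather than Hoeffding, with the heavy-tailed noise and the large-but-low-variance approximation error handled through the $\epsilon/\delta$ split rather than by truncating $r_i$.
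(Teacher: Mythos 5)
Your proof is correct and reaches the stated bound, but by a genuinely different route from the paper's. The paper's own proof conditions on the event $\|\tilde\epsilon\|_\infty\le 3\sigma\log^{1/2}m$, bounds the entire summand $\widetilde X_{i,j}(\widetilde X_i^\top\theta^*-\widetilde Y_i)$ by an envelope of order $M:=\|f_0\|_\infty+\log^{1/2}m+d^{1/2}\|\theta^*\|$, and applies Hoeffding's inequality once. You instead split the residual as $r_i=\epsilon_i+\delta_i$, dispatch the noise part with a sub-Gaussian tail bound, and control the approximation part with Bernstein's inequality, using that $\delta_i$ has envelope $O(1+d^{1/2}\|\theta^*\|)$ but second moment $O(1)$ (via the $L^2$-projection property of $\theta^*$, which gives $E(\delta^2)\le E\{g(\widetilde X)^2\}<\infty$). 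This extra bookkeeping genuinely matters: a Hoeffding bound with envelope $M$ delivers only $M\sqrt{\log m/m}$, whereas the lemma asserts $\sqrt{M\log m/m}$, smaller by a factor of $\sqrt{M}$; in the paper's final display the envelope enters to the first power where a squared envelope would appear in a standard Hoeffding application, so as printed that step does not yield the claimed rate, and your variance-plus-Bernstein argument is the natural repair. Two minor caveats on your write-up: (i) the absorption of the term $M\log m/m$ into $\sqrt{M\log m/m}$ invokes the scaling hypothesis of Theorem \ref{thm:prognostic_upper_bound}, which is not formally listed among the hypotheses of Lemma \ref{lem11} --- harmless, since the lemma is only used under that scaling, but worth stating explicitly; and (ii) the centering $E(\widetilde Y\mid g(\widetilde X))=g(\widetilde X)$ used for the noise term relies on $g(X)=E(Y\mid Z=0,X)$, which holds in the completely randomized setting of Section \ref{sec:prognostic} where this estimator lives.
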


\begin{proof}
First notice  that   by the  optmiality  of  $\theta^*$  we have that
\[
\nabla L(\theta^*)   \,=\,   E\left\{  X( X^{\top} \theta^*-Y  )|Z=0 \right\}.
\]
Hence,
\[
0    \,=\,  E\left\{   \nabla \hat{L}(\theta^*) \right\}\,=\, E\left\{ \frac{1}{m}\sum_{i=1}^{m}   X_i^{\prime}( X_i^{\prime \top} \theta^*- Y_i^{\prime }  )     \right\}.
\]
Furthermore,  defining  $\tilde{ \epsilon}_i =  Y_i^{\prime} - f_0(X_i^{\prime} ) $, we have 
\begin{equation}
\label{eqn:t1}
\begin{array}{lll}
\vert  X_{i,j}^{\prime}( X_i^{\prime \top} \theta^*-Y_i^{\prime }  )  \vert  & \leq  &  \vert  X_{i,j}^{\prime}\vert  \left(  \|f_0\|_{\infty}  + \|\tilde{ \epsilon}\|_{\infty}   +    \|X_i^{\prime}\|_{\infty}  \|\theta^*\|_{1}\right)\\
& \leq&  \|X_i^{\prime}\|_{\infty}   \left(   \|f_0\|_{\infty}  + \|\tilde{ \epsilon}\|_{\infty}   +    \|X_i^{\prime}\|_{\infty} d^{1/2} \|\theta^*\|\right)\\
\end{array}
\end{equation}
where  the first inequality follows  by H\"{o}lder's inequality, and the  second by the relation between  $\ell_1$ and  $\ell_2$ norms.

However,    by  Assumption \ref{as8},  it follows that 
\[
\Omega =\{ \|\tilde{ \epsilon} \|_{\infty}   \leq   3\sigma \log^{1/2} m\},
\]
holds   with probability at least $1- \frac{1}{m^2}$. Hence, 
by Assumption \ref{as10}, (\ref{eqn:t1}), and Hoeffding's  inequality
\[
\begin{array}{l}
\left \vert  \frac{1}{m}\sum_{i=1}^{m}  X_{i,j}^{\prime}(X_i^{\prime \top} \theta^*-Y_i^{\prime }  )  \right \vert \\
\displaystyle      		 \,\leq \,  \left\{ \frac{   4\max\{ \|a\|_{\infty},\|b\|_{\infty} \}  \left(   \|f_0\|_{\infty}  +      3\sigma \log^{1/2} m    +   \max\{ \|a\|_{\infty},\|b\|_{\infty} \} d^{1/2} \|\theta^*\|\right)\log\left(   m^2  \right) }{2m}  \right\}^{1/2} ,
\end{array}
\]
with proability $1- \frac{4}{m^2}$.

\end{proof}

\begin{lemma}
\label{lem13}

For any $i \in \{1,\ldots, n\}$ let 
\[
\xi_i\,=\, \left\vert  \{  j  \in \{1,\ldots,n\} \,:\,\,\,\,\,\text{and }\,\,\,\,\,  \tilde{g}(X_i) \leq \tilde{g}(X_j) \leq \tilde{g}(X_{N(i)}) \,\,\,\text{or}\,\,\,  \tilde{g}(X_{N(i)}) \leq \tilde{g}(X_j) \leq \tilde{g}(X_{i})    \} \right\vert.
\]
Then, for some constant $C_2 >0$,
\[
\underset{i =1 ,\ldots,n}{\max}\,\xi_i  \,\leq \, C_2 \max\{  \log n,  n \delta  \},
\]
with probability approaching one.
\end{lemma}

\begin{proof}
The claim follows as  the proof of Lemma \ref{lem8}, exploiting  Lemma \ref{lem12}.
\end{proof}

\begin{lemma}
\label{lem12}
Under Assumptions  \ref{as7}--\ref{as10},   we have  that for some constant $\tilde{C}>0$,
\[ 
\begin{array}{lll}
\displaystyle  \underset{ i = 1,\ldots, n}{\max}\,\left\vert  \hat{g}(X_i) -  \tilde{g}(X_i)\right\vert & \leq  & \displaystyle    \frac{  \tilde{C}  d      }{C_{\min}} \left\{   \frac{  \left(   \log^{1/2} m   +    d^{1/2} \|\theta^*\|\right) \log m  }{m}\right\}^{1/2},
\end{array}
\]
with probability approaching one.
\end{lemma}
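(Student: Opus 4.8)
The plan is to reduce the claim to a deviation bound on $\hat\theta - \theta^*$ together with a uniform bound on the covariate norms. First I would write, for each $i$,
\[
|\hat{g}(X_i) - \tilde{g}(X_i)| = |X_i^\top(\hat\theta - \theta^*)| \le \|X_i\|\,\|\hat\theta - \theta^*\|,
\]
so that
\[
\underset{i=1,\ldots,n}{\max}\,|\hat{g}(X_i)-\tilde{g}(X_i)| \le \Big(\underset{i=1,\ldots,n}{\max}\,\|X_i\|\Big)\,\|\hat\theta - \theta^*\|.
\]
Since by Assumption \ref{as10} the covariates lie in the fixed hyperrectangle $[a,b]\subset R^d$, each coordinate is bounded by an absolute constant, and hence $\max_i\|X_i\| = O(d^{1/2})$ deterministically. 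Thus it suffices to establish $\|\hat\theta - \theta^*\| = O(\epsilon)$ with probability approaching one, where $\epsilon$ is as defined at the start of this section.

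To control $\|\hat\theta - \theta^*\|$ I would exploit that $\hat{L}$ is quadratic, so its Hessian $\widehat{\Sigma} := \tfrac1m\sum_{i=1}^m \widetilde{X}_i\widetilde{X}_i^\top$ is constant and the normal equations give the closed form
\[
\hat\theta - \theta^* = -\tfrac12\,\widehat{\Sigma}^{-1}\nabla\hat{L}(\theta^*),
\]
whenever $\widehat{\Sigma}$ is invertible; alternatively one can run the convexity argument of Lemma \ref{lem4} verbatim, using the constant Hessian. Consequently
\[
\|\hat\theta - \theta^*\| \le \frac{\|\nabla\hat{L}(\theta^*)\|}{2\,\Lambda_{\min}(\widehat{\Sigma})} \le \frac{d^{1/2}\,\max_{j}|\{\nabla\hat{L}(\theta^*)\}_j|}{2\,\Lambda_{\min}(\widehat{\Sigma})}.
\]
For the numerator I would invoke Lemma \ref{lem11}, which gives $\max_j|\{\nabla\hat{L}(\theta^*)\}_j| \le C_1\{(\log^{1/2}m + d^{1/2}\|\theta^*\|)\log m/m\}^{1/2}$ with probability approaching one.

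The main obstacle is the denominator, i.e. a high-probability lower bound $\Lambda_{\min}(\widehat{\Sigma}) \ge C_{\min}/2$. Because the experiment is completely randomized, $Z \ind X$ and hence $E(\widetilde{X}\widetilde{X}^\top) = E(XX^\top)$, whose minimum eigenvalue exceeds $C_{\min}$ by (\ref{eqn:eigenvalues}). I would then show $\|\widehat{\Sigma} - E(XX^\top)\|_2 = O_{\mathrm{pr}}\{(d\log m/m)^{1/2}\}$ by a Hoeffding bound on $v^\top(\widehat{\Sigma} - E(XX^\top))v$ for fixed unit $v$ (the summands $(v^\top\widetilde{X}_i)^2$ being bounded by Assumption \ref{as10}), followed by a $\tfrac14$-net over the sphere with $9^d$ points and a union bound, exactly as in Lemma \ref{lem3}. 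Under the sample-size hypotheses of Theorem \ref{thm:prognostic_upper_bound}, which force $d\log m/m \to 0$ since $m \asymp n$, this deviation is eventually smaller than $C_{\min}/2$, so $\Lambda_{\min}(\widehat{\Sigma}) \ge \Lambda_{\min}(E(XX^\top)) - \|\widehat{\Sigma} - E(XX^\top)\|_2 \ge C_{\min}/2$ with probability approaching one. Combining the three displays gives $\|\hat\theta - \theta^*\| \le (C_1/C_{\min})\,d^{1/2}\{(\log^{1/2}m+d^{1/2}\|\theta^*\|)\log m/m\}^{1/2} = O(\epsilon)$, and multiplying by the $O(d^{1/2})$ covariate-norm bound yields precisely the stated rate $\tfrac{\tilde{C} d}{C_{\min}}\{(\log^{1/2}m+d^{1/2}\|\theta^*\|)\log m/m\}^{1/2}$.
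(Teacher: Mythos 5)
Your proposal is correct and follows essentially the same route as the paper's proof: reduce the claim to $\|\hat\theta-\theta^*\|$ via $|X_i^{\top}(\hat\theta-\theta^*)|\le\|X_i\|\,\|\hat\theta-\theta^*\|$ and compact support, control the gradient at $\theta^*$ by Lemma \ref{lem11}, and lower-bound $\Lambda_{\min}$ of the empirical Gram matrix by the concentration argument of Lemma \ref{lem3} together with (\ref{eqn:eigenvalues}). The only difference is cosmetic --- you solve the normal equations in closed form where the paper runs the convexity localization argument ($G(u)>0$ on the sphere of radius $B$) of Lemma \ref{lem4}, which for a quadratic objective yields the identical bound from the identical two ingredients.
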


\begin{proof}
We  define the  function \[
G(u) =  \hat{L}(\theta^* + u)    - \hat{L}(\theta^*),
\]
and observe that  $G(0) =0$, and  $G(\hat{u}) <  0$ where  $\hat{u} =\hat{\theta} -\theta^*$.  Let 
\[
B =  \frac{c_1}{C_{\min}} \left\{   \frac{  d\left(   \log^{1/2} m    +   d^{1/2} \|\theta^*\|\right) \log m  }{m} \right\}^{1/2},
\]
for  some  $c_1>0$, and take  $u \in   \mathbb{R}^d$ such that  $\|u\|=B$. Then,   with probability  approaching one,   by Lemma \ref{lem11}  and  the proof  of Lemma \ref{lem3}, we have that  
\[
\begin{array}{lll}
G(u)  & = & \displaystyle  \frac{1}{m} \sum_{i=1}^{m} \left\{ X_i^{\prime \top}( \theta^*+u ) -Y_i^{\prime }  \right\}^2   \,- \, \frac{1}{m} \sum_{i=1}^{m} \left( X_i^{\prime \top} \theta^*-Y_i^{\prime }  \right)^2\\
& = &\displaystyle  u^{\top }\left(  \frac{1}{m} \sum_{i=1}^{m} X_i^{\prime} X_i^{\prime \top} \right) u     +   2  u^{\top } \nabla \hat{L}(\theta^*)\\
&\geq & \displaystyle  u^{\top }\left(  \frac{1}{m} \sum_{i=1}^{m} X_i^{\prime} X_i^{\prime \top} \right) u     -   2  \|u \|_1  \|\nabla \hat{L}(\theta^*)\|_{\infty}\\
& \geq & \frac{C_{\min} \|u\|^2  }{2} -  2 c_2\|u\| d^{1/2}   \left\{ \frac{  (\log^{1/2} m + d^{1/2} \|\theta^*\|  )\log m  }{m} \right\}^{1/2}\\ 
&> &
0  	\end{array}
\]
for some constant  $c_2>0$, and where the last inequality follows from the choice of $B$ with a large enough $c_1$.

Therefore,
\begin{equation}
\label{eqn:parameter}
\|\hat{\theta} - \theta^*\| \,\leq\,  \frac{4 c_2 }{C_{\min}} \left\{ \frac{  d\left(  \log^{1/2} m    +   d^{1/2} \|\theta^*\|\right) \log m  }{m} \right\}^{1/2},		
\end{equation}
with probability  approaching one.
Furthermore,
\begin{equation}
\label{eqn:calculation2}
\begin{array}{lll}
\underset{ i = 1,\ldots, n}{\max}\,\left\vert  \hat{g}(X_i) -  \tilde{g}(X_i)\right\vert & = &  \underset{ i = 1,\ldots, n}{\max}\,\left\vert X_i^{\top}\hat{\theta} - X_i^{\top}  \theta^* \right\vert \\ 
& \leq &\|     \hat{\theta}  -   \theta^*\|\, \underset{ i = 1,\ldots, n,\,\,}{\max}\,  \|X_{i}\|  \\
&\leq&d^{1/2}\, \|     \hat{\theta}  -   \theta^*\|\,   \underset{ i = 1,\ldots, n,\,\, j = 1,\ldots, d }{\max}\,  \vert X_{i,j}\vert,  \\
\end{array}
\end{equation}	
and the conclusion follows from (\ref{eqn:parameter}) and the fact the $X_i's$  have compact support.
\end{proof}

\subsection{Proof of Theorem  \ref{thm:prognostic_upper_bound} }

The theorem  follows as a Theorem \ref{thm1},  proceeding  as in the proof  of Theorem \ref{thm:tv}  by  using the lemmas below.

\begin{lemma}
\label{lem14}
Let  $\tilde{\sigma}$  and $\hat{\sigma}$ as defined in (\ref{eqn:sigma})--(\ref{eqn:sigma_hat}).	There exists  a positive  constant $C_2$ such that the event
\[
\underset{i = 1,\ldots, n}{\max }\,  \vert  i - \tilde{\sigma}^{-1} \{\hat{\sigma}(i)\}  \vert   \leq  C_2 \max\{  \log n,  n \delta \},     
\]	
holds with probability  approaching one.

\end{lemma}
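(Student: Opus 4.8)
The plan is to mirror the argument of Lemma \ref{lem5}, with the estimated propensity score replaced by the estimated prognostic score and with the regularity condition of Assumption \ref{as5} replaced by the boundedness of the density of $\tilde{g}(X)$ granted in Assumption \ref{as7}. Concretely, I would first condition on the high-probability event supplied by Lemma \ref{lem12}, namely $\max_{i}\lvert \hat{g}(X_i)-\tilde{g}(X_i)\rvert \le C_0\epsilon$ for a constant $C_0>0$, with $\epsilon$ as defined at the start of this section. This is the only point at which the first-stage estimation error enters, and it is exactly the factor $\epsilon$ appearing in the statement; it plays the role that Lemma \ref{thm:parametric} played in the proof of Lemma \ref{lem5}.

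Next I would set $\tilde{\epsilon}=\max\{\epsilon,\,C_2 n^{-1}\log n\}$ and, for each $i$, let $m_i$ count the indices $k\neq i$ with $\tilde{g}(X_k)\in(\tilde{g}(X_i)-2C_0\tilde{\epsilon},\,\tilde{g}(X_i)+2C_0\tilde{\epsilon})$. Since the $X_i$ are i.i.d.\ and the density of $\tilde{g}(X)$ is bounded above by Assumption \ref{as7}, each $m_i$ is stochastically dominated by a Binomial variable with success probability of order $\tilde{\epsilon}$, so that $E(m_i)=O(n\tilde{\epsilon})$. A Chernoff bound together with a union bound over the $n$ indices then yields $\max_i m_i = O(n\tilde{\epsilon}) = O(\max\{\log n,\,n\epsilon\})$ with probability tending to one; the floor $C_2 n^{-1}\log n$ built into $\tilde{\epsilon}$ is precisely what forces the Chernoff exponent to dominate the union-bound factor $n$.

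Finally I would convert this band count into a rank-displacement bound. On the event of Lemma \ref{lem12}, if $\tilde{g}(X_{i'})<\tilde{g}(X_i)-2C_0\tilde{\epsilon}$ then $\hat{g}(X_{i'})\le \tilde{g}(X_{i'})+C_0\epsilon < \tilde{g}(X_i)-C_0\epsilon \le \hat{g}(X_i)$, and symmetrically on the other side; hence two units can reverse their relative $\tilde{g}$-ordering under $\hat{g}$ only if they lie within $2C_0\tilde{\epsilon}$ of each other, so the number of such reversals for index $i$ is at most $m_i$. This gives $\max_i\lvert i-\tilde{\sigma}^{-1}\{\hat{\sigma}(i)\}\rvert \le \max_i m_i$, and the claim follows upon combining the two displays above.

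The main obstacle is the bookkeeping in this last step: one must argue that the rank of $\hat{\sigma}(i)$ under $\tilde{\sigma}$ can differ from $i$ by at most the number of points inside the $2C_0\tilde{\epsilon}$-band, i.e.\ that no point strictly outside the band can be permuted across $i$. This is immediate from the two order-preservation implications once the uniform bound $\max_i\lvert \hat{g}(X_i)-\tilde{g}(X_i)\rvert\le C_0\epsilon$ and the band width $2C_0\tilde{\epsilon}\ge 2C_0\epsilon$ are in force, so the comparison is structurally identical to that of Lemma \ref{lem5}. The only genuinely new ingredients are the appeal to Lemma \ref{lem12} in place of Lemma \ref{thm:parametric} and the density bound of Assumption \ref{as7} in place of the regularity condition of Assumption \ref{as5}; everything else is routine.
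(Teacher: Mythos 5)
Your proposal is correct and matches the paper's approach exactly: the paper's own proof of this lemma is a one-line deferral to the argument of Lemma \ref{lem5} with Lemma \ref{lem12} supplying the uniform bound $\max_i\lvert\hat{g}(X_i)-\tilde{g}(X_i)\rvert\le C_0\epsilon$ in place of Lemma \ref{thm:parametric}, and your band-counting, Chernoff/union-bound, and order-preservation steps are precisely the content of that argument transplanted to the prognostic-score setting with the density bound of Assumption \ref{as7}.
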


\begin{proof}
The claim follows as  the proof of Lemma \ref{lem5}, exploiting  Lemma \ref{lem12}.
\end{proof}

\begin{lemma}
\label{lem15}
With the notation from  (\ref{eqn:sigma_hat}),
\[
\displaystyle	  \sum_{i=1}^{n-1} \vert f_1\{g(X_{\hat{\sigma}(i)}  )\}   - f_1\{g(X_{\hat{\sigma}(i+1)}  ) \}    \vert  \,=\, O_{\mathbb{P}}\left\{   \max\{\log n, n\delta\}   \left(  \overline{\kappa}_n+1  \right)  \, \right\},
\]
and
\[
\displaystyle		  \sum_{i=1}^{n-1} \vert f_0\{g(X_{\hat{\sigma}(i)}  )  \}   - f_0\{ g(X_{\hat{\sigma}(i+1)}  )  \}   \vert  \,=\,  O_{\mathbb{P}}\left\{  \max\{\log n, n\delta\}\, \left(  \overline{\kappa}_n+1  \right)    \right\}.
\]
\end{lemma}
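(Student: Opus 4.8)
The plan is to bound the total variation of $f_1$ (and, identically, $f_0$) evaluated at the \emph{true} prognostic scores $g(X_i)$ but read off in the order $\hat{\sigma}$ that sorts the \emph{estimated surrogate} scores $\hat{g}(X_i)$, by chaining through the two intermediate orderings $\tilde{\sigma}$ and $\sigma$. The conceptual point is that $f_1$ has bounded variation only when its arguments are arranged in the order $\sigma$ that sorts the $g(X_i)$ (Assumption \ref{as9}), whereas the order $\hat{\sigma}$ we actually use estimates the surrogate order $\tilde{\sigma}$ induced by $\tilde{g}(X)=X^{\top}\theta^*$, which in general differs from $\sigma$. I would therefore pass first from $\hat{\sigma}$ to $\tilde{\sigma}$ (absorbing the estimation error) and then from $\tilde{\sigma}$ to $\sigma$ (absorbing the surrogate/true mismatch).

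For the first passage I would invoke Lemma \ref{lem14}, which guarantees that on an event of probability approaching one we have $\max_{i} \vert i - \tilde{\sigma}^{-1}\{\hat{\sigma}(i)\}\vert \le C_2\max\{\log n, n\epsilon\}$. On this event, indices that are consecutive in the $\hat{\sigma}$ ordering lie within $C_2\max\{\log n,n\epsilon\}$ positions of one another in the $\tilde{\sigma}$ ordering, so the telescoping argument already carried out in the proof of Lemma \ref{lem7} applies verbatim: each difference $\vert f_1\{g(X_{\hat{\sigma}(i)})\}-f_1\{g(X_{\hat{\sigma}(i+1)})\}\vert$ is dominated by the block of intervening $\tilde{\sigma}$-consecutive differences, and each such $\tilde{\sigma}$-consecutive difference is charged at most $2C_2\max\{\log n,n\epsilon\}$ times when we sum over $i$. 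This yields
\[
\sum_{i=1}^{n-1} \vert f_1\{g(X_{\hat{\sigma}(i)})\}-f_1\{g(X_{\hat{\sigma}(i+1)})\}\vert \,\le\, 2C_2\max\{\log n,n\epsilon\} \sum_{i=1}^{n-1} \vert f_1\{g(X_{\tilde{\sigma}(i)})\}-f_1\{g(X_{\tilde{\sigma}(i+1)})\}\vert.
\]

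For the second passage I would bound the remaining $\tilde{\sigma}$-ordered sum by $(\kappa_{\max}+1)$ times the intrinsic total variation $\sum_{i=1}^{n-1}\vert f_1\{g(X_{\sigma(i)})\}-f_1\{g(X_{\sigma(i+1)})\}\vert$. This is exactly the counting argument sketched in the commented-out portion of the proof of Lemma \ref{lem7}, now with $g$ in place of $e$: writing each $\tilde{\sigma}$-consecutive difference as a telescoping sum of $\sigma$-consecutive differences and reorganizing the double sum, the multiplicity with which the $\sigma$-consecutive jump at index $j$ is counted is $1+\kappa_{\sigma(j)}$, where $\kappa_{\sigma(j)}$ is precisely the count of indices whose relative order is inverted between $\sigma$ and $\tilde{\sigma}$ from Assumption \ref{as6.3}. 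Bounding $\kappa_{\sigma(j)}\le\kappa_{\max}=O_{\mathrm{pr}}(\overline{\kappa}_n)$ and using that the $\sigma$-ordered sum is the bounded variation of $f_1$, which is $O(1)$ by Assumption \ref{as9}, this passage contributes a factor $O_{\mathrm{pr}}(\overline{\kappa}_n+1)$. Multiplying the two passages gives the claimed rate $O_{\mathrm{pr}}\{\max\{\log n,n\epsilon\}(\overline{\kappa}_n+1)\}$, and the estimate for $f_0$ follows by the identical argument.

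The main obstacle is the second passage. Unlike the propensity-score setting of Lemma \ref{lem7}, where the ordering score and the argument of $f_0,f_1$ coincided, here these are genuinely distinct functions, so the $\tilde{\sigma}$-ordered total variation cannot be related to the intrinsic total variation of $f_1$ by a local perturbation bound alone. Assumption \ref{as6.3} is the device that controls this mismatch: it ensures that the $\ell_\infty$ Kendall--Tau discrepancy $\kappa_{\max}$ between the orderings induced by $g$ and $\tilde{g}$ is of the controlled order $\overline{\kappa}_n$, which is what converts a potentially global reordering into the bounded multiplicative factor $(\overline{\kappa}_n+1)$.
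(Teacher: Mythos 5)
Your proposal is correct and follows essentially the same route as the paper's own proof: a first passage from $\hat{\sigma}$ to $\tilde{\sigma}$ via Lemma \ref{lem14} and a telescoping/multiplicity count contributing the factor $\max\{\log n, n\epsilon\}$, followed by a second passage from $\tilde{\sigma}$ to $\sigma$ via the $\mathcal{K}_j$ counting argument of Assumption \ref{as6.3} contributing the factor $(\overline{\kappa}_n+1)$, with Assumption \ref{as9} closing the bound. You have correctly identified that the second passage is the step recorded (in commented-out form) in the proof of Lemma \ref{lem7} and reused here with $g$ in place of $e$; no gaps.
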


\begin{proof}
By Lemma \ref{lem14} and the triangle inequality, we have that, with probability  approaching one,
\[
\begin{array}{lll}
\displaystyle		  	\sum_{i=1}^{n} \vert   f_l\{ g(X_{i}  ) \}  -  f_l\{ g(X_{N(i)} )\} \vert    & \leq & \displaystyle	\sum_{i=1}^{n-1} \,\sum_{j = \max\{1, i -  C_2\max\{\log n ,  n \delta \} \}  }^{   \min\{n, i + C_2\max\{\log n ,  n \delta\}   \}  }     \vert f_l\{ g(X_{ \tilde{\sigma}(j) }  ) \}   - f_l\{ g(X_{ \tilde{\sigma}(j+1) }  ) \}   \vert \\
& \leq & \displaystyle    \left[  C_2\max\{\log n ,  n \delta \}  \right] \left[ \sum_{j=1}^{n-1} \vert f_l\{ g(X_{ \tilde{\sigma}(j) }  ) \}   - f_l\{ g(X_{ \tilde{\sigma}(j+1) }  ) \}  \vert   \right].\\
\end{array}
\]
Furthermore,
\begin{equation}
\label{eqn:tv_b2}
\begin{array}{l}
\displaystyle	  \sum_{i=1}^{n-1} \vert f_1\{ g(X_{\tilde{\sigma}(i)}  )\}   - f_1\{ g(X_{\tilde{\sigma}(i+1)}  ) \}  \vert   \
\leq    \displaystyle \sum_{i=1}^{n-1} \sum_{j= \min\{ \sigma^{-1}( \tilde{\sigma}(i) ),  \sigma^{-1}( \tilde{\sigma}(i+1) )\}  }^{ \max\{ \sigma^{-1}( \tilde{\sigma}(i) ),  \sigma^{-1}( \tilde{\sigma}(i+1) )\}   } \vert f_1\{ g(X_{\sigma(j)}  )\}   - f_1\{ g(X_{\sigma(j+1)}  ) \}  \vert\\
= \,  \displaystyle \sum_{j =1}^{n}   \left\vert \left\{ i\,:\,   j \in  [\min\{ \sigma^{-1}( \tilde{\sigma}(i) ),  \sigma^{-1}( \tilde{\sigma}(i+1) )\}, \max\{ \sigma^{-1}( \tilde{\sigma}(i) ),  \sigma^{-1}( \tilde{\sigma}(i+1) )\}]   \right\}  \right \vert \cdot\\
\,\,\,\,\, \vert f_1\{ g(X_{\sigma(j)}  )\}   - f_1\{ g(X_{\sigma(j+1)}  ) \}  \vert.\\
\end{array}
\end{equation}
However, if  $  j \in  [\min\{ \sigma^{-1}( \tilde{\sigma}(i) ),  \sigma^{-1}( \tilde{\sigma}(i+1) )\}, \max\{ \sigma^{-1}( \tilde{\sigma}(i) ),  \sigma^{-1}( \tilde{\sigma}(i+1) )\}]   $, then 
$\sigma^{-1}(  \sigma(j) )$ is between  $\min\{ \sigma^{-1}( \tilde{\sigma}(i) ), \sigma^{-1}( \tilde{\sigma}(i+1) )\}$ and  $\max\{ \sigma^{-1}( \tilde{\sigma}(i) ),  \sigma^{-1}( \tilde{\sigma}(i+1) )\}$,   $\tilde{ \sigma}^{-1}( \tilde{\sigma}(i)   ) =i$,  and  $\tilde{ \sigma}^{-1}( \tilde{\sigma}(i)   ) =i+1$. Therefore,   either  $\tilde{ \sigma}(i)  \in   \mathcal{K}_{   \sigma(j) }$ or   $\tilde{ \sigma}(i+1)  \in   \mathcal{K}_{   \sigma(j) }$.   Hence,
\[
\begin{array}{lll}
\displaystyle	  \sum_{i=1}^{n-1} \vert f_1\{ g(X_{\tilde{\sigma}(i)}  )\}   - f_1\{ g(X_{\tilde{\sigma}(i+1)}  ) \}  \vert   \,
&\leq&      \displaystyle \sum_{j =1}^{n}  \left(  1+   \kappa_{ \sigma(j) }   \right)\vert f_1\{ g(X_{\sigma(j)}  )\}   - f_1\{ g(X_{\sigma(j+1)}  ) \}  \vert \\
& \leq & \displaystyle \left(1+ \kappa\right) \sum_{j =1}^{n}  \vert f_1( g(X_{\sigma(j)}  ))   - f_1( g(X_{\sigma(j+1)}  ) ), \\
&=& O_{\mathbb{P}}(\overline{\kappa}_n +1)
\end{array}
\]
where the last inequality follows from Assumption \ref{as6.3}.  The proof  for  $f_0$   proceeds  with the same argument.

\end{proof}

\begin{lemma}
\label{lem16}
For any $i \in \{1,\ldots, n\}$ let 
\[
\xi_i\,=\, \left\vert  \{  j  \in \{1,\ldots,n\} \,:\,\,\,\,\,\text{and }\,\,\,\,\,  \tilde{g}(X_i) \leq \tilde{g}(X_j) \leq \tilde{g}(X_{N(i)}) \,\,\,\text{or}\,\,\,  \tilde{g}(X_{N(i)}) \leq \tilde{g}(X_j) \leq \tilde{g}(X_{i})    \} \right\vert.
\]
Then, for some constant $C_2 >0$,
\[
\underset{i =1 ,\ldots,n}{\max}\,\xi_i  \,\leq \, C_2 \max\{  \log n,  n \delta  \},
\]
with probability approaching one.

\end{lemma}

\begin{proof}
This  follows as the proof of  Lemma  \ref{lem8}.
\end{proof}

\begin{lemma}
\label{lem17}
With the notation from  (\ref{eqn:sigma_hat}), 
\[
\sum_{i=1}^{n} \vert   f_l\{  g(X_{i}  ) \}  -  f_l\{ g(X_{N(i)} )\} \vert   =  O_{\mathbb{P}}\left\{      \max\{\log n, n\delta\}\,  \left(\overline{\kappa}_n+1\right)    \right\},
\]
for $l \in \{0,1\}$.
\end{lemma}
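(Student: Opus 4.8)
The plan is to follow the template of the propensity-score Lemma~\ref{lem10}, substituting the prognostic-score ingredients and inserting one extra step to pass from the surrogate ordering to the true one. The point of departure is Lemma~\ref{lem16}, which ensures that, on an event of probability tending to one, every index $i$ has at most $\xi_i\leq C_2\max\{\log n,\,n\epsilon\}$ other indices $j$ with $\tilde g(X_j)$ lying between $\tilde g(X_i)$ and $\tilde g(X_{N(i)})$. Because the matching $N(\cdot)$ is performed with the estimated score $\hat g$, which Lemma~\ref{lem12} shows is uniformly close to $\tilde g$, the matched partner $X_{N(i)}$ sits within $\xi_i$ positions of $X_i$ in the surrogate ordering $\tilde\sigma$. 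Telescoping each increment $|f_l\{g(X_i)\}-f_l\{g(X_{N(i)})\}|$ across the consecutive $\tilde\sigma$-gaps it spans, and then exchanging the order of summation so that each gap is charged only by the $O(\max\{\log n,\,n\epsilon\})$ indices whose $\tilde\sigma$-intervals contain it, I would obtain
\[
\sum_{i=1}^{n}\bigl|f_l\{g(X_i)\}-f_l\{g(X_{N(i)})\}\bigr|\;\leq\;2C_2\max\{\log n,\,n\epsilon\}\sum_{j=1}^{n-1}\bigl|f_l\{g(X_{\tilde\sigma(j)})\}-f_l\{g(X_{\tilde\sigma(j+1)})\}\bigr|.
\]

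It then remains to bound the total variation of $f_l\circ g$ computed along the \emph{surrogate} ordering $\tilde\sigma$ rather than the true prognostic ordering $\sigma$. For this I would reuse verbatim the combinatorial estimate already established inside the proof of Lemma~\ref{lem15}, namely display~\eqref{eqn:tv_b2}: expanding each $\tilde\sigma$-gap into the $\sigma$-gaps it overlaps and counting, through the sets $\mathcal K_j$ of Assumption~\ref{as6.3}, how many $\tilde\sigma$-gaps can straddle a given $\sigma$-gap shows that each $\sigma$-ordered increment is charged at most $1+\kappa_{\sigma(j)}$ times. Invoking $\kappa_{\max}=O_{\mathrm{pr}}(\overline\kappa_n)$ together with the bounded-variation hypothesis of Assumption~\ref{as9}, which makes $\sum_{j=1}^{n-1}|f_l\{g(X_{\sigma(j)})\}-f_l\{g(X_{\sigma(j+1)})\}|=O(1)$ since $\sigma$ sorts the prognostic scores, yields $\sum_{j=1}^{n-1}|f_l\{g(X_{\tilde\sigma(j)})\}-f_l\{g(X_{\tilde\sigma(j+1)})\}|=O_{\mathrm{pr}}(\overline\kappa_n+1)$. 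Combining the two estimates produces the claimed rate $O_{\mathrm{pr}}\{\max\{\log n,\,n\epsilon\}(\overline\kappa_n+1)\}$, uniformly over $l\in\{0,1\}$ because both $f_0$ and $f_1$ satisfy the same assumptions.

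The only genuinely delicate point is the reconciliation of two different geometries. The matching error is controlled in the $\tilde g$-metric (Lemmas~\ref{lem16} and~\ref{lem12}), whereas the bounded-variation control of Assumption~\ref{as9} lives in the $g$-metric; the mismatch is exactly what the Kendall--Tau-type quantity $\overline\kappa_n$ quantifies, and the $\mathcal K_j$ bookkeeping imported from Lemma~\ref{lem15} is what converts one to the other. This conversion is the source of the extra factor $(\overline\kappa_n+1)$ that is absent from the propensity-score analogue in Lemma~\ref{lem10}, where matching and bounded variation are both expressed in the single score $e$.
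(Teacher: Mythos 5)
Your argument matches the paper's proof of Lemma~\ref{lem17} essentially step for step: telescope each matched-pair increment over the at most $\xi_i = O_{\mathrm{pr}}(\max\{\log n, n\epsilon\})$ intervening $\tilde\sigma$-gaps, swap the order of summation to charge each gap boundedly many times, and then convert the $\tilde\sigma$-ordered total variation to the $\sigma$-ordered one via the $\mathcal{K}_j$ counting of display~(\ref{eqn:tv_b2}) in Lemma~\ref{lem15}, picking up the $(\overline\kappa_n+1)$ factor and closing with Assumption~\ref{as9}. This is correct and is the same route the paper takes (the paper nominally cites Lemma~\ref{lem14} where you cite Lemma~\ref{lem16}, but your citation is if anything the more apt one, mirroring the role of Lemma~\ref{lem8} in the propensity-score analogue).
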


\begin{proof}
By Lemma \ref{lem14} and the triangle inequality, we have that, with probability  approaching one,
\[
\begin{array}{lll}
\displaystyle		  	\sum_{i=1}^{n} \vert   f_l\{  g(X_{i}  ) \}  -  f_l\{ g(X_{N(i)} )\} \vert    & \leq & \displaystyle	\sum_{i=1}^{n-1} \,\sum_{j = \max\{1, i -  C_2\max\{\log n ,  n\delta \} \}  }^{   \min\{n, i + C_2\max\{\log n ,  n \delta \}   \}  }     \vert f_l\{ g(X_{ \tilde{\sigma}(j) }  ) \}   - f_l\{ g(X_{ \tilde{\sigma}(j+1) }  ) \}   \vert \\
& \leq & \displaystyle    C_2\max\{\log n ,  n \delta\} \sum_{j=1}^{n-1} \vert f_l\{ g(X_{ \tilde{\sigma}(j) }  ) \}   - f_l\{g(X_{ \tilde{\sigma}(j+1) }  ) \}   \vert  \\
\end{array}
\]
and the claim follows as in Lemma  \ref{lem15}.
\end{proof}

\section{Details for comparisons  with   \cite{wager2018estimation}, and \cite{athey2019generalized}  }
\label{sec:ex1.2}

\textit{Scenario 1.}
This is  the first model considered in \cite{wager2018estimation} (see Equation 27 there). The data  satisfies
\[
\begin{array}{lll}
Y_i    &= &     (1-Z_i )Y_i(0) +     Z_i Y_i(1), \\
Z_i & \sim  & \mathrm{Binom}(1,    e(X_i)      ),\\
Y_i(0) & \sim &  \mathcal{N}(  2X_i^{\top} \boldsymbol{e}_1 -1, 1   ),\\
Y_i(1) & \sim &  \mathcal{N}(  2X_i^{\top} \boldsymbol{e}_1 -1, 1   ),\\
e(x) &   = &     \frac{1}{4}(  1+   \beta_{2,4}(x_1)  )   ,   \,\,\,\forall x  \in [0,1]^d, \\
X_i  &\overset{\mathrm{ind}}{\sim} &   U[0,1]^d,\,\,\,\forall i \{1,\ldots,n\},
\end{array}
\]
where $\boldsymbol{e}_1 =(1,0,\ldots,1)^{\top}$,  and  $\beta_{2,4}$ is $\beta$-density  with shape  parameters  $2$ and $4$.  Notice  that in this case  $\tau^*_i =0 $  for  all $i  \in \{1,\ldots,n\}$.

\textit{Scenario 2.} Our second scenario also comes  from   \cite{wager2018estimation} (see Equation 28 there).
\[
\begin{array}{lll}
Y_i &=&  m(X_i) + (Z_i- e(X_i) ) \tau(X_i) +  \epsilon_i,\\
Z_i& \sim  &\mathrm{Binom}(1,    e(X_i)      ),\\
m(x) &=& e(x)\tau(x),   \,\,\,\forall x  \in [0,1]^d, \\
e(x) &=& 0.5 ,   \,\,\,\forall x  \in [0,1]^d, \\
\tau(x)&=& \varsigma(x_1)\varsigma(x_2),   \,\,\,\forall x  \in [0,1]^d, \\
\varsigma(u)&=& 1+\frac{1}{1+\exp\{-20(u-\frac{1}{3})\}},\,\,\,\forall u  \in [0,1], \\
X_i  &\overset{\mathrm{ind}}{\sim} &   U[0,1]^d,\,\,\,\forall i \{1,\ldots,n\}.
\end{array}
\]
Hence, once again  $\tau^*_i =0 $  for  all $i  \in \{1,\ldots,n\}$.

\textit{Scenario 3.}  Here we  generate the measurements as 
\[
\begin{array}{lll}
Y_i    &= &     (1-Z_i) Y_i(0) + Z_i Y_i(1), \\
Z_i & \sim  & \mathrm{Binom}(1,    e(X_i) ),\\
Y_i(l) & \sim &  \mathcal{N}( f_l(e(X_i)) , 1 ), \quad \forall l \in \{0,1\}, \\
e(x) & = & \Phi( \beta^\top x ), \quad \forall x \in [0,1]^d, \\
f_0(s) & = & s^2, \quad \forall s \in [0,1], \\
f_1(s) & = & s^2 + \mathbf{1}_{\{s > 0.6\}}, \quad \forall s \in [0,1], \\
X_i & \overset{\mathrm{iid}}{\sim} & U[0,1]^d, \quad \forall i \in \{1, \ldots, n\},
\end{array}
\]
where \( \beta \in \mathbb{R}^p \) is defined by \( \beta_j = 1 \) for \( j \in \{1, \ldots, \lfloor p/2 \rfloor\} \), and \( \beta_j = -1 \) otherwise. Furthermore, \( \Phi \) denotes the cumulative distribution function of the standard normal distribution, Clearly, in this case \( \tau^{*}_i = \boldsymbol{1}_{ \{e(X_i) > 0.6\} } \) for all \( i \in \{1, \ldots, n\} \).

\textit{Scenario 4.} This is the model  described in (\ref{eqn:scenario1}).

\section{Details of comparisons  with  \cite{abadie2018endogenous} }

\subsection{National JTPA Study}
\label{sec:ex2.2}

\vspace{-.1in}

We  follow the  experimental setting in \cite{abadie2018endogenous}. Specifically,   let  the  JTPA measurements be   $\{(z_i^{ obs},x_i^{ obs} , y_i^{ obs})  \}_{i=1}^{2530}$, where  $y_i^{ obs}  \in  \mathbb{R}$ corresponds to the outcome (earnings), $x_i^{ obs} \in   \mathbb{R}^d$ to the covariates, and  $z_i^{ obs} \in \{0,1\}$ to the  treatment indicator. Then, to generate simulated outcomes,  construct a  parameter  \(\theta\)  as in \cite{abadie2018endogenous}:
\[
\theta \,= \,\underset{{ \beta \in   \mathbb{R}^d  }  }{\arg \min } \,\sum_{i=1}^{n_{obs}} \left\{\frac{\left(y_{i}^{obs}\right)^\lambda-1 }{\lambda}-x_{i}^{obs\top}\beta\right\}^2\mathbf{1}\left\{z_i^{obs}=0, y_i^{obs}>0\right\},
\] 
where
\(\lambda = 0.3667272\).

Furthermore,  the variance of  the errors is computed as: \[
\sigma^2\,=\,\frac{1}{n_{obs}-d-1}\sum_{i=1}^{n_{obs}} \left\{\frac{\left(y_{i}^{obs}\right)^\lambda-1 }{\lambda}-x_{i}^{obs\top}\theta\right\}^2\mathbf{1}\left\{z_i^{obs}=0,y_i^{obs}>0\right\}.
\]
A third parameter of interest is: \[
\gamma \,\overset{\Delta}{=} \,  \underset{{ \beta \in    \mathbb{R}^d  }  }{\arg \max }\, \sum_{i=1}^{n_{obs}}\mathbf{1}\left\{z_{i}^{obs}=0\right\}\log\left\{\left(\frac{e^{\beta^\top x_{i}^{obs}   }}{1+e^{\beta^\top x_{i}^{obs}       }}\right)^{\mathbf{1}\left\{y_{i}^{obs}>0\right\} }\left(1-\frac{e^{  \beta^\top x_{i}^{obs}        }}{1+e^{  \beta^\top x_{i}^{obs}        }}\right)^{1-\mathbf{1}\left\{y_{i}^{obs}>0\right\} }\right\},
\] the result of fitting a logistic regression model to predict whether
a unit in the experiment's control group will have positive earnings.

Next, simulation  data is generated as:
\[
\begin{array}{lll}
Y_i &= &\left(\text{max}\left\{0,1+\lambda\left(X_i^\top\theta+\epsilon_i\right)\right\}\times\mathbf{1}\left\{U_i>0\right\} \right)^{1/\lambda}, \\
U_i&\overset{i.i.d.}{\sim}\ & \text{Bernoulli}\left(\frac{e^{X_i^\top\gamma}}{1+e^{X_i^\top\gamma}}\right),\\
X_i &\overset{iid}{\sim}\ & \mathbb{P}\left(X=x_j^{obs};\{x_j^{obs}\}_{j=1}^{n_{obs}}\right)=\frac{1}{n_{obs}} ,\,\,\,\,j \in \{1,\ldots,n_{ obs }\},  \,\,\,\,\,\,\,\,\,\,\,\,\ (\text{Empirical Distribution})\\
\epsilon_i &\overset{i.i.d.}{\sim}\ & \mathcal{N}\left(0,\sigma^2\right).\\
\end{array}
\]
Here,  the treatment effect  is zero. Furthermore, the treatment indicators for the simulations are such that
\(\sum_i Z_i=1681\) for the training set, and
\(\mathbb{P}\left(Z_i=1\right)=\frac{1681}{2530}\) for the test set.

\subsection{Project STAR}
\label{sec:ex2.3}

With the observations $\{(z_i^{ obs},x_i^{ obs} , y_i^{ obs}  )\}_{i=1}^{3764}$ for this study, where $y_i^{ obs} \in   \mathbb{R}$ is the outcome variable,  $x_i^{ obs} \in   \mathbb{R}^d$ the vector of covariates, and $z_i^{ obs} \in \{0,1\}$ the treatment assignment,  we generate data following \cite{abadie2018endogenous}. Thus,   measurements arise from the model
\[
\begin{array}{lll}
Y_i &= & X_i^\top\beta_0+\epsilon_i \\
X_i &\overset{\mathrm{ind}}{\sim}\ & \mathbb{P}\left(X=x_j^{obs};\{x_j^{obs}\}_{j=1}^{n_{obs}}\right)=\frac{1}{n_{obs}} ,\,\,\,\,j \in \{1,\ldots,n_{ obs }\}  \,\,\,\,,\,\,\,\,\,\,\,\ (\text{Empirical Distribution})\\
\epsilon_i &\overset{\mathrm{ind}}{\sim}\ & \mathcal{N}\left(0,\sigma^2\right),\\
\end{array}
\]
where 
\[
\beta_0\, = \, \underset{ \beta  \in   \mathbb{R}^d }{\arg \min}\,  \sum_{i=1}^{n_{obs}} \left(y_{i}^{obs}-x_{i}^{obs\top}\beta\right)^2\mathbf{1}\left\{z_i^{obs}=0\right\},
\]
and the variance for the errors is computed as: 
\[
\sigma^2=\frac{1}{n_{obs}-p-1}\sum_{i=1}^{n_{obs}} \left(y_{i}^{obs}-x_{i}^{obs\top}\beta_0\right)^2\mathbf{1}\left\{z_i^{obs}=0\right\}.
\]
Finally, in this scenario, the treatment indicators for the simulations are such that
\(\sum_i W_i=  \ceil{n/2}  \).

\section{National Supported Work  data}
\label{sec:nsw}

\begin{figure}[ht!]
\begin{center}
\includegraphics[width=2.58in,height=2.3in]{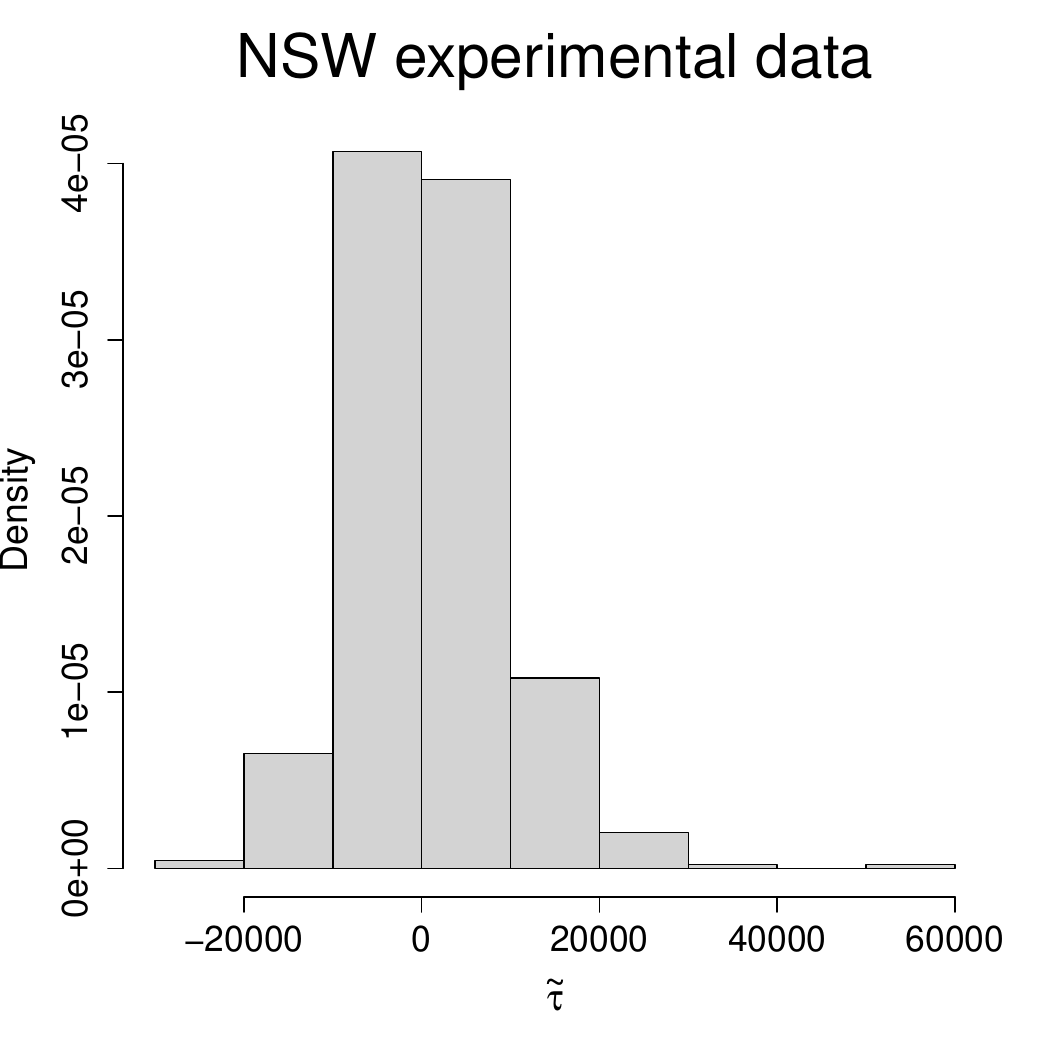} 
\includegraphics[width=2.58in,height=2.3in]{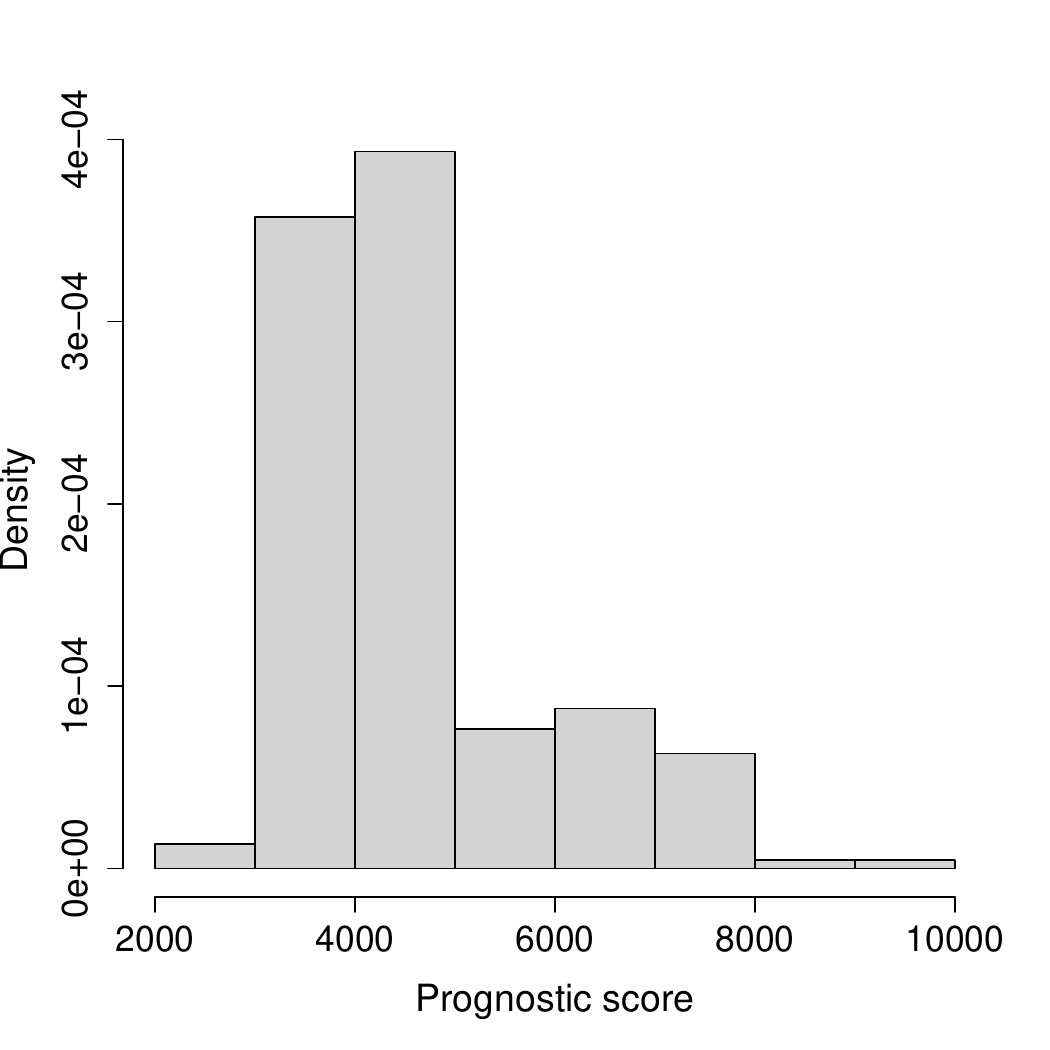} 
\caption{\label{fig6}    Data and estimates for the NSW example  described in Section \ref{sec:real1}. From left to right the two panels show  a histogram of $\tilde{\tau}^{(1)}$'s  (the scores obtained after matching)	and the  estimated prognostic scores. }
\end{center}
\end{figure}

\bibliographystyle{plainnat}
\bibliography{draft}

\end{document}